\documentclass[journal]{IEEEtran}
\usepackage{amsmath,amsfonts}
\usepackage{array}
\usepackage[caption=false,font=normalsize,labelfont=sf,textfont=sf]{subfig}
\usepackage{textcomp}
\usepackage{stfloats}
\usepackage{url}
\usepackage{verbatim}
\usepackage{graphicx}
\usepackage{cite}
\usepackage{float}
\usepackage{graphicx}
\usepackage{amssymb}
\usepackage{epsfig}
\usepackage{amsmath}
\usepackage{amsthm}
\usepackage{amsfonts}
\usepackage{setspace}
\usepackage{url}
\usepackage{psfrag}
\usepackage{graphicx}
\usepackage{dsfont}
\usepackage{units}
\usepackage{graphicx}
\usepackage{wrapfig}
\usepackage[linguistics]{forest}
\usepackage{algorithm}
\usepackage{algpseudocode}   % algorithmicx's pseudocode\usepackage{algorithm}       % float
\usepackage{dsfont}

\newcommand{\RomanNumeralCaps}[1]
{\MakeUppercase{\romannumeral #1}}

\usepackage{amsthm}

\newtheorem{thm}{Theorem}
\newtheorem{cor}{Corollary}
\newtheorem{lem}{Lemma}

\usepackage{enumitem}
\usepackage{authblk}

\begin{document}

\title{Misspecified Universal Learning}

\author{
    Shlomi~Vituri\\
    Electrical and Computer Engineering, Tel-Aviv University\\
    {shlomivituri@mail.tau.ac.il}
    \and\\
    Meir~Feder\\
    Electrical and Computer Engineering, Tel-Aviv University\\
    {meir@tau.ac.il}
}

\maketitle

	\begin{abstract}
    This paper addresses the problem of universal learning under model misspecification with log-loss. In this setting, the learner operates with a hypothesis class of models denoted by $\Theta$, while the true data-generating process belongs to a broader class $\Phi \supset \Theta$, and may lie outside the assumed hypothesis space. Classical approaches have characterized the minimax regret and identified optimal universal learners in both the well-specified stochastic and individual deterministic frameworks.
    The misspecified setting has received comparatively less attention, although several important results have emerged in recent years.
    Extending these foundations, we analyze the minimax regret in the misspecified setting and derive the corresponding optimal universal learner. We propose this formulation as a unified framework for universal learning, applicable to any form of uncertainty in the data-generating process, across both online and batch data arrival modes, as well as supervised and unsupervised learning tasks.
    \end{abstract}

\begin{IEEEkeywords}
Universal prediction, PAC, stochastic setting, deterministic setting, online learning, batch learning, redundancy-capacity, log-loss, minimax regret, Bayesian mixture distribution, smooth parametric models.
\end{IEEEkeywords}

    \section{Introduction}
    \IEEEPARstart{I}{n} recent years, the field of inductive learning has seen a surge of interest driven by both theoretical advancements and practical applications. Traditionally, the primary theoretical framework for inductive learning has been statistical learning theory.
In the standard formulation of the learning problem, a batch of training samples drawn from an unknown data source is used to predict a future ``test'' outcome as accurately as possible, given a predefined hypothesis class and a loss function. The learning task may be unsupervised, such as estimating the probability of the next outcome \( y_n \) based on a sequence of previous observations \( y^{n-1} \), see \cite{kamath2015learning}, or supervised, where the training data consists of feature-label pairs \( (x^{n-1}, y^{n-1}) \), and the goal is to predict the test label \( y_n \) given the training data and the test feature \( x_n \), see, e.g., \cite{bousquet2004introduction_to_STL}, \cite{Vapnik}.

One of the most widely adopted frameworks in learning theory is the Probably Approximately Correct (PAC) model~\cite{valiant1984theory}. In PAC learning, data samples are assumed to be independently drawn from a distribution where each sample consists of an input and its corresponding label. The likelihood of observing a specific input-label pair is determined by the probability of the input occurring, combined with the probability of the label conditioned on that input. Notably, this conditional relationship between inputs and labels may fall outside the scope of the hypothesis class. The goal is to develop algorithms that, with high probability over the training samples, achieve a loss that is close to the minimum possible within the hypothesis class.

A central concept in statistical learning theory is the Vapnik-Chervonenkis (VC) dimension, introduced in \cite{VC}. The VC dimension quantifies the capacity or expressiveness of a hypothesis class and plays a key role in understanding generalization.
The relationship between model complexity, data availability, and generalization performance is captured by the fundamental theorem of statistical learning. If the VC dimension is finite, then with high probability, the empirical loss over the training data uniformly approximates the expected loss over new data for all hypotheses in the class. This guarantees that the hypothesis minimizing empirical risk will also generalize well; hence, empirical risk minimization is a common strategy in batch learning.

Another important complexity measure in statistical learning theory is the Rademacher complexity \cite{bartlett2002rademacher}, which, unlike VC dimension, extends naturally to real valued hypothesis classes and general loss functions. Although both the VC dimension and Rademacher complexity provide strong theoretical foundations for learning simple hypothesis classes, they fail to explain the generalization behavior of modern models, most notably deep neural networks (DNNs); see \cite{zhang2016understanding}. This limitation has sparked a growing interest in the development of new theoretical tools to analyze and understand the performance of contemporary learning algorithms.

As an alternative to statistical learning theory, the field of universal prediction in information theory offers an elegant approach to learning problems. Broadly speaking, universal prediction addresses scenarios where one seeks to estimate the probability of a sequence $y^n$ under uncertainty about the true distribution from which it is generated.
Similarly to statistical learning, universal learning or prediction also assumes a hypotheses class $\Theta$ and a set of possible data-generating distributions, denoted by $\Phi$, where $\Theta \subseteq \Phi$.
In contrast to a non-universal learner who has access to the true data-generating model and can select the optimal hypothesis from $\Theta$, the universal learner operates without such knowledge and aims to perform competitively against this reference.

Two extreme cases of this setting have been extensively studied. The first is the \textit{well-specified stochastic setting}, where \( \Theta \equiv \Phi \); that is, the data are assumed to be generated by an unknown distribution within the reference class \( \Theta \). At the other extreme lies the \textit{deterministic individual setting}, in which no assumptions are made about the data, it is treated as an arbitrary individual sequence. Between these extremes is the \textit{misspecified stochastic setting}, where the data are generated from a distribution in a broader set \( \Phi \), while the hypothesis class \( \Theta \), from which the reference learner is selected, is only a subset of \( \Phi \). This corresponds to the common case in agnostic statistical learning, where the data generating process may follow any i.i.d. distribution, while the learner is restricted to a hypothesis class \( \Theta \).
The individual setting can be viewed as an extreme edge case of the misspecified setting, where the model class is taken to be \( \Phi = \mathcal{P} \), with \( \mathcal{P} \) denoting the set of all probability distributions. In this sense, the misspecified setting provides a general and unified framework for universal prediction theory.

The universal learning problem can vary not only in its set of data-generating distributions and hypothesis class, but also in the manner of data arrival, either \textit{online} or \textit{batch}, and in the type of data, which may be \textit{supervised} or \textit{unsupervised}. The online setting, which is classical in information theory, assumes that data arrives sequentially, one sample at a time. In this case, the universal predictor estimates the probability of the next outcome based on all previously observed samples. In contrast, the batch setting assumes that the data is provided all at once as a complete set of training samples, and the universal predictor aims to infer the probability distribution of future outcomes based on this batch. Regarding the data type, in the supervised case, the data features, denoted by \( x^n \), are accompanied by a corresponding label sequence \( y^n \). In the unsupervised case, such labels are not available, and the learning process relies solely on the structure of the input data.

The well-specified and individual settings in the unsupervised online case have been thoroughly investigated under various data-generating assumptions using the log-loss measure; see, e.g., \cite{universal98prediction}, \cite{davisson1980source}, \cite{ryabko1979coding}, \cite{shtar1987universal}. In addition, preliminary studies of online supervised universal learning have also been investigated; see \cite{online17learning}.
More recently, progress has been made in formulating and solving universal batch learning problems for both supervised and unsupervised cases, from an information-theoretic perspective; see, e.g., \cite{batch18learning}, \cite{individual18learning}, \cite{fogel2019individual}, \cite{deep20pnml}.
On the other hand, within the information-theoretic framework, the first mention of the misspecified setting appeared in \cite{robust1998minimax}. However, it remained largely unexplored until recent works began addressing both online and batch learning under log-loss in the misspecified setting; see \cite{robust21inference}, \cite{online21misspecified}, \cite{misspecified24batch}, \cite{misspecified24learning_arxiv}, \cite{constrained25learning}.

One of the main contributions of this paper is a comprehensive analysis of the minimax regret in the misspecified setting, addressing all data arrival modes online and batch and encompassing both supervised and unsupervised data types. We show that the minimax regret can be interpreted as a constrained version of the capacity between the data and the set of data-generating distributions \( \Phi \). Interestingly, the minimax regret is shown to be approximately equal to the regret in the well-specified stochastic case, assuming the data were generated by a distribution from the hypothesis class \( \Theta \) rather than the larger set \( \Phi \). This implies that the complexity of the learning problem is governed primarily by the hypothesis class \( \Theta \), rather than the broader family of distributions that may have generated the data.
To illustrate our results, we consider the case where observations are drawn from a multinomial distribution, while the hypothesis class \( \Theta \) is a strict subset of this family. We numerically evaluate both the regret and the capacity-achieving prior by developing an extension of the Arimoto-Blahut algorithm; see, e.g., \cite{blahut72},\cite{arimoto72},\cite{arimoto2004blahut_alg},\cite{arimoto2018blahut_alg},\cite{arimoto2013blahut_alg},\cite{arimoto2019blahut_alg},\cite{combined24batch_online},\cite{robust21inference},\cite{misspecified24batch},\cite{misspecified24learning_arxiv}.

Another key contribution of this paper is the introduction of a new framework, referred to as the \textit{constrained misspecified setting}.
In all settings previously studied, the universal learner is typically defined as a mixture distribution over the data-generating distribution set \( \Phi \). 
In addition, as mentioned previously, in the misspecified setting, under certain regularity conditions, this mixture tends to concentrate around the hypothesis class \( \Theta \), and the minimax regret becomes approximately equal to the regret in the well-specified stochastic case, assuming the data were generated by a distribution in \( \Theta \) rather than the larger set \( \Phi \); see \cite{online21misspecified}, \cite{misspecified24batch}, \cite{misspecified24learning_arxiv}.
Moreover, Barron et al.\ analyzed the asymptotic minimax regret in the individual setting, where \( \Theta \) is a set of smooth parametric models and the universal predictor is constrained to a mixture distribution with a prior over \( \theta \in \Theta \), rather than using the Normalized Maximum Likelihood (NML) predictor \cite{shtar1987universal}. They showed that in such models, the asymptotically optimal prior is a modified Jeffreys prior, see \cite{jeffreys46priors}, \cite{clarke1994jeffreys}, which coincides with the capacity-achieving prior in the well-specified stochastic setting. 
For specific model families, see Xie and Barron \cite{barron2000individual} for discrete memoryless multinomial distributions, Takeuchi and Barron \cite{barron2013nonExponential} for exponential families, Atteson and Barron \cite{markov99redundancy}, Kawabata and Barron \cite{barron2013markov} for Markov models, and Gotoh \cite{gotoh1998fsmx} for finite state machines (FSM).

Motivated by these insights, we propose a new approach for the misspecified setting, in which the universal learner is \textit{constrained} to a mixture distribution over the hypothesis class \( \Theta \), rather than over the full set \( \Phi \). We analyze the regret associated with this constrained universal learner, denoted by
\[
Q(y^n) = \int \pi(\theta) P_\theta(y^n) \, d\theta,
\]
which represents the expected difference, under the true data-generating distribution, between the log-loss of the constrained universal predictor and that of the best-fitting hypothesis in \( \Theta \).

Some specific results in this direction can be found in \cite{nir2022misspecified}, which analyzes convergence rates in the misspecified setting, and in \cite{markov99redundancy}, which studies constrained regret under a given \( \phi \)-mixing data-generating distributions \cite{processes69learning}, where the hypothesis class consists of Markov chain processes.
In this work, we further analyze this setting in terms of minimax regret and optimal universal learners, and demonstrate their performance across various examples of smooth parametric models, including memoryless multinomial distributions, binary Bernoulli distributions, Markov models, and exponential families of distributions. In addition, due to the analytical complexity of evaluating the optimal learner and its minimax regret, we develop an extension of the Arimoto-Blahut algorithm tailored to the constrained setting and present corresponding numerical results. Finally, we discuss the analogies and distinctions between the misspecified setting, the well-specified stochastic setting, and the individual deterministic setting.

In summary, the theory of universal prediction presents a compelling alternative to classical statistical learning theory. As depicted in Figure~\ref{figLearningTheoryTree}, this framework can be organized into a three-layer hierarchical structure.
The first layer characterizes the nature of the data-generating process, which may follow a stochastic, deterministic individual, misspecified, or the newly proposed constrained misspecified setting. These variants can be unified under a general framework grounded in the misspecified setting.
The second layer addresses the mode of data arrival, distinguishing between online learning, where data are received sequentially, which is classical in information theory, and batch learning, which is standard in statistical learning.
The third layer pertains to the learning paradigm, differentiating between supervised learning, where inputs are paired with corresponding labels, and unsupervised learning, where labels are absent.
This hierarchical organization enables a unified perspective on diverse learning scenarios and highlights the adaptability of universal prediction as a foundational theoretical framework.

The outline of the paper is as follows. In Section~\RomanNumeralCaps{2}, we provide a formal definition of the misspecified universal learning problem. Sections~\RomanNumeralCaps{3} and~\RomanNumeralCaps{4} then present our main results for the misspecified and constrained misspecified settings, respectively.  Finally, Section~\RomanNumeralCaps{5} concludes the paper.

%\begin{figure}[ht]
\begin{figure*}[!b]
\centering
\begin{forest}
  for tree={
    if n children=0{
      font=\itshape
    }{},
  }
  [{\textbf{Learning Theory}}
    [{\textbf{Statistical Learning Theory}}
        [\footnotesize{\textnormal{PAC Model}}]
    ]
    [{\textbf{Information Theory}}
        [\footnotesize{Universal Prediction}
            [\scriptsize{Stochastic}
                [\scriptsize{Online}
                    [\scriptsize{\textnormal{Un-/Supervised}}
                        %[\scriptsize{\textnormal{\textnormal{\cite{universal98prediction}} / \cite{online17learning}}}]
                    ]
                ]
                [\scriptsize{Batch}
                    [\scriptsize{\textnormal{Un-/Supervised}}
                        %[\scriptsize{\textnormal{\cite{batch18learning}}}]
                    ]
                ]
            ]
            [\scriptsize{Individual}
                [\scriptsize{Online}
                    [\scriptsize{\textnormal{Un-/Supervised}}
                        %[\scriptsize{\textnormal{\cite{universal98prediction} / \cite{online17learning}} }]
                    ]
                ]
                [\scriptsize{Batch}
                    [\scriptsize{\textnormal{Un-/Supervised}}
                        %[\scriptsize{\textnormal{? / \cite{batch18learning},\cite{individual18learning},\cite{fogel2019individual},\cite{deep20pnml}}}]
                    ]
                ]
            ]            
            [\scriptsize{Misspecified}
                [\scriptsize{Online}
                    [\scriptsize{\textnormal{Un-/Supervised}}
                        %[\scriptsize{\textnormal{\cite{online21misspecified},\cite{robust21inference} / \cite{misspecified24learning_arxiv}*}}]
                    ]
                ]
                [\scriptsize{Batch}
                    [\scriptsize{\textnormal{Un-/Supervised}}
                        %[\scriptsize{\textnormal{\cite{misspecified24batch}* / \cite{misspecified24learning_arxiv}*}}]
                    ]
                ]
            ]            
            [\scriptsize{Constrained}
                [\scriptsize{Online}
                    [\scriptsize{\textnormal{Un-/Supervised}}
                        %[\scriptsize{\textnormal{\cite{constrained25learning}* / ?}}]
                    ]
                ]
                [\scriptsize{Batch}
                    [\scriptsize{\textnormal{Un-/Supervised}}
                        %[\scriptsize{\textnormal{\cite{constrained25learning}* / ?}}]
                    ]
                ]
            ]
        ]
    ]
  ]
\end{forest}
\caption{The Learning Theory Tree.}
\label{figLearningTheoryTree}
%\footnote{*}{Part of this research}
\end{figure*}

    \section{Formal Problem Definition} 
    \label{sec:basic_definitions}
    This section presents the fundamental definitions, problem formulation, and notation employed throughout the paper. The exposition is organized into two parts, addressing the misspecified setting and the constrained misspecified setting, respectively, as detailed in the subsequent subsections.

\subsection{Misspecified Setting Definition}
We denote the hypothesis class of distributions by $\Theta$, where each $\theta \in \Theta$ induces a probability distribution $P_\theta$. With a slight abuse of notation, we occasionally write $P_\theta \in \Theta$ when the meaning is clear from the context. 
Similarly, we denote by $\Phi$ the set of true data-generating distributions, with $\Theta \subseteq \Phi$, where each $\phi \in \Phi$ induces a distribution $P_\phi$. Here, also, with a slight abuse of notation, we may write $P_\phi \in \Phi$ when no ambiguity arises

In the misspecified batch unsupervised setting, data is generated according to an unknown distribution \( P_\phi \in \Phi \). Given the observed batch \( y^{n-1} \), the learner predicts the next outcome via a universal predictor \( Q(y_n | y^{n-1}) \), without knowledge of \( P_\phi \). Performance is measured under the log-loss relative to a hypothesis distribution \( P_\theta \in \Theta \). The resulting regret associated with the triplet \( (P_\phi, P_\theta, Q) \) is defined as
\begin{align}
    \begin{aligned}
        R_{n}(\theta,\phi,Q) &= \mathbb{E}_{P_\phi}\left\{ \log{\frac{P_\theta(Y_{n}|Y^{n-1})}{Q(Y_n|Y^{n-1})}} \right\}.
    \end{aligned}
\end{align}

In the misspecified online unsupervised setting, the task is to sequentially predict the entire sequence \( y^n \) generated according to an unknown data-generating distribution \( P_\phi \). The learner employs a universal predictor \( Q(y^n) \), without knowledge of \( P_\phi \), and its performance is evaluated under the log-loss measure relative to a hypothesis distribution \( P_\theta \in \Theta \). The corresponding regret is defined as
\begin{align}
    \begin{aligned}
        R_{n}(\theta,\phi,Q) &= \mathbb{E}_{P_\phi}\left\{ \log{\frac{P_\theta(Y^{n})}{Q(Y^n)}} \right\}.
    \end{aligned}
\end{align}

In the supervised setting, the regret is defined analogously. Given the data features distribution $P(x^n)$, the regret associated with predicting the label $y_n$ given a batch of samples, conditioned on the past observations $z^{n-1} \equiv (x^{n-1},y^{n-1})$ and the current feature $x_n$, is defined as
\begin{align}
    \begin{aligned}
    \nonumber
        &R_{n}(\theta,\phi,Q,P) = \mathbb{E}_{P_\phi P}\left\{ \log{\frac{P_\theta(Y_n|X^n,Y^{n-1})}{Q(Y_n|X^n,Y^{n-1})}} \right\}        
        \\& = \sum_{x^n}\sum_{y^n}{ P(x^n)P_\phi(y^n|x^n) }\log{\frac{P_\theta(y_n|x^n,y^{n-1})}{Q(y_n|x^n,y^{n-1})}}.
    \end{aligned}
\end{align}
In the supervised online setting, where the task is to sequentially predict each label $y_t$ given the past observations $(x^{t-1}, y^{t-1})$ and the current feature $x_t$ for $t = 1, \dots, n$, the regret is defined as
\begin{align}
    \begin{aligned}
    \nonumber
        &R_{n}(\theta,\phi,Q,P) = \mathbb{E}_{P_\phi P}\left\{ \log{\frac{P_\theta(Y^n\|X^n)}{Q(Y^n\|X^n)}} \right\}        
        \\& = \sum_{x^n}\sum_{y^n}{ P(x^n)P_\phi(y^n|x^n) }\log{\frac{\prod_{t=1}^nP_\theta(y_t|x^t,y^{t-1})}{\prod_{t=1}^nQ(y_t|x^t,y^{t-1})}},
    \end{aligned}
\end{align}
which are valid for causal models, where $P(y^n\|x^n) = \prod_{t=1}^nP(y_t|x^t,y^{t-1})$ denotes a \textit{causal} (sequential) conditional distribution.

The performance of the universal learner is evaluated via the classical minimax regret criterion, which measures its worst-case log-loss over \( \Phi \) relative to that of the best hypothesis in \( \Theta \). As the unsupervised setting constitutes the main focus of this work, its minimax formulation is presented in detail below, while the supervised case is defined analogously. The unsupervised misspecified minimax regret is defined as
\begin{align}
    \begin{aligned}
    \label{eq:misspecified_minimax_regret_definition}
        F_{s,n}(\Theta,\Phi)
        &= \min_{Q}\; \max_{P_\phi \in \Phi}\; 
           \max_{P_\theta \in \Theta} R_n(\theta,\phi,Q) \\        
        &= \min_{Q}\; \max_{P_\phi \in \Phi} 
           \left(D(P_\phi\|Q) - D(P_\phi\|P_{\theta^*})\right),
    \end{aligned}    
\end{align}
where 
\[
    \theta^* 
        = \arg\max_{P_\theta \in \Theta} D(P_\phi \,\|\, P_\theta)
\]
denotes the KL-projection of $P_\phi$ onto the hypothesis class $\Theta$,  and \( s \in \{ b, o\} \) denotes the learning mode, with \( s = o \) corresponding to the online setting and \( s = b \) to the batch setting. The minimax regret in the supervised setting is defined analogously and denoted by \( R_{s,n}^*(\Theta,\Phi) \).
%for both batch and online settings.

%Throughout, \( F_n(\Theta,\Phi) \) denotes the misspecified minimax regret in the unsupervised setting, encompassing both batch and online cases, with the intended interpretation determined by the context. 
Throughout the paper, with a slight abuse of notation, we omit the subscripts and denote by \( F_n(\Theta,\Phi) \) and \( R_n^*(\Theta,\Phi) \) the misspecified minimax regret in the unsupervised and supervised settings, respectively. Unless stated otherwise, this notation encompasses both batch and online cases, with the intended interpretation clearly derived from the context.

The above definitions provide a general framework for universal prediction problems, and this framework can be illustrated through two extensively studied 
extreme cases: the well-specified setting and the deterministic individual-sequence setting. These two settings, reviewed briefly below, will serve as reference points for evaluating the performance of our proposed misspecified framework.

In the first case, where $\Theta \equiv \Phi$, the minimax regret reduces to the classical redundancy-capacity expression. In the online setting, see \cite{gallager1974source},\cite{davisson1980source},\cite{ryabko1979coding}, this quantity is
\begin{align}
    C_n(\Theta) = \max_{\pi(\theta)} I(Y^n ; \Theta),
\end{align}
where the maximization is over all priors $\pi(\theta)$ on $\Theta$. In the batch setting, see \cite{batch18learning}, the corresponding conditional redundancy-capacity is
\begin{align}
    C_{c,n}(\Theta) = \max_{\pi(\theta)} I(Y^n ; \Theta | Y^{n-1}),
\end{align}
and in both settings the universal predictor achieving these capacities is the Bayesian mixture distribution
\[
    Q(y^n) = \int_{\Theta} \pi(\theta)\, P_\theta(y^n)\, d\theta.
\]

At the other extreme, where
\[
    \Phi = \mathcal{P}
    = \left\{ P : P(y^{n}) \ge 0,\;
    \int_{\mathcal{Y}^{n}} P(y^{n})\, dy^{n} = 1 \right\},
\]
corresponding to the set of all distributions on $\mathcal{Y}^n$, we obtain the deterministic individual-sequence setting. Here, the regret is defined with respect to the worst-case sequence $y^n$. Shtarkov~\cite{shtar1987universal} showed that the optimal universal predictor under log-loss is the NML distribution,
\begin{align}
    \begin{aligned}
        Q^*(y^n) = \frac{\max_\theta P_\theta(y^n)}{\sum_{y^n}\max_\theta P_\theta(y^n)},
    \end{aligned}
\end{align}
and the minimax regret is
\begin{align}
    \begin{aligned}
        \Gamma_n(\Theta) = \log{\sum_{y}\max_{\theta}P_\theta(y^n)}.
    \end{aligned}
\end{align}
In the batch individual-sequence setting, the regret is evaluated with respect to predicting a single next outcome $y$ given the observed sequence $y^n$. Fogel and Feder~\cite{fogel2019individual,fogel_batch_individual,Fogel2018} established that
\begin{align}
    \begin{aligned}
        \Gamma_n(\Theta) = \log{\sum_{y}\max_{\theta}P_\theta(y|y^n)}
    \end{aligned}
\end{align}
and that the optimal universal predictor in this setting is the predictive Normalized Maximum Likelihood (pNML) distribution,
\begin{align}
    \begin{aligned}
        Q^*(y|y^n) = \log{ \frac{\max_{\theta}P_\theta(y|y^n)}{\sum_{y}\max_{\theta}P_\theta(y|y^n)} }.
    \end{aligned}
\end{align}

For the supervised case, partial results for the online and batch settings have been established in \cite{online17learning} and \cite{batch18learning,individual18learning,fogel2019individual}, respectively.

Furthermore, as noted above, the misspecified online setting has recently been studied in \cite{online21misspecified} and \cite{robust21inference}. A brief overview of these results will be provided in the sequel.

\subsection{Constrained Misspecified Setting Definition}
In this work, we also introduce a new variant of the misspecified setting, in which the universal predictor is constrained to be a mixture distribution over the convex hull of the hypothesis class $\Theta$. Specifically, in the online setting, the predictor is restricted to the form
\begin{align}
    Q_{\pi_0(\theta)}(y^n) = \int_{\Theta} \pi_0(\theta)\, P_\theta(y^n)\, d\theta,
\end{align}
where $\pi_0(\theta)$ is a prior distribution over $\Theta$. We refer to this framework as the \emph{constrained misspecified setting}. 

The constrained misspecified regret in the case of data generating distribution $P_\phi \in \Phi$, hypothesis $P_\theta \in \Theta$ and a constrained mixture distribution $Q_{\pi_0(\theta)}$ is defined by:
\begin{equation}
    R_n(\theta,\phi,Q_{\pi_0(\theta)}) = \mathbb{E}_{P_\phi}\left\{ \log {\frac{P_{\theta}(y^n)}{Q_{\pi_0(\theta)}(y^n)}} \right\}.
\end{equation}
Its performance is evaluated via the following minimax regret:
\begin{align}
    \begin{aligned}
    \label{eq:minmaxRegretDefinition_constrained}
        R_{o,n}^*(\Theta,\Phi) &= \min_{\pi_0(\theta)}\max_{P_\phi\in\Phi}R_n(\theta^*,\phi,Q_{\pi_0(\theta)}),
    \end{aligned}
\end{align}
where the subscript ``o'' denotes the \emph{online} setting, and $P_{\theta^*} \equiv \arg\min_{P_\theta\in\Theta} D(P_\phi\|P_\theta)$ is the projection of $P_\phi$ into the set $\Theta$.	

In words, the constrained minimax regret minimizes the regret of the worst data generating distribution and the best matching hypothesis to the data, via a universal distribution $Q_{\pi_0(\theta)}$, which is constrained to be a mixture distribution over the set of hypotheses $\Theta$, for any series of data samples $y^n$. 

By relaxing this constraint and allowing the regret to be minimized over all universal predictors \( Q(y^n) \), we recover the misspecified setting, in which the minimax regret is given by \( F_{\text{o},n}(\Theta,\Phi) \). In the well-specified stochastic case, corresponding to \( \Phi = \Theta \), the regret further reduces to the redundancy-capacity \( C_n(\Theta) \).

The case \( \Phi = \mathcal{P} \), corresponding to the constrained deterministic individual-sequence setting, requires a more delicate analysis. Its connection to the individual-sequence regret \( \Gamma_n(\Theta) \) is discussed in detail in Section~\ref{sec:constrained_setting}.

Similarly, the constrained misspecified minimax regret in the batch setting is defined as
\begin{align}
    \begin{aligned}
        R_{b,n}^*(\Theta,\Phi)
        &= \min_{\pi_0(\theta)}
           \max_{P_\phi \in \Phi}
           \mathbb{E}_{P_\phi}\left\{
           \log \frac{P_{\theta^*}(Y_n | Y^{n-1})}
                    {Q_{\pi_0(\theta)}(Y_n | Y^{n-1})}
           \right\}.
    \end{aligned}
\end{align}
Here, the subscript ``b'' denotes the \emph{batch} setting, and
\[
    Q_{\pi_0(\theta)}(y_n | y^{n-1})
    = \frac{\int_\Theta \pi_0(\theta) P_\theta(y^{n}) \, d\theta}
           {\int_\Theta \pi_0(\theta) P_\theta(y^{n-1}) \, d\theta},
\]
where \( \theta^* \) denotes the KL-divergence projection of \( P_\phi \) onto the hypothesis class \( \Theta \).

%Note that, to simplify notation, we omit the subscripts and denote the constrained universal predictor by \( Q \) and the associated minimax regret by $R_n^*(\Theta,\Phi)$ whenever its definition is clear from the context.
Note that, for notational simplicity, we omit subscripts and denote the constrained universal predictor by \( Q \) and the associated minimax regret by \( R_n^*(\Theta, \Phi)\) whenever their definitions and the underlying setting (online or batch) are clear from the context.

In our analysis of the constrained misspecified setting, we focus primarily on smooth parametric models satisfying the standard \emph{regularity conditions} of \cite{clarke1990infoasymptotics,clarke1994jeffreys}, and thus rely on several statistical and information‑theoretic quantities that require precise definition. Recall that \( \theta^* \) denotes the projection of \( P_\phi \)
onto the hypothesis class \( \Theta \). We further define
\[
\hat{\theta}(y^n) \triangleq \arg\max_{\theta \in \Theta} P_\theta(y^n)
\]
to be the Maximum Likelihood Estimator (MLE) based on the sample \( y^n \). Under
\( P_\phi \in \Phi \), we define the expected Hessian of the log-loss as follows:
\[
J_\phi(\theta) = \mathbb{E}_{P_\phi}\big\{ -\nabla^2_\theta \log P_\theta(Y) \big\},
\]
and the score variance as:
\[
K_\phi(\theta) = \mathbb{E}_{P_\phi}\big\{ \nabla_\theta \log P_\theta(Y)\,\nabla_\theta \log P_\theta(Y)^\top \big\},
\]
where the score function is given by $\nabla_\theta \log P_\theta(Y)$.

In addition, we introduce the empirical Fisher information based on the sample $y^n$, whose expectation equals the Fisher information matrix. The empirical Fisher information is defined as:
\[
\hat{I}_{ij}(\theta, y^n) = -\frac{1}{n}\,\frac{\partial^2 \log P_\theta(y^n)}{\partial \theta_i \,\partial \theta_j},
\]
while the Fisher information is given by:
\[
I_{ij}(\theta) = \mathbb{E}_{P_\theta}\big\{ \hat{I}_{ij}(\theta, y^n) \big\},
\]
where \( i,j \in \{1,\dots,d\} \), and \( d \) denotes the dimension of the parameter vector \( \theta \).

Moreover, since a central component of the constrained misspecified analysis for smooth parametric models involves exponential families, we present a formal definition of these families and summarizing their key structural properties, which play a crucial role in simplifying the subsequent theoretical development.

We adopt the terminology of \cite{barron2013nonExponential},\cite{TakeuchiBarron_ExponentialFamilies} for exponential families. Given a Borel measurable function $T:\mathcal{Y}\rightarrow\mathbb{R}^d$, define:
\begin{align}
    \begin{aligned}
    \nonumber
        \Theta = \{ \theta\ : \theta\in \mathbb{R}^d, \int_{\mathcal{Y}}{h(y)\exp{\left(\theta^{\top} \cdot T(y)\right)}}\,dy < \infty\}.
    \end{aligned}
\end{align}
Define a function $\psi$ and a probability density $P_\theta$ on $\mathcal{Y}$ by:
\begin{align}
    \begin{aligned}
    \nonumber
        \psi(\theta) \equiv \ln \int_{\mathcal{Y}}{h(y)\exp{\left( \theta^{\top} \cdot T(y) \right)}}\,dy,
    \end{aligned}
\end{align}
and
\begin{align}
    \begin{aligned}
    \nonumber
        P_\theta(y) \equiv h(y)\exp{\left( \theta^{\top} \cdot T(y) - \psi(\theta) \right)}. 
    \end{aligned}
\end{align}

The collection of densities \( \{ P_\theta : \theta \in \Theta \} \) constitutes an \emph{exponential family}. When \( \Theta \) is an open subset of \( \mathbb{R}^d \), this family is termed a \emph{regular exponential family}. For convenience, we will denote this family simply by \( \Theta \), with a slight abuse of notation.

For such families, the score function is as follows:
\[
    \nabla_\theta \log P_\theta(y) = T(y) - \nabla_\theta \psi(\theta),
\]
and the normalized expected Hessian of the log-loss under \( P_\phi \) is:
\[
    J_\phi(\theta) = \nabla^2_\theta \psi(\theta) = I(\theta),
\]
where \( I(\theta) \) denotes the Fisher information matrix. 
%For exponential families, these quantities coincide at the MLE \( \hat{\theta} \), i.e., \( I(\hat{\theta}) = \hat{I}(\hat{\theta}) \).
Moreover, for exponential families, the empirical Fisher information coincides with the Fisher information at the MLE \( \hat{\theta} \), i.e., \( \hat{I}(\hat{\theta}) = I(\hat{\theta}) \).

Finally, the normalized score variance is given by:  
\[
    K_\phi(\theta) = \operatorname{Cov}_{P_\phi}\big(T(Y)\big).
\]
    
    \section{Misspecified Universal Learning} 
    \label{sec:misspecified_setting}
    The misspecified universal prediction framework was initially introduced by Barron et al.~\cite{robust1998minimax} in 1998. However, it remained largely unexplored until its resurgence in 2021, when Feder and Polyanskiy~\cite{online21misspecified}, as well as Painsky and Feder~\cite{robust21inference}, revisited this setting in the context of online prediction. In this Section, we first summarize the existing results on misspecified online prediction in the unsupervised case and establish preliminary extensions to the supervised case. Furthermore, we provide a detailed and in-depth extension of the analysis to the misspecified batch setting, which to the best of our knowledge, has not been investigated previously.
\subsection{Online Setting}
\label{subsec:misspecified_online_setting}
Using the minimax theorem for concave-convex functions~\cite{sion1958general} and standard tools from information theory, it was shown in \cite{online21misspecified}, \cite{robust21inference} that the misspecified regret and the corresponding universal predictor are characterized by the following:
\begin{thm}[\cite{robust21inference}, Theorem 2]
\label{thm:online_misspecified_regret}
Let $\Theta \subseteq \Phi$ be a hypothesis class and a model class, respectively. Then, the misspecified regret in the online prediction setting under log-loss is given by:
\begin{align}
    F_{o,n}(\Theta, \Phi) = \max_{\pi(\phi)} \left( I(Y^n ; \Phi) - \mathbb{E}_{\pi(\phi)} \left\{ D(P_\phi \| \Theta) \right\} \right)
\end{align}
where the divergence term is defined as:
\begin{align}
    D(P_\phi \| \Theta) \equiv \min_{P_\theta \in \Theta} D(P_\phi \| P_\theta)
\end{align}
and the universal predictor that achieves this regret is:
\begin{align}
    Q(y^n) = \int_{\Phi} \pi(\phi) P_\phi(y^n) \, d\phi
\end{align}
with \( \pi(\phi) \) being a prior distribution over \( \Phi \).
\end{thm}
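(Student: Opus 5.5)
The argument lifts the inner maximization over \(P_\phi\) to a maximization over priors \(\pi(\phi)\) on \(\Phi\), applies a minimax theorem, and then solves the inner minimization over \(Q\) in closed form.

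\textbf{Step 1: the objective as a function of a prior.} For fixed \(Q\) the worst-case regret is
\[
\max_{P_\phi\in\Phi}\bigl(D(P_\phi\|Q)-D(P_\phi\|\Theta)\bigr)
=\max_{\pi(\phi)}\mathbb{E}_{\pi}\bigl\{D(P_\phi\|Q)-D(P_\phi\|\Theta)\bigr\},
\]
since a linear functional of \(\pi\) is maximized at a point mass. Hence
\[
F_{o,n}(\Theta,\Phi)=\min_Q\max_{\pi} G(Q,\pi),\qquad
G(Q,\pi)\triangleq\mathbb{E}_{\pi}\{D(P_\phi\|Q)\}-\mathbb{E}_{\pi}\{D(P_\phi\|\Theta)\}.
\]

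\textbf{Step 2: check the hypotheses of a minimax theorem.} The function \(G\) is convex in \(Q\), because KL divergence is convex in its second argument. It is linear, hence concave, in \(\pi\). The set of distributions \(Q\) on \(\mathcal{Y}^n\) is convex and, for finite alphabets, compact; the set of priors on \(\Phi\) is convex. Sion's theorem~\cite{sion1958general} then gives
\[
\min_Q\max_\pi G(Q,\pi)=\max_\pi\min_Q G(Q,\pi).
\]

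\textbf{Step 3: the inner minimization.} The term \(\mathbb{E}_\pi D(P_\phi\|\Theta)\) does not depend on \(Q\). For the other term, the standard golden-formula identity gives
\[
\mathbb{E}_\pi D(P_\phi\|Q)=I(Y^n;\Phi)+D(Q_\pi\|Q),
\qquad Q_\pi(y^n)=\int_\Phi \pi(\phi)P_\phi(y^n)\,d\phi .
\]
This is minimized uniquely at \(Q=Q_\pi\), with value \(I(Y^n;\Phi)\). Substituting yields
\[
F_{o,n}(\Theta,\Phi)=\max_{\pi}\bigl(I(Y^n;\Phi)-\mathbb{E}_\pi D(P_\phi\|\Theta)\bigr).
\]

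\textbf{Step 4: the optimal predictor.} Let \(\pi^*\) be a maximizing prior and \((Q^*,\pi^*)\) a saddle point. Since \(Q^*\) must minimize \(G(\cdot,\pi^*)\), and that minimizer is unique, we get \(Q^*=Q_{\pi^*}\), the mixture over \(\Phi\) stated in the theorem.

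\textbf{Main obstacle.} The hard part is justifying Sion's theorem: we need compactness and the right semicontinuity. KL divergence is only lower semicontinuous in \(Q\), and it is infinite when \(Q\) fails to dominate \(P_\phi\). I would first handle finite alphabets \(\mathcal{Y}\). There the simplex of distributions \(Q\) is compact and \(G\) is lower semicontinuous in \(Q\), which is enough for Sion with the compact side placed on \(Q\). The existence of a maximizing \(\pi\) then follows by taking \(\Phi\) compact, or by interpreting the maximum as a supremum. Infinite alphabets would need a separate approximation argument.
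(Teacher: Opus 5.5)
Your proposal is correct and follows essentially the route the paper relies on. The paper itself only cites \cite{robust21inference} for this result, but the method it points to is the one it carries out for the batch and constrained analogues in the proofs of Theorems~\ref{ThmFirst} and~\ref{thm:thmOnlineConstrainedRegret}: lift the maximization to priors, use convexity in $Q$ and linearity in $\pi$, apply Sion's theorem, and solve the inner minimization by the mixture. You differ only in small refinements that work in your favor: the golden formula replaces the paper's Lagrangian step and also gives uniqueness of the minimizer, which is what your saddle-point identification of $Q_{\pi^*}$ needs, and you flag the compactness, semicontinuity and infinite-divergence issues that the paper passes over.
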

Theorem~\ref{thm:online_misspecified_regret} demonstrates that, analogously to the well-specified setting, the misspecified regret can be interpreted as a constrained form of capacity between the observed data samples \( Y^n \) and the model class \( \Phi \). In this formulation, the capacity-achieving prior is given by \( \pi(\phi) \), and the optimal universal predictor corresponds to a mixture distribution over models in \( \Phi \), weighted by this prior.

An analytical evaluation of the misspecified regret and the capacity-achieving prior distribution in Theorem~\ref{thm:online_misspecified_regret} is intractable in many cases. To address this, \cite{robust21inference} introduces an extension of the Arimoto-Blahut algorithm, enabling the numerical evaluation of these quantities, as formalized in the following theorem:

\begin{thm}[\cite{robust21inference}, Theorem 3]
Let \( \Phi \) and \( \Theta \) be two model classes, and let \( F_{o,n}(\Theta,\Phi) \) be defined as above. Assume that \( \Theta \) is bounded. Then, for \( F_{o,n}(\Theta,\Phi) < \infty \), the following holds:
\begin{align}
    \begin{aligned}
    \label{eq:misspecified_regret_arimoto_blahut}
        F_{o,n}(\Theta,\Phi) =& \sup_{\pi(\phi),\psi(\phi,y^n)} \int_{\Phi} \int_{y^n} \pi(\phi) P_\phi(y^n) \cdot 
        \\&\log \left( \frac{\psi(\phi,y^n)}{\pi(\phi)} \frac{P_{\theta^*}(y^n)}{P_\phi(y^n)} \right) \, dy^n \, d\phi
    \end{aligned}
\end{align}
where \( \pi(\phi) \) and \( \psi(\phi,y^n) \) are probability distributions over \( \phi \), for each fixed \( y^n \). Furthermore, the solution to \eqref{eq:misspecified_regret_arimoto_blahut} may be attained via the following iterative projection algorithm:
\begin{enumerate}
    \item For a fixed \( \pi(\phi) \), set
    \[
    \psi(\phi,y^n) = \frac{\pi(\phi) P_\phi(y^n)}{\int_\Phi \pi(\phi) P_\phi(y^n) \, d\phi}
    \]
    \item For a fixed \( \psi(\phi,y^n) \), set
    \[
    \pi(\phi) = \frac{ \prod_{y^n} \tilde{\psi}(\phi,y^n)^{P_\phi(y^n)} }{ \int_\Phi \prod_{y^n} \tilde{\psi}(\phi,y^n)^{P_\phi(y^n)} \, d\phi }, \quad 
    \]
    where $\tilde{\psi}(\phi,y^n) = \psi(\phi,y^n) \cdot \frac{P_{\theta^*}(y^n)}{P_\phi(y^n)}$.
\end{enumerate}
Finally, the distribution \( Q \) that achieves \( F_{o,n}(\Theta,\Phi) \) is given by
\[
Q(y^n) = \int_{\phi} \pi^*(\phi) P_\phi(y^n) \, d\phi,
\]
where \( \pi^*(\phi) \) is the final value of \( \pi(\phi) \) at the last iteration of the algorithm.
\end{thm}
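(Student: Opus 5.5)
The plan is to start from the variational (Gibbs) representation of mutual information, the same one behind the classical Arimoto--Blahut algorithm. For a fixed prior $\pi(\phi)$ with mixture $Q_\pi(y^n)=\int_\Phi \pi(\phi)P_\phi(y^n)\,d\phi$, and for any conditional distribution $\psi(\phi,y^n)$ over $\phi$ (one for each fixed $y^n$), I will show
\[
\int\!\!\int \pi(\phi)P_\phi(y^n)\log\frac{\psi(\phi,y^n)}{\pi(\phi)}\,dy^n\,d\phi \le I(Y^n;\Phi).
\]
Equality holds iff $\psi$ is the posterior $\pi(\phi)P_\phi(y^n)/Q_\pi(y^n)$. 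The gap is
\[
\int Q_\pi(y^n)\, D\big(\text{posterior}(\cdot|y^n)\,\|\,\psi(\cdot,y^n)\big)\,dy^n \ge 0 ,
\]
which follows by writing $\log(\psi/\pi)=\log(\text{post}/\pi)-\log(\text{post}/\psi)$. The remaining factor in the objective is $\log(P_{\theta^*}(y^n)/P_\phi(y^n))$, which does not involve $\psi$. Its integral against $\pi(\phi)P_\phi(y^n)$ equals $-\mathbb{E}_{\pi}\{D(P_\phi\|P_{\theta^*})\}=-\mathbb{E}_\pi\{D(P_\phi\|\Theta)\}$, because $\theta^*=\theta^*(\phi)$ is the KL projection.

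So, for each fixed $\pi$, the supremum over $\psi$ of the objective in \eqref{eq:misspecified_regret_arimoto_blahut} equals $I(Y^n;\Phi)-\mathbb{E}_\pi\{D(P_\phi\|\Theta)\}$. Taking the supremum over $\pi$ and invoking Theorem~\ref{thm:online_misspecified_regret} gives the identity. The assumptions that $\Theta$ is bounded and $F_{o,n}<\infty$ are used here: they guarantee that every term is finite, so the splitting of the logarithm is legitimate and no $\infty-\infty$ arises.

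For the algorithm, I treat the objective $J(\pi,\psi)$ as a function of two blocks and maximize each block in turn. Step~1, the $\psi$-update, is exactly the equality case found above. For step~2, with $\psi$ fixed, I write
\[
J=\int\pi(\phi)\big[\log c(\phi)-\log\pi(\phi)\big]\,d\phi,
\qquad
\log c(\phi)=\int P_\phi(y^n)\log\tilde\psi(\phi,y^n)\,dy^n .
\]
This is the form $-D(\pi\|c/Z)+\log Z$, so the maximizer is $\pi\propto c=\prod_{y^n}\tilde\psi(\phi,y^n)^{P_\phi(y^n)}$, which is the stated update. It also gives $J=\log Z$ at each step, and hence monotonicity: $J$ is nondecreasing along the iterations and bounded above by $F_{o,n}$.

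The main obstacle is convergence to the supremum rather than merely monotone improvement. Here I would follow Csisz\'ar--Tusn\'ady. $J$ is concave in $\pi$, since mutual information is concave in the prior and the penalty is linear in $\pi$, and $J$ is jointly concave in $(\pi,\psi)$ because $\log\psi-\log\pi$ is handled by the log-sum inequality. I would then establish the five-point property. From it one gets the bound
\[
F_{o,n}-J(\pi_{k})\le \frac{D(\pi^*\|\pi_0)}{k},
\]
or at least its summability, and therefore $J(\pi_k)\to F_{o,n}$.

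For continuous $\Phi$ this argument needs compactness or boundedness to extract limits, and I expect to assume a discretization or a finite output alphabet. The final statement that $Q=\int\pi^*P_\phi$ is optimal then follows from Theorem~\ref{thm:online_misspecified_regret}.
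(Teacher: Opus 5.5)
The paper does not prove this statement; it is quoted from \cite{robust21inference}, so there is no in-paper argument to compare with. Your proposal is correct, and it is the standard Blahut--Arimoto double-maximization argument. The $\psi$-step uses the Gibbs-variational representation of $I(Y^n;\Phi)$, the $\pi$-step uses the identity $-D(\pi\|c/Z)+\log Z$, and Theorem~\ref{thm:online_misspecified_regret} then gives the identity. This is the natural way to establish the result.

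Two small remarks.

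First, you do not need the five-point machinery for convergence. Inserting the posterior $\psi_k$ into step~2 gives the explicit update
\[
\pi_{k+1}(\phi)=\frac{\pi_k(\phi)\,e^{D(P_\phi\|Q_{\pi_k})-D(P_\phi\|\Theta)}}{Z_k},
\]
which is the online analogue of the update in Algorithm~\ref{alg:ArimotoBlahutAlg_misspecified} with $\lambda=1$. The identity $\sum_\phi\pi^*(\phi)D(P_\phi\|Q_{\pi_k})=\sum_\phi\pi^*(\phi)D(P_\phi\|Q_{\pi^*})+D(Q_{\pi^*}\|Q_{\pi_k})$ then yields $\sum_\phi\pi^*(\phi)\log\frac{\pi_{k+1}(\phi)}{\pi_k(\phi)}\ge F_{o,n}(\Theta,\Phi)-\log Z_k$. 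Telescoping this, together with monotonicity of $\log Z_k=J(\pi_{k+1},\psi_k)$, gives your $D(\pi^*\|\pi_0)/k$ rate directly. The bound is finite, for example, when $\Phi$ is finite and $\pi_0$ is uniform.

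Second, boundedness of $\Theta$ does not make every term finite. The paper's multinomial example has bounded $\Theta$, yet $D(P_\phi\|\Theta)=\infty$ off $\Theta$. What the splitting of the logarithm actually needs is $I(Y^n;\Phi)<\infty$, which is automatic for finite alphabets. Points with infinite penalty simply make $J=-\infty$ and get zero mass after one update. They also carry no mass under $\pi^*$, since $F_{o,n}(\Theta,\Phi)\ge 0$.
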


Note that, by definition, the minimax regrets associated with the various universal prediction settings satisfy the following relation:
\begin{align}
    \begin{aligned}
        \nonumber
        C_n(\Theta) \equiv F_{o,n}(\Theta,\Theta) \le F_{o,n}(\Theta,\Phi)
        \le F_{o,n}(\Theta,\mathcal{P}) \equiv \Gamma_n(\Theta).
    \end{aligned}
\end{align}
Here, $C_n(\Theta)$ denotes the redundancy-capacity of the well-specified setting, $F_{o,n}(\Theta,\Phi)$ is the misspecified minimax regret when the true distribution lies in $\Phi$, and $\Gamma_n(\Theta)$ is the worst-case regret over all probability distributions on $\mathcal{Y}^n$, denoted by $\mathcal{P}$.
The following relation
\[
    \Gamma_n(\Theta) = F_{o,n}(\Theta,\mathcal{P}) = \log\sum_{y^n}{\max_{\theta}P_\theta(y^n)}
\]
was also demonstrated in (\cite{robust21inference}, Theorem~4) and was shown to be achieved by the NML universal predictor introduced by Shtarkov \cite{shtar1987universal}.
We now provide an alternative proof of Shtarkov’s individual-setting regret result, formulated using principles of the misspecified setting:
\begin{align}
    \begin{aligned}
        \nonumber
        F_{o,n}(\Theta,\mathcal{P}) &= \min_{Q(y^n)} \max_{P(y^n)\in\mathcal{P}} \max_{P_\theta\in\Theta} \sum_{y^n} P(y^n) \log \frac{P_\theta(y^n)}{Q(y^n)} \\
        &= \min_{Q(y^n)} \max_{y^n} \log \frac{\max_{\theta\in\Theta}{P_\theta(y^n)}}{Q(y^n)}.
    \end{aligned}
\end{align}
Replacing the maximization over $y^n$ with a maximization over a mixture of sequences, we have:
\begin{align}
    \begin{aligned}
    \nonumber
        F_{o,n}(\Theta,\mathcal{P}) &= \min_{Q(y^n)} \max_{\pi(y^n)}  \sum_{y^n} \pi(y^n) \log \frac{\max_{\theta\in\Theta}{P_\theta(y^n)}}{Q(y^n)}.
    \end{aligned}
\end{align}
Setting $Q(y^n) = \pi(y^n)$, we obtain:
\begin{align}
    \begin{aligned}
        F_{o,n}(\Theta,\mathcal{P}) &\leq \max_{\pi(y^n)} \sum_{y^n} \pi(y^n) \log \frac{\max_{\theta\in\Theta} P_\theta(y^n)}{\pi(y^n)}.
    \end{aligned}
\end{align}
Applying Jensen's inequality gives the following:
\begin{align}
    \begin{aligned}
        F_{o,n}(\Theta,\mathcal{P}) &\leq \log \sum_{y^n} \pi(y^n) \cdot \frac{\max_{\theta\in\Theta} P_\theta(y^n)}{\pi(y^n)} \\
        &= \log \sum_{y^n} \max_{\theta \in \Theta} P_\theta(y^n) = \Gamma_n(\Theta).
    \end{aligned}
\end{align}
Since $F_{o,n}(\Theta,\mathcal{P}) \geq \Gamma_n(\Theta)$ (as the set of all individual deterministic sequences is a subset of $\mathcal{P}$), we conclude:
\[
F_{o,n}(\Theta,\mathcal{P}) = \Gamma_n(\Theta) = \log \sum_{y^n} \max_{\theta \in \Theta} P_\theta(y^n) \equiv \log K_n,
\]
and clearly the NML is given by,
\begin{align}
    \begin{aligned}
        Q(y^n) = \pi(y^n) = \frac{\max_{\theta\in\Theta}{P_\theta(y^n)}}{K_n}.
    \end{aligned}
\end{align}

In~\cite{online21misspecified}, the authors investigated whether the PAC-style regret \( F_{o,n}^{(\mathrm{PAC})}(\Theta, \Phi) \), relevant from a learning-theoretic perspective where \( \Phi \) is the set of all i.i.d. distributions, is more closely aligned with \( C_n(\Theta) \) or with \( \Gamma_n(\Theta) \). %This question is central to understanding the robustness of universal prediction strategies under model misspecification in practical learning scenarios.
Interestingly, under mild regularity conditions, it has been shown that the misspecified regret is closely approximated by the well-specified regret associated with the hypothesis class \( \Theta \) and a small neighborhood of distributions surrounding it within the model class \( \Phi \). In other words, the complexity of the prediction task is primarily governed by the hypothesis class \( \Theta \), rather than the broader model class \( \Phi \). Consequently, the regret is well approximated by the well-specified capacity of \( \Theta \), and the capacity-achieving prior \( \pi(\phi) \) concentrates its probability mass predominantly on distributions in \( \Theta \). This insight highlights the robustness of universal prediction strategies, even in the presence of model misspecification.

This behavior is formally captured in the following theorem:

\begin{thm}[\cite{online21misspecified}, Theorem 4]
\label{thm:misspecified_online_regret_bound}
Suppose that \( \Phi \subset \mathcal{P} \) is the set of all i.i.d. distributions such that \( C_n(\Phi) = \tau_n \cdot n \), with \( \tau_n \to 0 \). Then, for every \( \epsilon_n \gg \tau_n \), we have:
\begin{align}
    C_n(\Theta) \leq F_{o,n}(\Theta, \Phi) \leq C_n(\Theta_{\epsilon_n}) + o(1)
\end{align}
where \( \Theta_{\epsilon_n} = \{ P_\phi(y^n) \in \Phi : D(P_\phi \| \Theta) \leq \epsilon_n \} \).
\end{thm}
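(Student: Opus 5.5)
The lower bound follows directly from the ordering of minimax regrets noted above. Enlarging the set of data-generating distributions from $\Theta$ to $\Phi$ can only increase the inner maximum in \eqref{eq:misspecified_minimax_regret_definition}. Also, for $\phi\in\Theta$ we have $D(P_\phi\|\Theta)=0$, so $F_{o,n}(\Theta,\Theta)=C_n(\Theta)$. This gives $C_n(\Theta)\le F_{o,n}(\Theta,\Phi)$.

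For the upper bound I will use the capacity-type characterization of Theorem~\ref{thm:online_misspecified_regret}:
\[
F_{o,n}(\Theta,\Phi)=\max_{\pi}\bigl(I(Y^n;\phi)-\mathbb{E}_\pi\{D(P_\phi\|\Theta)\}\bigr),
\]
with $\phi$ treated as a random variable drawn from $\pi$. Fix any prior $\pi$ on $\Phi$ and let $B=\mathds{1}\{\phi\in\Theta_{\epsilon_n}\}$, with $p=\pi(\Theta_{\epsilon_n}^c)$. By the chain rule,
\[
I(Y^n;\phi)\le I(Y^n;\phi,B)=I(Y^n;B)+(1-p)\,I(Y^n;\phi\mid B=1)+p\,I(Y^n;\phi\mid B=0).
\]
I bound the three terms separately. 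First, $I(Y^n;B)\le h(p)\le\log 2$. Second, $I(Y^n;\phi\mid B=1)\le C_n(\Theta_{\epsilon_n})$, since conditioning on $B=1$ gives a prior supported on $\Theta_{\epsilon_n}$. Third, $I(Y^n;\phi\mid B=0)\le C_n(\Phi)=\tau_n n$. On the penalty side, every $\phi\notin\Theta_{\epsilon_n}$ satisfies $D(P_\phi\|\Theta)>\epsilon_n$ for the $n$-letter distributions. Reading $\epsilon_n$ as a per-letter threshold in the i.i.d. setting, this divergence is at least $n\epsilon_n$, so $\mathbb{E}_\pi\{D\}\ge p\,n\epsilon_n$. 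Combining,
\[
I-\mathbb{E}D\le C_n(\Theta_{\epsilon_n})+h(p)+p\,n(\tau_n-\epsilon_n).
\]
Since $\epsilon_n\gg\tau_n$, the last term is at most $-p\,n\epsilon_n(1-o(1))$.

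The main obstacle is that the term $h(p)$ leaves an $O(1)$ slack rather than $o(1)$. To remove it I will optimize over $p$ jointly. The expression $h(p)-p\,n\epsilon_n(1-o(1))$ is maximized at $p\approx e^{-n\epsilon_n}$, where it equals $\log\bigl(1+e^{-n\epsilon_n(1-o(1))}\bigr)=o(1)$ provided $n\epsilon_n\to\infty$. This holds whenever $\epsilon_n\gg\tau_n$ and $n\tau_n=C_n(\Phi)\to\infty$ (or trivially when $\epsilon_n$ is not vanishing). A subtler issue is that I have dropped the positive quantity $(1-p)$ in front of $C_n(\Theta_{\epsilon_n})$. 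This is harmless because $(1-p)C\le C$, so the bound only gets looser. I will also need to handle the normalization convention, that is, whether $D(P_\phi\|\Theta)\le\epsilon_n$ refers to the per-letter or the $n$-letter divergence. Under the per-letter reading, the argument above goes through once the sequence $\epsilon_n$ is chosen so that $n\epsilon_n\to\infty$. Taking the maximum over $\pi$ then gives $F_{o,n}(\Theta,\Phi)\le C_n(\Theta_{\epsilon_n})+o(1)$.
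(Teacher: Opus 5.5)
Your proof is correct. The lower bound is the definitional one the paper uses, but your upper bound takes a genuinely different route from the argument the paper relies on.

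The paper cites this result without proof. The argument it follows is Feder--Polyanskiy's Lemma~11, which the paper reproduces for the batch case in Lemma~\ref{lemma1} and the proof of Theorem~\ref{Thm2}. That argument works with the \emph{optimal} prior $\pi^*$:
\begin{itemize}
\item Since $J(\pi^*)=F_{o,n}(\Theta,\Phi)\ge 0$, one has $\mathbb{E}_{\pi^*}\{D\}\le C_n(\Phi)$.
\item Markov's inequality then gives $\lambda=\pi^*(\Theta_{\epsilon_n}^c)\le\tau_n/\epsilon_n\to 0$.
\item The bad component is bounded self-referentially by $J(\pi_1)\le F_{o,n}(\Theta,\Phi)$.
\item Rearranging leaves the remainder $h(\lambda)/(1-\lambda)$.
\end{itemize}

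You use the same chain-rule split with $I(Y^n;B)\le h(p)$, which is exactly the content of Lemma~\ref{lemma1}. The difference is that you bound the bad component directly by $p\,n(\tau_n-\epsilon_n)<0$. You then absorb $h(p)$ through $\sup_{p}\bigl(h(p)-pa\bigr)=\log(1+e^{-a})$.

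This buys you several things:
\begin{itemize}
\item The bound holds for every prior, with no Markov step, no appeal to a maximizer, and no rearrangement that needs $F_{o,n}<\infty$.
\item The remainder $\log\bigl(1+e^{-n(\epsilon_n-\tau_n)}\bigr)$ is exponentially small in $n\epsilon_n$, rather than of order $(\tau_n/\epsilon_n)\log(\epsilon_n/\tau_n)$.
\item It only requires $n(\epsilon_n-\tau_n)\to\infty$, so for instance $\epsilon_n=2\tau_n$ suffices when $C_n(\Phi)\to\infty$.
\end{itemize}

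The cost is that it leans on the penalty being extensive, i.e.\ a per-letter threshold multiplied by $n$. In the batch analogue (Theorem~\ref{Thm2}) the penalty gap is only $\epsilon_n-\tau_n\to 0$. There your optimization over $p$ yields only $\log 2$, and you would need the Markov bound on $\lambda$ after all.

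Finally, $n\epsilon_n\to\infty$ is not an extra assumption you must impose or choose $\epsilon_n$ to satisfy, and you do not need $C_n(\Phi)\to\infty$. Because $C_n(\Phi)$ is nondecreasing in $n$, we have $n\tau_n=C_n(\Phi)\ge C_1(\Phi)>0$ whenever $\Phi$ contains two distinct distributions. Hence $n\epsilon_n\ge C_1(\Phi)\,\epsilon_n/\tau_n\to\infty$ follows from $\epsilon_n\gg\tau_n$ alone. If $\Phi$ is a single point, every quantity vanishes and the statement is trivial.
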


In many relevant scenarios, the condition \( \epsilon_n \to 0 \) holds while still satisfying \( \epsilon_n \gg \tau_n \), as required by Theorem~\ref{thm:misspecified_online_regret_bound}. 

An illustrative example is provided in \cite{online21misspecified}, which considers the Gaussian Location Model (GLM), where \( \Theta \) is a compact subset of \( \mathbb{R}^d \), and the likelihood function is given by
\[
f_\theta(y) = (2\pi)^{-\frac{d}{2}} e^{-\frac{1}{2}\|y - \theta\|^2}.
\]
In this setting, it is shown that \( F_{o,n}(\Theta, \Phi) = C_n(\Theta) + o(1) \), while the individual setting regret term satisfies \( \Gamma_n(\Theta) = C_n(\Theta) + \frac{d}{2}\log{e} + o(1) \). Notably, this result also holds in the PAC setting, where \( \Phi \) is the set of all i.i.d. distributions.

Conversely, (\cite{online21misspecified}, Appendix F.1.) also presents a counterexample demonstrating that the growth rates of \( C_n(\Theta) \) and \( C_n(\Theta_\epsilon) \) can differ significantly. This example extends the GLM to an infinite-dimensional setting, where the observation vector is defined as
\[
Y = \phi + N, \quad N \sim \mathcal{N}(0, I_\infty),
\]
with \( \phi = (\phi_0, \phi_1, \dots) \), such that each \( Y \) is an infinite-dimensional vector. Define the parameter sets:
\[
\Phi = \left\{ \phi : 0 \leq \phi_j \leq 2^{-j}, \; j = 0, 1, \dots \right\},
\]
and
\[
\Theta = \left\{ \phi : 0 \leq \phi_0 \leq 1, \; \phi_1 = \phi_2 = \dots = 0 \right\}.
\]
It can be shown that the following asymptotic behaviors hold:
\[
C_n(\Phi) \asymp \log^2 n, \quad
C_n(\Theta) \asymp \log n, \quad
C_n(\Theta_\epsilon) \asymp \log^2 n,
\]
for any \( \epsilon > 0 \), where
\[
\Theta_\epsilon = [0,1] \times \{0\} \times \cdots \times \{0\} \times [0, 2^{-k}] \times [0, 2^{-k-1}] \times \cdots,
\]
and \( k = -\frac{1}{2} \log_2 \epsilon + O(1) \).

The above results provide a fairly complete picture of the misspecified online setting for unsupervised data. 
In contrast, the supervised online setting introduces additional challenges due to a fundamental limitation: the chain rule generally does not hold for conditional probabilities. Specifically, not every joint distribution $P(y^n|x^n)$ can be factorized into sequential conditionals $P(y_t|x^t, y^{t-1})$. 
As a result, the supervised online case remains largely unresolved, even under well-specified assumptions, with only partial results currently available, see  e.g., \cite{online17learning} analyzed several specific scenarios using a mixture-based approach when the feature sequence is either i.i.d. or generated 
by an adversary with i.i.d. noise satisfying mild regularity conditions in the well-specified setting.

Naturally, the misspecified online learning problem in the supervised setting introduces additional challenges compared to the well-specified case. In what follows, we establish results for the misspecified online minimax regret when the data-generating family $\Phi$ consists of causal conditional models. Interestingly, the resulting characterization involves a directed-information term, an information measure originating in causal communication with feedback, see \cite{massey1990causality}, \cite{kramer1998directed}, \cite{tatikonda2009capacity}, \cite{permuter2009finite}.
In addition, we derive lower bounds on the minimax regret for broader, noncausal classes of $\Phi$. These bounds are obtained under the assumption of a known feature distribution $P(x^n)$ and focus on a simple but practically important class of models, namely, the set of memoryless conditional distribution hypotheses $\Theta$,

The following theorem addresses the misspecified online supervised learning setting under the assumption that the data-generating family $\Phi$ consists of causal conditional distributions.
\begin{thm}
\label{thm:misspecified_online_supervised_causal_setting}
Assume a set of causal conditional data generating distributions $P_\phi(y^n | x^n) \in \Phi$, a hypothesis class consisting of 
conditional models $P_\theta(y^n | x^n) \in \Theta \subseteq\Phi$.
Assume that the feature sequence is i.i.d. by a known probability distribution $P(x)$ and independent of $\Phi$.
Then the online misspecified minimax regret satisfies
\begin{align}
    \begin{aligned}
        R_{o,n}^*(\Theta,\Phi) = \max_{\pi(\phi)}\left( I(\Phi \to Y^n | X^n) - \mathbb{E_{\pi(\phi)}}\left\{D(P_\phi\|\Theta)\right\} \right)
    \end{aligned}
\end{align}
where $I(Y^n \to \Phi | X^n)$ is the directed information from $Y^n$ to $\Phi$ conditioned on $X^n$ and the misspecified penalty term is given by:
\begin{align}
    \begin{aligned}
        D(P_\phi\|\Theta) \equiv \min_{P_\theta\in\Theta}\mathbb{E}_{P_\phi P}\left\{ \log{\frac{P_\phi(Y^n\|X^n)}{P_\theta(Y^n\|X^n)}}  \right\}.
    \end{aligned}
\end{align}
Moreover, the universal sequential predictor is given by
\begin{align}
    \begin{aligned}
        Q(y^n\|x^n) = \int_{\Phi}{\pi(\phi)P_\phi(y^n\|x^n)}\,d\phi.
    \end{aligned}
\end{align}
\end{thm}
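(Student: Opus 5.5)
The plan is to reduce the supervised causal problem to the same concave--convex minimax structure used for Theorem~\ref{thm:online_misspecified_regret}, with the causal conditional $P(y^n\|x^n)$ playing the role of $P(y^n)$. The restriction of the predictor to causally conditioned distributions $Q(y^n\|x^n)$ is what makes the log-loss telescope over $t$.

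\textbf{Step 1: rewrite the regret.} For fixed $\phi$ and $Q$, maximizing over $\theta$ gives
\[
\max_{\theta}R_n=\mathbb{E}_{P_\phi P}\left\{\log\frac{P_\phi(Y^n\|X^n)}{Q(Y^n\|X^n)}\right\}-D(P_\phi\|\Theta).
\]
The first term is the causally conditioned divergence $D(P_\phi(\cdot\|X^n)\|Q(\cdot\|X^n)\,|\,P)$. Here I use that $P_\phi(y^n|x^n)=P_\phi(y^n\|x^n)$, since $\Phi$ is causal and the features are i.i.d.\ and independent of $\phi$.

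\textbf{Step 2: pass to priors.} I would replace $\max_{\phi}$ by $\max_{\pi(\phi)}$; this is allowed because the objective is linear in $\pi$. The next task is to check the hypotheses of Sion's theorem~\cite{sion1958general}:
\begin{itemize}
\item The objective is linear in $\pi$.
\item It is convex in $Q$. The set of causally conditioned $Q$ is convex, as it equals $\{Q(y^n|x^n)\ge0:\ \text{marginals of }y^t \text{ depend only on }x^t\}$, which is defined by linear constraints. The log-loss $-\log Q(y^n\|x^n)$ is convex in $Q$.
\end{itemize}
Compactness comes from finite alphabets, or from restricting to mixtures. Exchanging min and max then gives $\max_{\pi}\min_{Q}$.

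\textbf{Step 3: the inner minimization.} For fixed $\pi$, I would show that
\[
\mathbb{E}_\pi D(P_\phi\|Q)=\mathbb{E}_\pi D(P_\phi\|Q_\pi)+D(Q_\pi\|Q),
\]
with all divergences causally conditioned and $Q_\pi(y^n\|x^n)=\int\pi(\phi)P_\phi(y^n\|x^n)\,d\phi$. This is an identity for the joint law $P(x^n)\int\pi P_\phi$. The point to check is that $Q_\pi$ is itself causally conditioned: a mixture of causal kernels is causal as a joint conditional, because the $y^t$-marginal of each component depends only on $x^t$. So the minimizer is $Q_\pi$. Note that $Q_\pi(y^n\|x^n)$, as a product of conditionals, equals $\prod_t Q_\pi(y_t|x^t,y^{t-1})$ with posterior updates; this is exactly the stated universal predictor.

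\textbf{Step 4: identify directed information.} It remains to show that $\mathbb{E}_\pi D(P_\phi\|Q_\pi)$ is the directed information. Expanding,
\[
\mathbb{E}\log\frac{P_\phi(Y^n\|X^n)}{Q_\pi(Y^n\|X^n)}=\sum_t I(\Phi;Y_t|X^t,Y^{t-1}).
\]
Because $X_{t+1}^n$ is independent of $(\Phi,Y^t)$ given $X^t$, this equals $\sum_t I(\Phi;Y_t|X^n,Y^{t-1})=I(\Phi;Y^n|X^n)$. Written as $I(\Phi\to Y^n|X^n)$, this is Kramer's causally conditioned directed information, with $\Phi$ treated as a constant input. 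The minimax value is then $\max_\pi(I(\Phi\to Y^n|X^n)-\mathbb{E}_\pi D(P_\phi\|\Theta))$.

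\textbf{Main obstacle.} I expect Step 2 and the causality bookkeeping in Step 3 to be the hardest part. Concretely, the issues are the convexity and compactness of the class of causal $Q$, and the fact that the optimal unconstrained joint $Q$ is causal. I would first try the linear-constraint characterization of causal kernels, which settles both.
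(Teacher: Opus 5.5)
Your proposal is correct and follows essentially the same route as the paper's proof. Both replace the maximum over $\Phi$ by a maximum over priors, exchange min and max via Sion's theorem, take the inner minimizer to be the mixture $Q_\pi(y^n\|x^n)=\int\pi(\phi)P_\phi(y^n\|x^n)\,d\phi$ (which factors into posterior-weighted predictive conditionals), and telescope the log-loss into $\sum_t I(\Phi;Y_t|X^t,Y^{t-1})=\sum_t I(\Phi;Y_t|X^n,Y^{t-1})$. Your linear-constraint description of causal kernels, which gives convexity of the feasible set and feasibility of $Q_\pi$, your compensation identity for the inner minimization, and your remark that this sum also equals $I(\Phi;Y^n|X^n)$ by the chain rule all supply details the paper leaves implicit.
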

\begin{IEEEproof}
    See Appendix \ref{subsec:appendix_misspecified_online_setting}.    
\end{IEEEproof}

Theorem~\ref{thm:misspecified_online_supervised_causal_setting} shows that, when the data-generating family $\Phi$ is causal, the minimax regret in the misspecified online supervised setting equals the causal capacity between $\Phi$ and the label sequence $Y^n$ conditioned on the features $X^n$. This causal capacity coincides with the directed information $I(\Phi \to Y^n | X^n)$, together with a misspecification penalty term given by $\mathbb{E}_{\pi(\phi)}\!\left\{D(P_\phi\Vert\Theta)\right\}$.

In contrast to the previous theorem, which assumed that the data-generating family $\Phi$ consists exclusively of causal conditional distributions, the following result extends the analysis to a more general setting where $\Phi$ may include both causal and noncausal conditional models. To simplify the analysis, we restrict $\Theta$ to memoryless conditional models and derive a lower bound on the corresponding minimax regret.
\begin{thm}
\label{thm:misspecified_online_supervised_setting}
Assume a set of conditional data-generating distributions $P_\phi(y^n | x^n) \in \Phi$, a hypothesis class consisting of 
conditional memoryless models $P_\theta(y | x) \in \Theta$, and a given known feature
distribution $P(x^n)$. Then the misspecified minimax regret under the online setting satisfies the lower bound
    \begin{align}
        \begin{aligned}
        \nonumber
            R_{o,n}^*(\Theta,\Phi) \geq \max_{\pi(\phi)}\left( I(Y^n;\Phi|X^n) - \mathbb{E}_{\pi(\phi)}\left\{ D(P_\phi\|\Theta) \right\} \right)
        \end{aligned}
    \end{align}
where, $D(P_\phi\|\Theta) \equiv \min_{P_\theta\in\Theta}D(P_\phi(Y^n|X^n)\|P_\theta(Y^n|X^n))$.
\end{thm}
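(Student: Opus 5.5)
The plan is the standard minimax-to-maximin argument, which only needs the easy direction (weak duality) and so avoids Sion's theorem and any causality assumption on $\Phi$. The learner in the online supervised setting plays a sequential (causal) conditional predictor $Q(y^n\|x^n)=\prod_{t=1}^n Q(y_t|x^t,y^{t-1})$. For any such $Q$ and any prior $\pi(\phi)$ on $\Phi$,
\[
\max_{P_\phi\in\Phi}\max_{P_\theta\in\Theta} R_n(\theta,\phi,Q,P)\;\ge\;\mathbb{E}_{\pi(\phi)}\Big\{\max_{P_\theta\in\Theta}R_n(\theta,\phi,Q,P)\Big\}.
\]
Minimizing over $Q$ and then maximizing over $\pi$ gives $R_{o,n}^*\ge \max_\pi \min_Q \mathbb{E}_{\pi}\{\cdot\}$. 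It then suffices to show that for each fixed $\pi$ the inner minimum is at least $I(Y^n;\Phi|X^n)-\mathbb{E}_\pi\{D(P_\phi\|\Theta)\}$.

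For fixed $\phi$, I would split the regret by adding and subtracting $\log P_\phi(Y^n|X^n)$:
\[
\max_{\theta}\mathbb{E}_{P_\phi P}\Big\{\log\frac{P_\theta(Y^n|X^n)}{Q(Y^n\|X^n)}\Big\}
=\mathbb{E}_{P_\phi P}\Big\{\log\frac{P_\phi(Y^n|X^n)}{Q(Y^n\|X^n)}\Big\}-D(P_\phi\|\Theta).
\]
Here I use that for memoryless $\Theta$ we have $P_\theta(y^n\|x^n)=\prod_t P_\theta(y_t|x_t)=P_\theta(y^n|x^n)$, so the hypothesis term coincides with the non-causal conditional and the penalty is exactly $D(P_\phi\|\Theta)$ as defined in the statement. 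Averaging over $\pi$, the first term becomes $\sum_{x^n}P(x^n)\sum_{y^n}\int\pi(\phi)P_\phi(y^n|x^n)\log\frac{P_\phi(y^n|x^n)}{Q(y^n\|x^n)}\,d\phi$.

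Next I would decompose this first term with the mixture $M(y^n|x^n)=\int\pi(\phi)P_\phi(y^n|x^n)\,d\phi$. It equals
\[
I(Y^n;\Phi|X^n)+\sum_{x^n}P(x^n)\,D\big(M(\cdot|x^n)\,\|\,Q(\cdot\|x^n)\big),
\]
where the mutual information is taken under the joint law $P(x^n)\pi(\phi)P_\phi(y^n|x^n)$ and uses that $X^n$ is independent of $\Phi$. For each fixed $x^n$, $Q(\cdot\|x^n)$ is a valid probability distribution on $\mathcal{Y}^n$, since the product of sequential conditionals sums to one. Hence the KL term is nonnegative, and the minimum over $Q$ is at least $I(Y^n;\Phi|X^n)-\mathbb{E}_\pi\{D(P_\phi\|\Theta)\}$, which gives the claimed lower bound.

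The main subtlety, and the reason this is only a lower bound, is that $M(\cdot|x^n)$ is generally not causally factorizable: $Q$ cannot attain $M$ when $\Phi$ contains non-causal models, so equality fails. Checking that $Q(\cdot\|x^n)$ is normalized for every $x^n$, which makes the divergence nonnegative, is the step I would verify carefully. I would also verify the memoryless reduction of $\Theta$, since that reduction is what makes the penalty match the statement's definition.
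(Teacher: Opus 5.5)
Your proposal is correct and follows essentially the paper's argument: weak duality ($\min_Q\max_\phi \ge \max_\pi\min_Q$), then the mixture/Gibbs-inequality step that yields $I(Y^n;\Phi|X^n)-\mathbb{E}_{\pi(\phi)}\{D(P_\phi\|\Theta)\}$. The only difference is cosmetic. The paper first relaxes the causal predictor $Q_c(y^n\|x^n)$ to an arbitrary conditional $Q(y^n|x^n)$ and identifies the mixture as the inner minimizer. You instead keep $Q$ causal and absorb that relaxation into the nonnegativity of $D\big(M(\cdot|x^n)\,\|\,Q(\cdot\|x^n)\big)$, which also makes explicit that only weak duality, not Sion's theorem, is needed.
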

\begin{IEEEproof}
    See Appendix \ref{subsec:appendix_misspecified_online_setting}.    
\end{IEEEproof}

Theorem \ref{thm:misspecified_online_supervised_setting} demonstrates that under causal constraints in online learning, the predictor is restricted to base its prediction at time $t$ only on past observations $(x^{t},y^{t-1})$. If the true distributions $P_\phi$ are non-causal, such a predictor cannot match the optimal non-causal mixture achieving the lower bound, leading to additional regret beyond the mutual-information term.
The quantity $I(Y^n ; \Phi | X^n)$ reflects the inherent uncertainty about the parameter $\phi$ given the observed data, while $\mathbb{E}_{\pi}\{D(P_\phi \Vert \Theta)\}$ captures the additional penalty due to misspecification, representing the discrepancy between the true model family $\Phi$ and the hypothesis class $\Theta$.

\begin{cor}
\label{cor:supervised_misspecified_regret_online_setting}
If the families $\Phi$ and $\Theta$ consist of memoryless conditional distributions, then the minimax regret equals:
\begin{align}
    \begin{aligned}
\nonumber
    R_{o,n}^*(\Theta,\Phi) &= \max_{\pi(\phi)} \left( I(Y^n;\Phi|X^n) - \mathbb{E}_{\pi}\{ D(P_\phi \Vert \Theta) \} \right)
    \\ &=\max_{\pi(\phi)} \left( I(\Phi\to Y^n|X^n) - \mathbb{E}_{\pi}\{ D(P_\phi \Vert \Theta) \} \right).
    \end{aligned}
\end{align}
\end{cor}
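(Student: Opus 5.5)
The plan is to obtain the corollary by sandwiching the minimax regret between the lower bound of Theorem~\ref{thm:misspecified_online_supervised_setting} and the exact characterization of Theorem~\ref{thm:misspecified_online_supervised_causal_setting}. The key observation is that a memoryless conditional model is automatically causal. Write $P_\phi(y^n|x^n)=\prod_{t=1}^n P_\phi(y_t|x_t)$. Then the conditional of $y_t$ given $(x^n,y^{t-1})$ depends only on $x_t$, and in particular not on future features $x_{t+1}^n$. Hence $P_\phi(y^n|x^n)=P_\phi(y^n\|x^n)$ for every $\phi\in\Phi$, and likewise for every $\theta\in\Theta$. Both families therefore satisfy the hypotheses of Theorem~\ref{thm:misspecified_online_supervised_causal_setting}, taking the feature sequence i.i.d.\ with known $P(x)$, as assumed there. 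That theorem gives the second equality:
\[
R_{o,n}^*(\Theta,\Phi)=\max_{\pi(\phi)}\big(I(\Phi\to Y^n|X^n)-\mathbb{E}_\pi\{D(P_\phi\|\Theta)\}\big).
\]
The misspecification penalties in the two theorems coincide, because causal and joint conditionals agree for these families.

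It remains to show $I(\Phi\to Y^n|X^n)=I(Y^n;\Phi|X^n)$ for every prior $\pi(\phi)$ when $\Phi$ is memoryless. This can be done termwise via the chain rule. We have $I(Y^n;\Phi|X^n)=\sum_t I(Y_t;\Phi|Y^{t-1},X^n)$, while directed information is $\sum_t I(Y_t;\Phi|Y^{t-1},X^t)$. Under the joint law $\pi(\phi)P(x^n)P_\phi(y^n|x^n)$, with independent features and memoryless models, the future features $X_{t+1}^n$ are independent of $(\Phi,Y^t)$ given $X^t$. Indeed, integrating out $y_{t+1}^n$ leaves $\pi(\phi)P(x^n)\prod_{s\le t}P_\phi(y_s|x_s)$. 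Consequently each conditional density of $Y_t$ given $(\Phi,Y^{t-1},X^n)$ or given $(Y^{t-1},X^n)$ reduces to the corresponding one conditioned on $X^t$, and the summands agree. Equivalently, $Q(y^n|x^n)=\int\pi(\phi)P_\phi(y^n|x^n)\,d\phi$ is itself causal. Its sequential factorization is then $Q(y^n\|x^n)$, so the mixture predictor of Theorem~\ref{thm:misspecified_online_supervised_causal_setting} coincides with the non-causal mixture attaining the mutual-information lower bound.

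Combining the two steps, the lower bound of Theorem~\ref{thm:misspecified_online_supervised_setting} equals the exact value given by Theorem~\ref{thm:misspecified_online_supervised_causal_setting}, which yields both equalities. The main obstacle is bookkeeping over the assumptions on the features. Theorem~\ref{thm:misspecified_online_supervised_setting} allows a general known $P(x^n)$, while Theorem~\ref{thm:misspecified_online_supervised_causal_setting} requires i.i.d.\ features independent of $\phi$. I would state the corollary under the latter, or check that the only property needed, namely that $X_{t+1}^n$ is independent of $(\Phi,Y^t)$ given $X^t$, holds whenever $P(x^n)$ does not depend on $\phi$ and the models are memoryless. If it does, the argument extends to any feature law independent of $\Phi$.
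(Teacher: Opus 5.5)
Your proposal is correct and follows essentially the same route as the paper: memoryless models are causal, so the mixture $\int\pi(\phi)P_\phi(y^n|x^n)\,d\phi$ is itself a causal predictor. This makes the lower bound of Theorem~\ref{thm:misspecified_online_supervised_setting} tight and puts you within the hypotheses of Theorem~\ref{thm:misspecified_online_supervised_causal_setting}. Your explicit termwise check that $I(\Phi\to Y^n|X^n)=I(Y^n;\Phi|X^n)$ via $X_{t+1}^n \perp (\Phi,Y^t)\mid X^t$, and your remark that this only needs the feature law to be independent of $\Phi$ rather than i.i.d., fill in details the paper's short proof leaves implicit.
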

\begin{IEEEproof}
    See Appendix \ref{subsec:appendix_misspecified_online_setting}.
\end{IEEEproof}
Corollary~\ref{cor:supervised_misspecified_regret_online_setting} shows that, when $\Phi$ is restricted to memoryless conditional distributions, the lower bound established in Theorem~\ref{thm:misspecified_online_supervised_setting} becomes tight. In this case, the conditions of Theorem~\ref{thm:misspecified_online_supervised_causal_setting} are also satisfied, and clearly the directed information and mutual information coincide.

\subsection{Batch Setting}
\label{subsec:misspecified_batch_setting}
In this Section, we present a detailed analysis of the misspecified batch setting. We begin with the unsupervised case and derive a closed-form expression for the minimax regret and the corresponding optimal universal predictor. We then establish tight bounds on the minimax regret relative to the well-specified setting, analogous to Theorem~\ref{thm:misspecified_online_regret_bound} (\cite{online21misspecified}, Theorem 4). In addition, since both the minimax regret and its optimal predictor are 
intractable to evaluate analytically in this setting, we develop an extension of the Arimoto-Blahut algorithm and illustrate our results numerically for the binary Bernoulli model. Finally, we provide preliminary results for the supervised misspecified batch setting.

Our first main result for universal misspecified batch learning is an analytical formulation of the minimax regret, stated as follows:
\begin{thm}\label{ThmFirst}
    The minimax regret of the universal misspecified batch learning setting is given by:
    \begin{align}									
        \begin{aligned}		
            \label{thmRegret}	            
        F_{b,n}(\Theta,\Phi) = \max_{\pi({\phi})}  \left( I(Y_n;\Phi|Y^{n-1}) - \mathbb{E}_{\pi(\phi)} \{  D({{P_{\phi}}\|\Theta}) \} \right)
    \end{aligned}
    \end{align}
    where the divergence term is defined as:
    \begin{align}
        \begin{aligned}
            \nonumber
            D(P_\phi \| \Theta) \equiv \min_{P_\theta \in \Theta} D(P_\phi \| P_\theta)
        \end{aligned}
    \end{align}    
    and the universal distribution for a given $\pi(\phi)$ is given by:
    \begin{align}
        \begin{aligned}
        \nonumber
            Q_{\pi}({y_n|y^{n-1}}) = \frac{\int_{\phi}\pi(\phi)P_{\phi}(y^{n})  \,d\phi}{\int_{\phi}\pi(\phi)P_{\phi}(y^{n-1})  \,d\phi}.
        \end{aligned}
    \end{align}		
\end{thm}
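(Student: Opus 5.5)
The plan mirrors the argument behind Theorem~\ref{thm:online_misspecified_regret}, applied to the conditional (next-outcome) law. Throughout, the divergences are read in the batch sense: $D(P_\phi\|\Theta)$ is the minimal conditional divergence $\min_{\theta}\mathbb{E}_{P_\phi}\{\log P_\phi(Y_n|Y^{n-1})/P_\theta(Y_n|Y^{n-1})\}$, and $\theta^*$ is the minimizer.

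\textbf{Step 1: decompose the regret.} For fixed $\phi$ and a predictor $Q(\cdot|y^{n-1})$, I insert $P_\phi(Y_n|Y^{n-1})$ into the log-ratio. This gives
\[
\max_{\theta}R_n(\theta,\phi,Q)=D\big(P_\phi(Y_n|Y^{n-1})\,\big\|\,Q(Y_n|Y^{n-1})\,\big|\,P_\phi(Y^{n-1})\big)-D(P_\phi\|\Theta),
\]
where the first term is a conditional divergence averaged over $Y^{n-1}\sim P_\phi$. This is exactly the batch analogue of the second line of \eqref{eq:misspecified_minimax_regret_definition}. Note that the penalty term does not depend on $Q$.

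\textbf{Step 2: randomize and swap.} The maximum over $\phi\in\Phi$ equals the maximum over priors $\pi(\phi)$ of the $\pi$-average of the Step~1 expression, since that average is linear in $\pi$. As a function of $(Q,\pi)$ the objective is convex in $Q$ and linear (hence concave) in $\pi$. Here $Q$ ranges over the convex set of families of conditional distributions $\{Q(\cdot|y^{n-1})\}_{y^{n-1}}$. I then invoke Sion's minimax theorem~\cite{sion1958general} to exchange $\min_Q$ and $\max_\pi$.

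\textbf{Step 3: solve the inner problem.} For fixed $\pi$, the objective in $Q$ separates over $y^{n-1}$. Each $y^{n-1}$ contributes a weight $\int\pi(\phi)P_\phi(y^{n-1})\,d\phi$ times a cross-entropy
\[
-\sum_{y_n}\Pr(y_n|y^{n-1})\log Q(y_n|y^{n-1}),
\]
with
\[
\Pr(y_n|y^{n-1})=\frac{\int\pi(\phi)P_\phi(y^n)\,d\phi}{\int\pi(\phi)P_\phi(y^{n-1})\,d\phi},
\]
plus terms that do not depend on $Q$. By Gibbs' inequality, the minimizer is $Q_\pi$ as stated in the theorem. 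Substituting $Q_\pi$ back in, the averaged divergence becomes
\[
\mathbb{E}_{\pi}\mathbb{E}_{P_\phi}\log\frac{P_\phi(Y_n|Y^{n-1})}{Q_\pi(Y_n|Y^{n-1})}=I(Y_n;\Phi|Y^{n-1}).
\]
Combining this with the $\pi$-average of the penalty yields \eqref{thmRegret}. Moreover, the maximizing $\pi$ produces the minimax predictor $Q_\pi$.

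\textbf{Main obstacle.} The hard step is justifying Sion's theorem in Step~2, since it needs compactness of one of the two sets and semicontinuity of the objective. I would first handle finite $\mathcal{Y}$. In that case, for each $y^{n-1}$ the set of conditionals $Q(\cdot|y^{n-1})$ is a simplex. I would restrict it to the compact sub-simplex $Q\ge\delta$, argue lower semicontinuity of the objective in $Q$, and then let $\delta\to0$. The priors on $\Phi$ are handled either through weak compactness when $\Phi$ is compact, or through finite supports followed by a limiting argument. The general case would be obtained by the same approximation route used for the online result in~\cite{robust21inference}.
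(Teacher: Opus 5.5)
Your proposal is correct and follows the paper's proof essentially step for step. Both arguments rewrite the regret as a conditional divergence minus the $Q$-independent penalty $D(P_\phi\|\Theta)$, pass to priors $\pi(\phi)$, exchange min and max via Sion's theorem, and identify the inner minimizer as $Q_\pi$, so that the first term becomes $I(Y_n;\Phi|Y^{n-1})$. The only differences are minor: you obtain the minimizer by Gibbs' inequality separately for each $y^{n-1}$ instead of the paper's Lagrangian stationarity condition, and you explicitly address the compactness and semicontinuity hypotheses of Sion's theorem, which the paper invokes without checking.
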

\begin{IEEEproof}
    See Appendix \ref{subsec:appendix_misspecified_batch_setting}.
\end{IEEEproof}    

Theorem \ref{ThmFirst} shows that the minimax regret in misspecified batch learning can be interpreted as a constrained version of the conditional capacity, that is, as a constrained variant of the batch learning regret in the classical well-specified stochastic setting \cite{batch18learning}. Moreover, the structure of this result aligns with that of the misspecified online learning minimax regret, as shown in Theorem \ref{thm:online_misspecified_regret} (\cite{robust21inference}, Theorem 2) and in \cite{online21misspecified}. In both settings the form of the regret is a combination of two terms: 
a mutual-information term between the samples and the data-generating distribution set $\Phi$, given by $I(Y_n;\Phi|Y^{n-1})$ in the batch setting, and an additional penalty term, denoted by $\mathbb{E}_{\pi(\phi)} \{  D({{P_{\phi}}\|\Theta}) \}$.
This penalty quantifies the mismatch between the hypothesis class $\Theta$ and the data-generating set $\Phi$, measured by the closest projected distribution from the set $\Phi$ onto the set of hypotheses $\Theta$, in the sense of KL divergence.

One might mistakenly conjecture that the regret is approximately equal to the conditional capacity between the data samples and the family of data-generating distributions $\Phi$. However, this is not the case, since the optimal prior distribution $\pi(\phi)$ is chosen to maximize the difference between these two terms.

To illustrate this, consider the following extreme example: let $\Phi$ be the class of $d$-parameter multinomial distributions over an alphabet of size $d+1$, while $\Theta$ consists of $d'$-parameter multinomial distributions over an alphabet of size $d'+1$, where $d' < d$. In this setting, $D(P_{\phi}\|P_{\theta}) = 0$ whenever $P_{\phi} \in \Theta$. However, if $P_{\phi} \notin \Theta$, then there exist symbols assigned positive probability under $P_{\phi}$ but zero probability in all $P_{\theta} \in \Theta$, which implies $D(P_{\phi}\|\Theta) = \infty$.

Consequently, to maximize (\ref{thmRegret}), the prior $\pi(\phi)$ must assign zero mass outside the hypothesis class $\Theta$, i.e., $\pi(\phi) = 0$ for all $\phi \notin \Theta$. In other words, in this example the mixture distribution is effectively supported only on $\Theta$. As a result, the regret coincides with the conditional capacity of $\Theta$, which, according to \cite{bernsten04polynomials}, satisfies $C_{c,n}(\Theta) = \frac{d'}{2n} + o(n^{-1})$, matching the batch learning minimax regeret in the well-specified stochastic setting where $\Phi \equiv \Theta$.

Moreover, even in cases where $D(P_{\phi}\|\Theta) < \infty$, we show in Theorem \ref{Thm2} that the optimizing prior $\pi(\phi)$ still concentrates most of its mass on $\Theta$, rather than on the full family $\Phi$. Thus, the effective complexity of the problem is governed by the hypothesis class $\Theta$, rather than by the larger set of data-generating distributions $\Phi$. This behavior is consistent with the misspecified online setting, as shown in Theorem \ref{thm:misspecified_online_regret_bound} (\cite{online21misspecified}, Theorem 4), where the complexity is dictated similarly by the hypothesis class rather than the full model family.

In addition, Theorem \ref{ThmFirst} shows that the universal distribution $Q$ is given by a mixture over the set of the data-generating distributions, consistent with all known unsupervised universal learning settings. This includes online learning in both the classical well-specified stochastic and deterministic individual sequence settings \cite{universal98prediction}, the misspecified setting \cite{robust21inference}\cite{online21misspecified}, and batch learning in the well-specified stochastic setting \cite{batch18learning}.

Our next main contribution in this setting is to lower bound the minimax regret by the conditional capacity of the distribution class $\Theta$. Moreover, we upper bound the regret by approximately the conditional capacity of a slight enlargement of $\Theta$. 
The proof of this upper bound relies on the following lemma, which bounds
\[
J(\pi) \equiv I(Y_n;\Phi|Y^{n-1}) - \mathbb{E}_{\pi}\{D(P_\phi\|\Theta)\}.
\]
The argument is similar to that used in the proof of Lemma 11 in \cite{online21misspecified}.

\begin{lem}
    \label{lemma1}
    For any $\pi_{0}(\phi)$, $\pi_{1}(\phi)$ and $\lambda \in [0,1]$ 
    \begin{align}
        \begin{aligned}
            J(\lambda \pi_1 + (1-\lambda) \pi_{0}) \leq \lambda J(\pi_1) + (1-\lambda) J(\pi_{0}) + h(\lambda)
        \end{aligned}
    \end{align}
    where $h(\lambda)$ is the binary entropy of a Bernoulli distribution $Ber(\lambda)$.
\end{lem}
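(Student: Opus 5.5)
The penalty term $\mathbb{E}_{\pi}\{D(P_\phi\|\Theta)\}$ is linear in the prior, so it splits exactly under mixing. The whole task therefore reduces to showing that the conditional mutual information $I(Y_n;\Phi|Y^{n-1})$, viewed as a function of the prior, is concave up to an additive $h(\lambda)$. Writing $\pi_\lambda = \lambda\pi_1+(1-\lambda)\pi_0$, it suffices to prove
\begin{align}
\nonumber
I_{\pi_\lambda}(Y_n;\Phi|Y^{n-1}) \le \lambda I_{\pi_1}(Y_n;\Phi|Y^{n-1}) + (1-\lambda) I_{\pi_0}(Y_n;\Phi|Y^{n-1}) + h(\lambda),
\end{align}
since subtracting $\mathbb{E}_{\pi_\lambda}\{D(P_\phi\|\Theta)\} = \lambda\mathbb{E}_{\pi_1}\{\cdot\}+(1-\lambda)\mathbb{E}_{\pi_0}\{\cdot\}$ from both sides gives the claim.

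To prove this, I will introduce an auxiliary selector $B\sim Ber(\lambda)$ and generate $\Phi\sim\pi_B$, then $Y^n\sim P_\Phi$. The resulting joint law of $(\Phi,Y^n)$ is exactly the one induced by $\pi_\lambda$. In this construction $B - \Phi - Y^n$ forms a Markov chain, so $Y^n$ is independent of $B$ given $\Phi$.

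The chain rule then gives
\[
I(Y_n;\Phi|Y^{n-1}) \le I(Y_n;\Phi,B|Y^{n-1}) = I(Y_n;B|Y^{n-1}) + I(Y_n;\Phi|Y^{n-1},B).
\]
Because of the Markov property, the first inequality is in fact an equality, but the inequality is all that is needed. The last term equals $\lambda I_{\pi_1}(Y_n;\Phi|Y^{n-1}) + (1-\lambda) I_{\pi_0}(Y_n;\Phi|Y^{n-1})$. This is because conditioning on $B=b$ makes the prior $\pi_b$, and the whole joint law of $(\Phi,Y^n)$ given $B=b$ is the one induced by $\pi_b$; this holds even after $Y^{n-1}$ is also conditioned on.

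The remaining term is bounded by $I(Y_n;B|Y^{n-1}) \le H(B|Y^{n-1}) \le H(B) = h(\lambda)$, which completes the argument.

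The main point needing care is the conditioning on $Y^{n-1}$. One must check that, given $B=b$, the conditional joint law of $(\Phi, Y^{n-1}, Y_n)$ coincides with the law induced by prior $\pi_b$. This follows directly from the hierarchical construction. Everything else is standard.

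For continuous $\Phi$ or $\mathcal{Y}$, the bound $I(Y_n;B|Y^{n-1})\le H(B)$ still holds because $B$ is discrete. If some term is infinite, the inequality is read in the extended reals, and this is consistent with $J$ being defined wherever the relevant quantities are finite.
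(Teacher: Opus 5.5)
Your proposal is correct and follows essentially the same route as the paper. Both use the Bernoulli selector $B$ with $B \to \Phi \to Y^n$, the linearity of the penalty term in the prior, the chain rule $I(Y_n;\Phi,B|Y^{n-1}) = I(Y_n;\Phi|B,Y^{n-1}) + I(Y_n;B|Y^{n-1})$, and the bound $I(Y_n;B|Y^{n-1}) \le H(B) = h(\lambda)$; the only cosmetic difference is that you use $I(Y_n;\Phi|Y^{n-1}) \le I(Y_n;\Phi,B|Y^{n-1})$ where the paper invokes the Markov property to get equality, and, as you note, the equality is not needed.
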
	
\begin{IEEEproof}
    See Appendix \ref{subsec:appendix_misspecified_batch_setting}.
\end{IEEEproof}

 We are now ready to state the following Theorem which bounds the minimax regret in terms of the conditional capacities of the set $\Theta$ and a slightly larger set:
\begin{thm}\label{Thm2}
    Suppose $\Phi$ and $\Theta$ are sets of distributions s.t. $C_{c,n}(\Phi) = \tau_n \to 0$ and $\Theta \subseteq \Phi$. Then for every $\epsilon_n \gg \tau_n$ we have
    \begin{align}
        \begin{aligned}
            \label{RegretBounds}
            C_{c,n}(\Theta) \leq F_{b,n}(\Theta,\Phi) \leq C_{c,n}(\Theta_{{\epsilon}_n}) + o(1)
        \end{aligned}
    \end{align}		
    where $\Theta_{\epsilon} \equiv \{P_{\phi} \in \Phi: D(P_\phi||\Theta) < \epsilon \}$.
\end{thm}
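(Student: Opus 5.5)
The plan rests on the variational characterization of Theorem~\ref{ThmFirst}, $F_{b,n}(\Theta,\Phi)=\max_{\pi} J(\pi)$ with $J(\pi)=I(Y_n;\Phi|Y^{n-1})-\mathbb{E}_{\pi}\{D(P_\phi\|\Theta)\}$. Both bounds will be read off from it by restricting or decomposing the prior $\pi$.

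\emph{Lower bound.} Take any prior $\pi$ supported on $\Theta\subseteq\Phi$. For such a prior $D(P_\phi\|\Theta)=0$ $\pi$-a.s., so $J(\pi)=I(Y_n;\Theta|Y^{n-1})$. Maximizing over these priors gives $F_{b,n}(\Theta,\Phi)\ge \max_{\pi\text{ on }\Theta} I(Y_n;\Theta|Y^{n-1})=C_{c,n}(\Theta)$.

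\emph{Upper bound.} Fix an arbitrary prior $\pi$ on $\Phi$ and set $\lambda=\pi(\Phi\setminus\Theta_{\epsilon_n})$. If $0<\lambda<1$, write $\pi=\lambda\pi_1+(1-\lambda)\pi_0$, where $\pi_0$ and $\pi_1$ are the normalized restrictions of $\pi$ to $\Theta_{\epsilon_n}$ and to its complement. Lemma~\ref{lemma1} then gives
\[
J(\pi)\le \lambda J(\pi_1)+(1-\lambda)J(\pi_0)+h(\lambda).
\]
The two pieces are bounded separately.
\begin{itemize}
\item Since the divergence penalty is nonnegative, $J(\pi_0)\le I_{\pi_0}(Y_n;\Phi|Y^{n-1})\le C_{c,n}(\Theta_{\epsilon_n})$.
\item Since $\pi_1$ lives where $D(P_\phi\|\Theta)\ge\epsilon_n$, we get $J(\pi_1)\le C_{c,n}(\Phi)-\epsilon_n=\tau_n-\epsilon_n$.
\end{itemize}
Combining,
\[
J(\pi)\le C_{c,n}(\Theta_{\epsilon_n})+\lambda(\tau_n-\epsilon_n)+h(\lambda).
\]
The cases $\lambda\in\{0,1\}$ are covered by the same bound, with $h(\lambda)=0$.

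It remains to show $\sup_{\lambda\in[0,1]}\{h(\lambda)-\lambda(\epsilon_n-\tau_n)\}=o(1)$. The function $h(\lambda)-\lambda a$ is maximized at $\lambda=1/(1+e^{a})$, where it equals $\log(1+e^{-a})$. Taking $a=\epsilon_n-\tau_n$, this tends to $0$ exactly when $a\to\infty$.

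\emph{Main obstacle.} Here the scaling is the delicate point. In the online Theorem~\ref{thm:misspecified_online_regret_bound}, the divergence $D$ is over $n$-blocks and grows like $n\epsilon_n$, so the $h(\lambda)$ term is dominated. In the batch formulation the hypothesis $C_{c,n}(\Phi)=\tau_n\to 0$ puts everything at scale $\tau_n$, and $h(\lambda)$ need not be small relative to it. I would handle this in two steps.
\begin{itemize}
\item First, interpret $D(P_\phi\|\Theta)$ in Theorem~\ref{ThmFirst}, as its proof must, as the divergence of the joint law $P_\phi(y^n)$ minus the divergence of $P_\phi(y^{n-1})$ (the conditional divergence of the $n$-th sample). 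For i.i.d.\ families this is a per-sample divergence.
\item Second, if the binary-entropy penalty does not vanish at that scale, replace the crude $h(\lambda)$ by the sharper mixture bound $I_\pi\le \lambda I_{\pi_1}+(1-\lambda)I_{\pi_0}+[I(B;Y^n)-I(B;Y^{n-1})]$, where $B$ indicates the component. The bracketed increment is the conditional information $I(B;Y_n|Y^{n-1})$, which is bounded by $\min\{h(\lambda),\,\lambda\tau_n\text{-type terms}\}$. With this, $\epsilon_n\gg\tau_n$ suffices to make the residual $o(1)$.
\end{itemize}
Taking the supremum over $\pi$ then yields the upper bound $C_{c,n}(\Theta_{\epsilon_n})+o(1)$.
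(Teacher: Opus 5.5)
Your lower bound is fine, and so is your inequality $J(\pi)\le C_{c,n}(\Theta_{\epsilon_n})+\lambda(\tau_n-\epsilon_n)+h(\lambda)$. The upper bound, however, does not close as written. Because you let $\lambda$ range freely over $[0,1]$, you are left with the residual $\log(1+e^{-(\epsilon_n-\tau_n)})$, which tends to $\log 2$ in exactly the regime the theorem targets ($\epsilon_n=n^{\alpha-1}\to 0$).

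The missing idea, which is what the paper uses, is to \emph{control} $\lambda$ rather than optimize over it. Since $F_{b,n}(\Theta,\Phi)\ge C_{c,n}(\Theta)\ge 0$, you may restrict to priors with $J(\pi)\ge 0$ (the paper takes the maximizing prior). For these, $\mathbb{E}_{\pi}\{D(P_\phi\|\Theta)\}\le I(Y_n;\Phi|Y^{n-1})\le\tau_n$, so Markov's inequality gives $\lambda\le\tau_n/\epsilon_n\to 0$. Plugging this into your own inequality gives $J(\pi)\le C_{c,n}(\Theta_{\epsilon_n})+h(\tau_n/\epsilon_n)$ once $\tau_n/\epsilon_n\le 1/2$, which is the claimed $o(1)$. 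This is exactly where $\epsilon_n\gg\tau_n$ is used. The paper instead bounds $J(\pi_1)\le F_{b,n}(\Theta,\Phi)$ and solves $F\le\lambda F+(1-\lambda)F_{b,n}(\Theta,\Theta_{\epsilon})+h(\lambda)$, getting a residual $h(\lambda)/(1-\lambda)$. Your sharper $J(\pi_1)\le\tau_n-\epsilon_n$ avoids that self-referential step, but both routes need $\lambda\to 0$.

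Your proposed repair does not fill the gap as stated. The bound of $I(B;Y_n|Y^{n-1})$ by ``$\lambda\tau_n$-type terms'' is asserted without proof. Reinterpreting $D(P_\phi\|\Theta)$ as a per-sample conditional divergence is correct for i.i.d.\ families but does not touch the issue. What you can prove immediately is $I(B;Y_n|Y^{n-1})\le I(\Phi;Y_n|Y^{n-1})\le\tau_n$, by data processing along $B\to\Phi\to Y^n$. That does close your argument, but it also shows the upper bound is vacuous at the $o(1)$ scale: $F_{b,n}(\Theta,\Phi)\le\max_{\pi}I(Y_n;\Phi|Y^{n-1})=\tau_n$ and $C_{c,n}(\Theta_{\epsilon_n})\ge 0$ hold without any decomposition or any use of $\epsilon_n$. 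Your scale worry is therefore well founded. Neither your argument nor the paper's yields a residual that is $o(\tau_n)$; the paper's is of order $(\tau_n/\epsilon_n)\log(\epsilon_n/\tau_n)\gg\tau_n$ when $\epsilon_n\to 0$. That finer scale is what a non-trivial batch statement would need.
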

\begin{IEEEproof}
    See Appendix \ref{subsec:appendix_misspecified_batch_setting}.
\end{IEEEproof}

%Theorem \ref{Thm2} provides for the batch learning case a similar result to what was shown for the online learning under misspecification in (\cite{online21misspecified}, Theorem 4). 

Interestingly, it should be noted that while $\epsilon_n \gg \tau_n = C_{c,n}(\Phi)$, it does not mean that the conditional capacity of $\Theta_{\epsilon_n}$ is greater than the conditional capacity of $\Phi$. Its true meaning is that only a shell extension to the set $\Theta$, quantified by $\epsilon_n$, of distributions from $\Phi$ affects the minimax regret performance. If in addition $\epsilon_n \to 0$, the set $\Theta_{\epsilon_n}$ is a small extension of $\Theta$, implying that the resulting conditional capacities of $\Theta$ and $\Theta_{\epsilon_n}$ might be close. An illustration of this phenomenon is given in Figure \ref{FigIllustration}.

In many interesting examples we indeed have $\epsilon_n \to 0$ and the conditional capacities coincide for large $n$. Such an example is where the observations come from a distribution in the set $\Phi$ of $d$-parameters multinomial distributions of the form: $(\phi_0,\phi_1,\dots,\phi_d)$, s.t $\sum_{k=0}^{d} \phi_k=1$. In \cite{bernsten04polynomials} it was shown that the minimax regret, which is the conditional capacity of $\Phi$ in the well-specified stochastic setting of batch learning equals to $C_{c,n}(\Phi) = \frac{d}{2n} + o(n^{-1})$. Therefore, by choosing $\epsilon_n = \frac{1}{n^{1-\alpha}}$, for any $0<\alpha<1$, we have both $\epsilon_n \gg C_{c,n}(\Phi)$ and $\epsilon_n \to 0$.    
As noted, this may imply that $C_{c,n}(\Theta_{\epsilon_n}) \to C_{c,n}(\Theta)$ and according to a sandwich argument and (\ref{RegretBounds}), the minimax regret tends to $F_{b,n}(\Theta,\Phi) \to C_{c,n}(\Theta)$, which is the minimax regret of the well-specified stochastic batch learning setting, \cite{batch18learning}. 

To demonstrate this phenomenon, we can choose as an example $d=1$, i,e., the Bernoulli distribution $Ber(\phi)$, where $\phi\in[0,1]$, to be the set of all the data generating distributions $\Phi$, and the set of hypotheses $\Theta$, to be the set of all $Ber(\theta)$, where $\theta\in[a,b]$ s.t. $0 \leq a < b \leq 1$. In this case, assuming the data samples are i.i.d and $\epsilon$ is small enough, it can be verified that $\Theta_{\epsilon}$ is the set of all $Ber(\theta_\epsilon)$, where $\theta_\epsilon \in\left[a- \delta_\epsilon(a),b+\delta_\epsilon(b)\right]$ and $\delta_\epsilon(c) = \sqrt{2{c(1-c){\epsilon}}}$ for $c\in[0,1]$, i.e., a small extension of the set $\Theta$. Note that in this example the  conditional capacity is a continuous function of $\epsilon$ (it is a composition of elementary functions). Thus, by setting $\epsilon = \epsilon_n$ as explained above, we get $C_{c,n}(\Theta_{\epsilon_n}) \to C_{c,n}(\Theta)$, and finally the minimax regret is approximately equal to $F_{b,n}(\Theta,\Phi) \approx C_{c,n}(\Theta)$.

Another, even more extreme example arises in the misspecified batch setting over the GLM. In this case, when the true distribution belongs to the PAC class, namely, the set of all i.i.d. distributions with finite second moment, Mourtada~\cite{mourtada2022improper} showed that the misspecified regret coincides \emph{exactly} with the well-specified conditional redundancy-capacity expression, rather than only admitting an asymptotic approximation. Specifically,
\[
F_{b,n}^{(\mathrm{PAC})}(\Theta,\Phi)
= C_{c,n}(\Theta)
= \frac{d}{2}\log\left(1+\frac{1}{n}\right)
\to \frac{d}{2n}.
\]

These observations mean that under the conditions specified, the minimax regret in the misspecified setting converges to the regret in the case where the data generating distributions are approximately the ``smaller'' set of hypotheses $\Theta$ and not the ``bigger'' set $\Phi$. Another consequence is the fact that the universal distribution $Q$, is approximately a mixture distribution where the prior conditional capacity achieving distribution, $\pi(\phi)$, is concentrated mostly over the set $\Theta$.

%Note that $\epsilon_n \to 0$ does not necessarily imply that $C_{c,n}(\Theta_{\epsilon_n}) \to C_{c,n}(\Theta)$, although such pathological cases are typically rare. An extreme example illustrating this phenomenon was given in \cite[Appendix F.1]{online21misspecified} for the online setting, as discussed in Section \ref{subsec:misspecified_online_setting}.
Note that $\epsilon_n \to 0$ does not necessarily imply that $C_{c,n}(\Theta_{\epsilon_n}) \to C_{c,n}(\Theta)$, although such pathological cases are typically rare. An extreme example illustrating this phenomenon was given in (\cite{online21misspecified}, Appendix F.1) for the online setting, as discussed in Section \ref{subsec:misspecified_online_setting}.

\begin{figure}[ht]        
    \centering
    \includegraphics[width=0.45\textwidth]{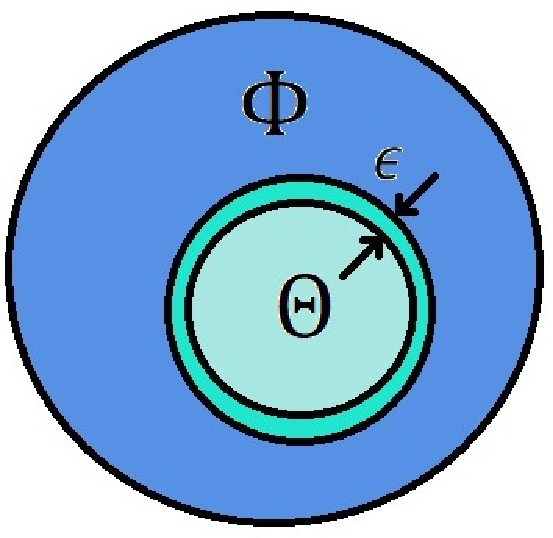}
    \caption{Geometric illustration of Theorems~\ref{thm:misspecified_online_regret_bound} and~\ref{Thm2}. Only an \( \epsilon \)-shell extension of the hypothesis set \( \Theta \) into \( \Phi \) contributes to the minimax regret.}
    \label{FigIllustration}
\end{figure}

\emph{Arimoto-Blahut Algorithm Extension:}\label{Arimoto Blahut Algorithm Extension}
As shown above, the regret can be interpreted as a constrained version of the conditional capacity between $Y_n$ and $\Phi$, where $\pi(\phi)$ serves as the capacity-achieving prior distribution. 
In general, obtaining closed-form expressions for either the capacity-achieving prior or the resulting capacity is analytically intractable.
In classical communication theory and, more recently, in universal prediction theory, several extensions of the Arimoto-Blahut algorithm \cite{blahut72,arimoto72} have been developed to numerically compute capacity-achieving priors and their associated capacities \cite{combined24batch_online,arimoto2004blahut_alg,arimoto2018blahut_alg,
arimoto2013blahut_alg,arimoto2019blahut_alg,robust21inference}. 

Motivated by this line of work, we develop an Arimoto-Blahut type iterative algorithm to numerically evaluate the prior $\pi(\phi)$ and the regret $F_{b,n}(\Theta,\Phi)$ in the misspecified batch learning setting. Corollary~\ref{cor:thmArimotoBlahut} provides upper and lower bounds on the regret, which serve as convergence criteria for the proposed algorithm. Furthermore, the structure of these bounds expressed as differences between two conditional divergences naturally guides the form of the iterative update 
steps, as detailed in the sequel.
\begin{cor}	    
    \label{cor:thmArimotoBlahut}
    The minimax regret of the misspecified batch learning setting holds the following for any $\Phi$, $\Theta$ and $\pi(\phi)$:		
    \begin{align}									
        \begin{aligned}	
            \nonumber
            R_L \leq F_{b,n}(\Theta,\Phi) \leq R_U 
          \end{aligned}
    \end{align}           
    where
    \begin{align}
        \begin{aligned}
            \nonumber
            R_L &\equiv \mathbb{E}_{\pi(\phi)}\left\{ D(P_\phi \| Q_{\pi}) -  D(P_\phi \| \Theta) \right\}, \\
            R_U &\equiv \max_{P_{\phi}}\left( D(P_\phi \| Q_{\pi}) -  D(P_\phi \| \Theta) \right),
        \end{aligned}
    \end{align}    
    and
    \begin{align}
    \nonumber
        Q_\pi(y^n) = \int_{\Phi} \pi(\phi) P_\phi(y^n)\, d\phi.
    \end{align}
\end{cor}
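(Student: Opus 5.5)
Both bounds follow from Theorem~\ref{ThmFirst} combined with the saddle-point structure of the misspecified minimax problem. Throughout, $D(P_\phi\|Q_\pi)$ denotes the batch (conditional) divergence $\mathbb{E}_{P_\phi}\{\log P_\phi(Y_n|Y^{n-1})/Q_\pi(Y_n|Y^{n-1})\}$, and the same convention applies to $D(P_\phi\|\Theta)$. This is consistent with the batch regret definition.

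\emph{Lower bound.} I would start from the identity
\[
I(Y_n;\Phi|Y^{n-1}) = \mathbb{E}_{\pi(\phi)}\{D(P_\phi\|Q_\pi)\},
\]
where $Q_\pi(y_n|y^{n-1})$ is the mixture predictor of Theorem~\ref{ThmFirst}. This holds because the conditional mutual information is the average, over $\phi\sim\pi$, of the conditional divergence between $P_\phi(y_n|y^{n-1})$ and the posterior-predictive mixture. It gives
\[
R_L = I(Y_n;\Phi|Y^{n-1}) - \mathbb{E}_{\pi}\{D(P_\phi\|\Theta)\} = J(\pi) \le \max_{\pi'} J(\pi') = F_{b,n}(\Theta,\Phi),
\]
where the last equality is Theorem~\ref{ThmFirst}.

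\emph{Upper bound.} The definition \eqref{eq:misspecified_minimax_regret_definition} writes $F_{b,n}$ as a minimum over all predictors $Q$ of $\max_{P_\phi}\bigl(D(P_\phi\|Q) - D(P_\phi\|\Theta)\bigr)$; in the batch case the divergences are conditional and the inner maximum over $\theta$ yields the projection term. Plugging in the particular predictor $Q=Q_\pi$ can only increase the value, so $F_{b,n}\le R_U$.

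The one point requiring care is to check that the batch form of \eqref{eq:misspecified_minimax_regret_definition} really does reduce to $D(P_\phi\|Q) - D(P_\phi\|\Theta)$ with conditional divergences. This holds because the $\log P_\phi(Y_n|Y^{n-1})$ terms cancel in the difference. The same fact is presumably used in the appendix proof of Theorem~\ref{ThmFirst}, so I would take it from there.
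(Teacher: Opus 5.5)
Your proposal is correct and follows essentially the same route as the paper. For the lower bound, $R_L = J(\pi) \le \max_{\pi'} J(\pi') = F_{b,n}(\Theta,\Phi)$, using Theorem~\ref{ThmFirst} and the identity $\mathbb{E}_{\pi}\{D(P_\phi\|Q_\pi)\} = I(Y_n;\Phi|Y^{n-1})$ from its proof, which the paper uses implicitly when it writes the regret as a maximum over $\pi$; the upper bound comes from substituting $Q=Q_\pi$ into the outer minimum of the definition. Your explicit insistence on conditional (one-step) divergences is a useful clarification the paper leaves implicit: with joint divergences, $\mathbb{E}_{\pi}\{D(P_\phi\|Q_\pi)\}$ would become $I(Y^n;\Phi)$, and $R_L$ would in general fail to lower-bound the batch regret.
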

\begin{IEEEproof}
    See Appendix \ref{subsec:appendix_misspecified_batch_setting}.
\end{IEEEproof}
Using Corollary~\ref{cor:thmArimotoBlahut}, we extend the Arimoto-Blahut algorithm to the misspecified batch learning setting, as described in Algorithm~\ref{alg:ArimotoBlahutAlg_misspecified}. The inputs to the algorithm are the batch size $n$, an optimization parameter $\lambda$, a convergence accuracy $\epsilon$, the sets $\Phi$ and $\Theta$, and an initial prior distribution $\pi^{(0)}(\phi)$ (a uniform prior over $\Phi$ is a common practical choice). In the initialization step, we compute the corresponding lower and upper bounds, $R_L$ and $R_U$, under the initial prior $\pi^{(0)}(\phi)$. 

An iterative procedure is then applied to the prior $\pi^{(i)}(\phi)$, where $i$ indexes the iteration, until convergence is achieved, i.e., when $R_U - R_L \le \epsilon$. The output of the algorithm is the capacity-achieving prior $\pi(\phi)$, and the regret can be approximated by $F_{b,n}(\Theta,\Phi) \approx \frac{R_L + R_U}{2}$.

Finally, we note that the second divergence term in the algorithm (and in the regret), $D(P_\phi \,\|\, \Theta)$, exponentially lowers the prior mass assigned to distributions $P_\phi$ that lie far from the hypothesis class $\Theta$, as implied by Theorem~\ref{Thm2}.

\begin{algorithm}        
\caption{Arimoto-Blahut Algorithm for Misspecified Batch Learning}
\label{alg:ArimotoBlahutAlg_misspecified}
\begin{algorithmic}
\State \textbf{Input}: \\$n, \lambda, \epsilon, \Phi=\{\phi_m\}_{m=1}^{M_\phi}, \Theta=\{\theta_m\}_{m=1}^{M_\theta}, \pi^{(0)}(\phi)$
\State \textbf{Output}: \\$\pi(\phi)$
\State \textbf{Initialization}:
\State $i \gets 0$
\State {$R_{U}^{(0)} = \max_{\phi} \left( D(P_\phi \| Q_{\pi^{(0)}}) - D(P_\phi \| \Theta) \right)$}   
\State {$R_{L}^{(0)} = \mathbb{E}_{\pi^{(0)}(\phi)}\left\{ D(P_\phi \| Q_{\pi^{(0)}}) - D(P_\phi \| \Theta) \right\}$}
\State \textbf{Loop}:
\begin{algorithmic}
\While {$R^{(i)}_{U} - R^{(i)}_{L} > \epsilon$}         
        \State \textbf{Iterate:}
        \State $\tilde{\pi}^{(i+1)}(\phi_j) = \pi^{(i)}(\phi_j) \cdot 
        e^{\lambda \left(D(P_{\phi_j} \| Q_{\pi^{(i)}}) - D(P_{\phi_j} \| \Theta) \right)}$
        \State $\pi^{(i+1)}(\phi_j) = \frac{\tilde{\pi}^{(i+1)}(\phi_j)}{ \sum_{j'=1}^{M_{\phi}} \tilde{\pi}^{(i+1)}(\phi_{j'})}$        
        \State \textbf{Bounds Update:}
        \State {$R_{U}^{(i+1)} = \max_{\phi} \left( D(P_\phi \| Q_{\pi^{(i+1)}}) - D(P_\phi \| \Theta) \right)$}        
        \State {$R_{L}^{(i+1)} = \mathbb{E}_{\pi^{(i+1)}(\phi)}\left\{ D(P_\phi \| Q_{\pi^{(i+1)}}) - D(P_\phi \| \Theta) \right\} $}
        \State $i \gets i+1$
\EndWhile
\State \textbf{end}
\State \textbf{Return}: \\$\pi(\phi)$
\end{algorithmic}
\end{algorithmic}
\end{algorithm}
To demonstrate the proposed Arimoto-Blahut algorithm extension, we apply the procedure to Bernoulli distributions, where $y \in \{0,1\}$. In this example, $\Phi$ denotes the set of all Bernoulli distributions with parameter $\phi \in [\phi_{\min}, \phi_{\max}]$, and the hypothesis set $\Theta$ is restricted to the interval $\theta \in [a,b] \subset \Phi$, where $0 \le \phi_{\min} \le a \le b \le \phi_{\max} \le 1$. 

The numerical values of the resulting regret for several representative choices of $\Phi$ and $\Theta$ are summarized in Table~\ref{NumericalResultsTable}.

\begin{table}[!ht]
\centering
%\small
\begin{tabular}{||c c c c||}
\hline
$\Phi$ & $\Theta$ & $n$ & Regret \\
\hline\hline

$[a-\delta_n,\,b+\delta_n]$ & $[a-\delta_n,\,b+\delta_n]$ & $10^2$ & $\nicefrac{0.9171}{2n}$ \\

$[0,1]$ & $[a,b]$ & $10^2$ & $\nicefrac{0.8728}{2n}$ \\

$[a,b]$ & $[a,b]$ & $10^2$ & $\nicefrac{0.8710}{2n}$ \\

$[0,1]$ & $[0,1]$ & $10^2$ & $\nicefrac{0.9908}{2n}$ \\

$[0,1]$ & $[0.01,0.99]$ & $10^2$ & $\nicefrac{0.9766}{2n}$ \\

$[0.01,0.99]$ & $[0.01,0.99]$ & $10^2$ & $\nicefrac{0.9763}{2n}$ \\

$[a-\delta_n,\,b+\delta_n]$ & $[a-\delta_n,\,b+\delta_n]$ & $10^3$ & $\nicefrac{0.9837}{2n}$ \\

$[0,1]$ & $[a,b]$ & $10^3$ & $\nicefrac{0.9816}{2n}$ \\

$[a,b]$ & $[a,b]$ & $10^3$ & $\nicefrac{0.9798}{2n}$ \\

$[0,1]$ & $[0,1]$ & $10^3$ & $\nicefrac{1.0027}{2n}$ \\

$[0,1]$ & $[0.01,0.99]$ & $10^3$ & $\nicefrac{0.9970}{2n}$ \\

$[0.01,0.99]$ & $[0.01,0.99]$ & $10^3$ & $\nicefrac{0.9970}{2n}$ \\

\hline
\end{tabular}

\caption{Summary of Arimoto-Blahut numerical results for various $(\Phi,\Theta)$ and sample sizes. Parameters: $a=0.25$, $b=1-a$, and $\delta_n=\sqrt{2a(1-a)\,\epsilon_n}$ with $\epsilon_n = n^{\alpha-1}$ and $\alpha=0.1$.}

\label{NumericalResultsTable}
\end{table}

To compare the minimax regret in the well-specified and misspecified stochastic settings, consider the following three scenarios with $n = 10^3$:
\begin{enumerate}[label=(\alph*)]
    \item \emph{Well-specified stochastic setting:}
    \[
        \Phi = \Theta = [0.25, 0.75].
    \]

    \item \emph{Misspecified stochastic setting:}
    \[
        \Phi = [0,1], \quad \Theta = [0.25, 0.75].
    \]

    \item \emph{Well-specified stochastic setting:}
    \[
        \Phi = \Theta_{\epsilon_n} = [0.25-\delta_n,\, 0.75+\delta_n],
    \]
    where $\epsilon_n = n^{\alpha-1}$ with $\alpha = 0.1$, so that
    \[
        \delta_n
        = \sqrt{2 a(1-a)\,\epsilon_n}
        = \sqrt{2 b(1-b)\,\epsilon_n}
        \approx 0.0274.
    \]
    Note that
    \[
        \Theta_{\epsilon_n}
        = \left\{
            P_\phi \in \Phi : D(P_\phi \,\|\, \Theta) < \epsilon_n
          \right\}
         ~\text{for } n = 10^3.
    \]
\end{enumerate}

As expected from Theorem~\ref{Thm2}, the minimax regrets of the three settings satisfy
\begin{align}
\label{Thm2Demonstration}
    C_{c,n}(\Theta)
    \equiv
    \underbrace{F_{b,n}(\Theta)}_{\frac{0.9798}{2n}}
    <
    \underbrace{F_{b,n}(\Theta, \Phi)}_{\frac{0.9816}{2n}}
    <
    \underbrace{F_{b,n}(\Theta_{\epsilon_n})}_{\frac{0.9837}{2n}}
    \equiv
    C_{c,n}(\Theta_{\epsilon_n}).
\end{align}
A comparable result for $n = 10^2$ and $\delta_n \approx 0.0771$ appears in Table~\ref{NumericalResultsTable}.

Numerically, we also obtain
\[
    F_{b,n}(\Theta,\Phi) - C_{c,n}(\Theta)
    \approx \frac{0.0018}{2n}
\]
for both $n=10^2$ and $n=10^3$. Combined with (\ref{Thm2Demonstration}), this indicates that there exists a sequence $\epsilon_n \to 0$ such that $C_{c,n}(\Theta_{\epsilon_n}) \to C_{c,n}(\Theta)$ and $F_{b,n}(\Theta,\Phi) \approx C_{c,n}(\Theta)$.

A further illustration comes from the similarity of the capacity-achieving prior distributions $\pi(\phi)$ across these examples, shown in Figure~\ref{FigurePriorTheta_025_075_N1000}. In the well-specified case $\Phi = \Theta$, the prior assigns zero mass outside $[0.25, 0.75]$, while in the misspecified case $\Phi = [0,1]$, it decays rapidly outside this interval as expected. 

\begin{figure}[ht]
    \centering
    \includegraphics[width=0.4\textwidth]{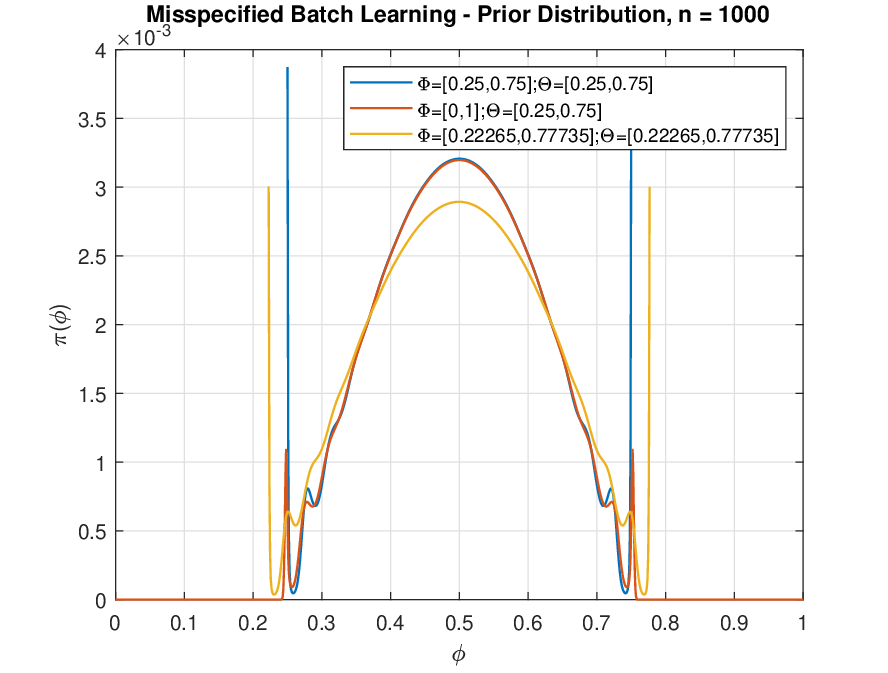}
    \caption{Comparison of the capacity achieving prior distributions in the well-specified stochastic and misspecified settings for cases (a), (b), and (c), with \( n = 10^3 \).}
    \label{FigurePriorTheta_025_075_N1000}
\end{figure}

\emph{add--$\beta$ Factor Analysis:}
Another quantity of interest is the \emph{add--$\beta$} factor, expressed as a function of the empirical distribution
\[
\hat{p} \equiv \frac{\sum_{t=1}^{n-1} y_t}{n-1},
\]
computed from the first $n-1$ samples. Since $\hat{p}$ is the sufficient statistic for the universal predictor $Q_{\pi}(y_n=1 | y^{n-1})$, we may equivalently write
\begin{align}
\begin{aligned}
Q_{\pi}(y_n=1 | y^{n-1})
    &= Q_{\pi}\Big(y_n=1 | \sum_{t=1}^{n-1} y_t\Big) \\
    &= Q_{\pi}\big(y_n=1 | \hat{p}\big) \\
    &= \frac{(n-1)\hat{p} + \beta}{\,n-1 + 2\beta\,}.
\end{aligned}
\end{align}
A straightforward algebraic rearrangement yields
\begin{align}
\beta(\hat{p}) 
    = (n-1)\frac{Q_{\pi}(y_n=1 | \hat{p})-\hat{p}}{1 - 2\,Q_{\pi}(y_n=1 | \hat{p})}.
\end{align}

Figure~\ref{FigureBeta_Theta_001_099_N100} illustrates the resulting empirical \emph{add--$\beta$} factor $\beta(\hat{p})$ for $n=10^2$ in the following scenarios:
\begin{enumerate}[label=(\alph*)]
    \item Misspecified stochastic setting: $\Phi=[0,1]$ and $\Theta = [0.01,0.99]$.
    \item Well-specified stochastic setting: $\Phi=\Theta=[0.01,0.99]$.
    \item Well-specified stochastic setting: $\Phi=\Theta=[0,1]$.
\end{enumerate}

\begin{figure}[ht]
    \centering
    \includegraphics[width=0.4\textwidth]{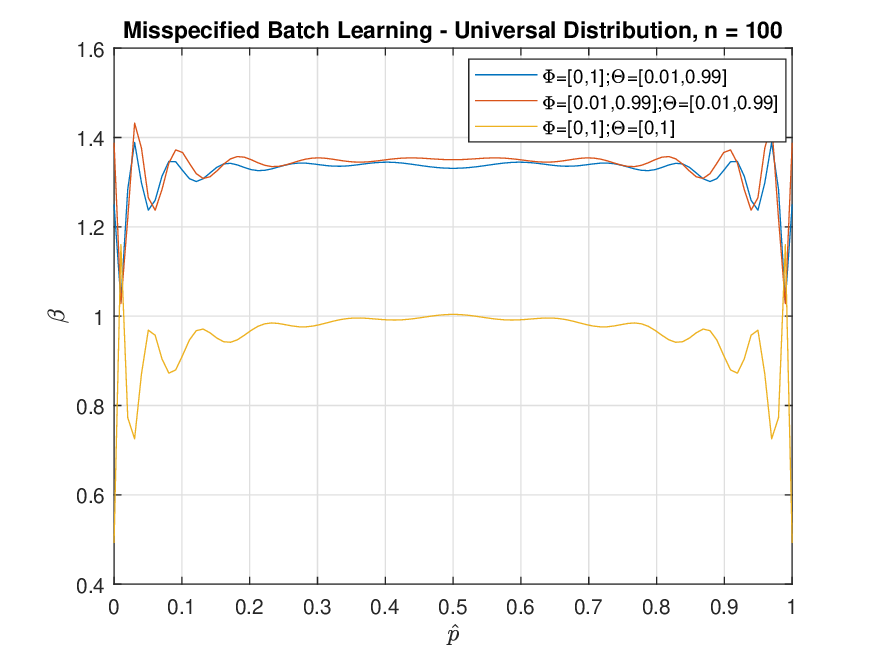}
    \caption{Comparison of the add-\(\beta\) bias factor across settings (a), (b), and (c), for \( n = 10^2 \).}
    \label{FigureBeta_Theta_001_099_N100}
\end{figure}

As shown in the figure, the settings (a) and (b) yield nearly identical \emph{add--$\beta$} factors, both fluctuating around approximately $1.3$. This behavior aligns with Komaki's analysis~\cite{komaki2012minimax}, which demonstrated that for multinomial models in the interior of a $d$-dimensional well-specified parameter space, the constant \emph{add--$\beta$} estimator achieves asymptotically the optimal leading regret term $d/(2n)$ when $\beta = 1 + \sqrt{1/6} \approx 1.4$. For completeness, we note that related properties of the \emph{add--$\beta$} family have also been examined in the batch multinomial setting by Krichevsky and Trofimov~\cite{krichevskiy1998laplace}, and more recently in analyses by Bondaschi and Gastpar~\cite{BondaschiGastpar2024ISIT,BondaschiGastpar2024arXiv}, although these settings differ from the misspecified framework considered here.

In contrast, setting (c) produces a noticeably smaller bias, with \emph{add--$\beta$} factors fluctuating around approximately $1$. This distinction illustrates that the magnitude of the \emph{add--$\beta$} factor is primarily dictated by the class of underlying hypotheses $\Theta$. In particular, smaller \emph{add--$\beta$} values indicate lower model complexity, since simpler classes require less smoothing to achieve minimal regret.

Finally, consider the extreme case $\hat{p} = 0$. Let $\beta(s)$ denote the empirical \emph{add--$\beta$} factor extracted under scenario $s \in \{a,b,c\}$. In the misspecified scenario (a), we obtain $\beta(a) \approx 1.25$. For the well-specified stochastic scenarios (b) and (c), the corresponding values are $\beta(b) \approx 1.38$ and $\beta(c) \approx 0.49$, respectively. These differences are consistent with the theoretical behavior of predictive distributions near the boundary: when the empirical observation resembles a boundary case (i.e., $\hat{p} \approx 0$), a model that excludes boundary parameters, as in (b), must rely on stronger smoothing to mitigate the mismatch, resulting in larger \emph{add--$\beta$} values. In contrast, when the hypothesis class includes the boundary, as in (c), such observations are fully compatible with the model, and considerably less smoothing is required, leading to much smaller \emph{add--$\beta$} factors.

We further note that, in the classical well-specified stochastic online prediction setting with $\Theta = [0,1]$, the optimal universal predictor is the estimator introduced by Krichevsky and Trofimov~\cite{krichevskiy1998laplace}, which corresponds asymptotically to a constant smoothing parameter $\beta = 0.5$. When $\hat{p} = 0$, the batch and online predictors coincide in functional form, and scenario (c) exhibits precisely this behavior, yielding the empirical value $\beta(c) \approx 0.49$ in agreement with this classical result.

\emph{Supervised Batch Learning:}
The above analysis provides a detailed and comprehensive characterization of the misspecified batch setting in the unsupervised case. We now present preliminary results for the misspecified batch setting in the supervised scenario. In this setting, each data feature, denoted by $x_n$, is associated with a corresponding label $y_n$. The objective is to assign a universal conditional distribution to the next label $y_n$, given its associated feature $x_n$ and the training data $(x^{n-1}, y^{n-1})$. We denote this universal predictor by $Q(y_n | x^n, y^{n-1})$.
Throughout this setting, we assume that the features are generated according to a known distribution $P(x^n)$.

Under the above definitions, the minimax regret of the misspecified supervised batch learning setting is given by the following Theorem:
\begin{thm}
\label{thm:supervised_batch_learning}
The minimax regret of the problem defined above is given by:		
\begin{align}									
    \begin{aligned}		
        \nonumber    
        R_{b,n}^*(\Theta,\Phi) &= \max_{\pi({\phi})}  \Big( I(Y_n;\Phi|X^n,Y^{n-1}) 
        \\&- \mathbb{E}_{\pi(\phi)} \{  D({{P_{\phi}}\|\Theta}) \} \Big)
\end{aligned}
\end{align}
where
\begin{align}
    \begin{aligned}
        \nonumber
        D({{P_{\phi}}\|\Theta}) \equiv \min_{P_\theta\in\Theta}{D(P_\phi(Y_n|X^n,Y^{n-1}) \| P_\theta(Y_n|X^n,Y^{n-1}))}
    \end{aligned}
\end{align}
and the universal distribution for a given $\pi(\phi)$ is given by:
\begin{align}
    \begin{aligned}
        \nonumber
        Q_{\pi}({y_n|x^n,y^{n-1}}) = \frac{\int_{\phi}\pi(\phi)P_{\phi}(y^{n}|x^n)  \,d\phi}{\int_{\phi}\pi(\phi)P_{\phi}(y^{n-1}|x^{n})  \,d\phi}.
    \end{aligned}
\end{align}
\end{thm}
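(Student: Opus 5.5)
The plan is to mirror the proof of Theorem~\ref{ThmFirst}, carrying the features $X^n$, with known law $P(x^n)$, through every step as a conditioning variable. First I will rewrite the objective for a fixed predictor $Q(y_n|x^n,y^{n-1})$ and a fixed $P_\phi$. Maximizing over $P_\theta\in\Theta$ inside $R_n(\theta,\phi,Q,P)$ lets me add and subtract $\log P_\phi(y_n|x^n,y^{n-1})$, which gives
\begin{align}
\nonumber
\max_{P_\theta\in\Theta} R_n(\theta,\phi,Q,P) = D\big(P_\phi(Y_n|X^n,Y^{n-1})\,\|\,Q(Y_n|X^n,Y^{n-1})\big) - D(P_\phi\|\Theta),
\end{align}
where both divergences are conditional and are averaged over $P(x^n)P_\phi(y^{n-1}|x^n)$. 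The divergence $D(P_\phi\|\Theta)$ is exactly the one in the statement, and it does not depend on $Q$.

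Next I will replace the maximum over $P_\phi\in\Phi$ by a maximum over priors $\pi(\phi)$. This does not change the value, because the objective is linear in $\pi$ and so the maximum over priors is attained at a point mass. The objective becomes
\begin{align}
\nonumber
G(Q,\pi)=\mathbb{E}_{\pi}\big\{D(P_\phi\|Q)\big\}-\mathbb{E}_{\pi}\big\{D(P_\phi\|\Theta)\big\}.
\end{align}
$G$ is linear, hence concave, in $\pi$. It is convex in $Q$, since for each fixed $(x^n,y^{n-1})$ the map $Q(\cdot|x^n,y^{n-1})$ enters only through $-\log$, and that map ranges over a product of simplices, which is convex and compact for finite alphabets. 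I will then invoke Sion's minimax theorem~\cite{sion1958general} to swap the order and write $\min_Q\max_\pi G=\max_\pi\min_Q G$.

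For a fixed $\pi$, the inner minimization over $Q$ follows from the standard decomposition
\begin{align}
\nonumber
\mathbb{E}_\pi\{D(P_\phi\|Q)\}=\mathbb{E}_\pi\{D(P_\phi\|Q_\pi)\}+D(Q_\pi\|Q),
\end{align}
where both terms are conditioned on $(X^n,Y^{n-1})$ and the second is weighted by the mixture marginal $\int\pi(\phi)P(x^n)P_\phi(y^{n-1}|x^n)\,d\phi$. The minimizer is therefore the posterior-predictive mixture $Q_\pi(y_n|x^n,y^{n-1})$ written in the statement, obtained as the ratio of $\int\pi P_\phi(y^n|x^n)$ to $\int\pi P_\phi(y^{n-1}|x^n)$. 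Substituting it back gives
\begin{align}
\nonumber
\mathbb{E}_\pi\{D(P_\phi\|Q_\pi)\}=I(Y_n;\Phi|X^n,Y^{n-1}),
\end{align}
and this yields the claimed formula.

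Two points need care, and I expect the second to be the main obstacle.
\begin{itemize}
\item The denominator of $Q_\pi$ must be the posterior normalizer given $x^n$. This works because $P(x^n)$ is known and does not depend on $\phi$, so it cancels in the posterior over $\phi$. Two consistency requirements then follow: $P_\phi(y^{n-1}|x^n)$ must be taken conditioned on the full $x^n$, including $x_n$, and the mutual information must be the conditional one given $(X^n,Y^{n-1})$ under the joint law $\pi(\phi)P(x^n)P_\phi(y^n|x^n)$.
\item Sion's theorem needs compactness and semicontinuity. For infinite alphabets or infinite $\Phi$ I would either restrict to finite alphabets and discretize, or argue the swap directly. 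The direct argument uses the redundancy-capacity saddle point, as in the proof of Theorem~\ref{ThmFirst}: the inequality $\max\min\le\min\max$ is trivial, and for the reverse inequality one takes $Q=Q_{\pi^*}$ and checks the saddle property.
\end{itemize}
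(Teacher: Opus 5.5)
Your proposal is correct and follows the paper's proof essentially step for step. Both arguments rewrite the regret as a conditional divergence minus the projection penalty, lift the maximization to priors, swap the order via Sion's theorem, obtain the posterior-predictive mixture as the inner minimizer, and identify $I(Y_n;\Phi|X^n,Y^{n-1})$. The one local difference is that you solve the inner minimization with the decomposition $\mathbb{E}_\pi\{D(P_\phi\|Q)\}=\mathbb{E}_\pi\{D(P_\phi\|Q_\pi)\}+D(Q_\pi\|Q)$ rather than the paper's Lagrangian stationarity condition, which is slightly cleaner because it certifies the global minimizer directly.
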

\begin{IEEEproof}        
    See Appendix \ref{subsec:appendix_misspecified_batch_setting}.
\end{IEEEproof}    

\subsection{Combined Batch and Online Setting}        
Let us now extend the basic problem formulation to the setting of a universal prediction of $l$ future outcomes, denoted by $y^l = y_{n+1}, y_{n+2}, \dots, y_{n+l}$, given $n$ previously observed training samples $y^n$ under the misspecification framework. This formulation naturally constitutes a combined batch and online prediction setting: the learner first observes a batch of $n$ samples and then sequentially predicts the next $l$ outcomes. A closely related extension in the stochastic setting was introduced and analyzed in \cite{combined24batch_online}, and further investigated in \cite{BondaschiGastpar2024ISIT}, \cite{BondaschiGastpar2024arXiv}, \cite{BondaschiGastpar2025ITW}, \cite{BondaschiGastpar2025arXiv}.

To this end, we define the minimax regret as
\begin{align}
    F_{n,l}(\Theta,\Phi)
    = 
    \min_{Q} \max_{P_{\phi} \in \Phi}
    \frac{1}{l}
    \sum_{y^{n+l}}
    P_{\phi}(y^{n+l})
    \log
    \frac{P_{\theta^*}(y^{l} | y^{n})}
         {Q(y^l | y^{n})},
\end{align}
where $P_{\theta^*}(y^{l} | y^{n})$ denotes the projection of 
$P_\phi(y^{l} | y^{n})$ onto the hypothesis class $\Theta$, that is,
\begin{align}
    P_{\theta^*}
    = 
    \arg\min_{P_\theta\in\Theta}
    D_{l,n}\left(
        P_\phi(Y^l | Y^n)
        \|
        P_\theta(Y^l | Y^n)
    \right).
\end{align}

Following the same steps used in the derivation of the misspecified batch learning setting,
and using the definitions above, we obtain the following theorem.

\begin{thm}\label{thmCombinedSetting}
The minimax regret for the combined batch-online prediction problem described above is given by
\begin{align}
\nonumber
    F_{n,l}(\Theta,\Phi)
    &=
    \max_{\pi(\phi)}
    \frac{1}{l}
    \Big(
        I(Y^l ; \Phi | Y^n)
     \\&   -
        \mathbb{E}_{\pi(\phi)}
        \left\{
            D_{n,l}(P_{\phi}\,\|\,\Theta)
        \right\}
    \Big),
\end{align}
where
\begin{align}
    D_{n,l}(P_{\phi}\|\Theta)
    \equiv
    \min_{P_{\theta} \in \Theta}
    D\left(
        P_{\phi}(Y^{l} | Y^n)
        \|
        P_{\theta}(Y^{l} | Y^n)
    \right),
\end{align}
and for a given prior $\pi(\phi)$, the universal predictor is
\begin{align}
\label{CombinedUniversalDistribution}
    Q_{\pi}(y^{l} | y^{n})
    =
    \frac{
        \int \pi(\phi)P_{\phi}(y^{n+l})\,d\phi
    }{
        \int \pi(\phi)P_{\phi}(y^{n})\,d\phi
    }.
\end{align}
\end{thm}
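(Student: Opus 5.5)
The plan is to follow the same minimax/capacity route used for Theorem~\ref{ThmFirst}, now with the single test symbol $Y_n$ replaced by the block $Y^l$. I will handle the normalization $\frac{1}{l}$ by pulling it outside every optimization, since it is a positive constant.

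\emph{Step 1: rewrite the objective as a divergence difference.} For fixed $\phi$, with $\theta^*$ the projection attaining $D_{n,l}(P_\phi\|\Theta)$, split the log-ratio by adding and subtracting $\log P_\phi(y^l|y^n)$. This gives
\begin{align}
\nonumber
l\, R(\phi,Q) = D\big(P_\phi(Y^l|Y^n)\,\|\,Q(Y^l|Y^n)\big) - D_{n,l}(P_\phi\|\Theta),
\end{align}
where both divergences are conditional divergences averaged over $Y^n\sim P_\phi$. The projection in the definition maximizes the regret over $\theta$, so the inner $\max_\theta$ disappears and only the penalty term depends on $\Theta$.

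\emph{Step 2: convexify and exchange min and max.} The maximum over $P_\phi\in\Phi$ equals the maximum over priors $\pi(\phi)$ of the averaged objective
\begin{align}
\nonumber
\mathbb{E}_\pi\{D(P_\phi(Y^l|Y^n)\|Q)\} - \mathbb{E}_\pi\{D_{n,l}(P_\phi\|\Theta)\}.
\end{align}
This objective is linear in $\pi$. It is convex in $Q$, because for each fixed $y^n$ the map $Q(\cdot|y^n)\mapsto D(\cdot\|Q)$ is convex and the penalty does not depend on $Q$. The set of conditional distributions $Q(\cdot|y^n)$ is convex, and on finite alphabets it is a product of simplices, hence compact. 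Sion's minimax theorem therefore lets us swap $\min_Q$ and $\max_\pi$.

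\emph{Step 3: solve the inner minimization.} For fixed $\pi$, minimizing $\mathbb{E}_\pi D(P_\phi(Y^l|Y^n)\|Q)$ over $Q$ is the standard Bayes problem. Writing $\pi(\phi)P_\phi(y^{n+l})=\pi(\phi|y^n)\,P_\pi(y^n)\,P_\phi(y^l|y^n)$, for each $y^n$ we must minimize a posterior-weighted divergence. The minimizer is the posterior predictive $\int\pi(\phi|y^n)P_\phi(y^l|y^n)\,d\phi$, which is exactly \eqref{CombinedUniversalDistribution}. The minimal value is
\begin{align}
\nonumber
\mathbb{E}_\pi D(P_\phi(Y^l|Y^n)\|Q_\pi) = I(Y^l;\Phi|Y^n).
\end{align}
This follows from the usual identity that the average divergence to the mixture equals the mutual information, applied conditionally on $Y^n$, and by the nonnegativity gap $\mathbb{E}_\pi D(P_\phi\|Q)=I+D(Q_\pi\|Q)$. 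Substituting this value and restoring the factor $\frac{1}{l}$ yields the stated formula.

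\emph{Main obstacle.} The delicate step is justifying the min--max exchange in Step 2. Sion's theorem needs compactness and lower semicontinuity in $Q$, and the divergence can be infinite when $Q$ vanishes where $P_\phi(y^{n+l})>0$. I would first treat finite alphabets and finite (or compact, weakly continuous) $\Phi$, mirroring the argument the paper uses for Theorem~\ref{ThmFirst}. I would also restrict $Q$ to predictors bounded away from zero, noting this loses no optimality since the minimizer $Q_\pi$ is a mixture. With those restrictions in place, I would then pass to limits.
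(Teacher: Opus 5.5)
Your proposal is correct and follows essentially the paper's route: the paper gives no separate proof and instead repeats the argument of Theorem~\ref{ThmFirst} with the block $Y^l$ in place of $Y_n$ and the factor $\frac{1}{l}$ carried along. That argument passes to priors on $\Phi$, exchanges $\min_Q$ and $\max_\pi$ via Sion, solves the inner problem to obtain the ratio-of-mixtures predictor, and identifies $\mathbb{E}_\pi D(P_\phi\|Q_\pi)$ with $I(Y^l;\Phi|Y^n)$. Your departures are improvements rather than a new route: the identity $\mathbb{E}_\pi D(P_\phi\|Q)=I(Y^l;\Phi|Y^n)+D(Q_\pi\|Q)$ certifies global optimality directly, where the paper uses a Lagrangian. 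One caveat on your Sion discussion: justify the truncation $Q\ge\delta$ by mixing any $Q$ with a uniform distribution, which costs at most $\log\frac{1}{1-\delta}$, not by $Q_\pi$ being a mixture, since mixtures can vanish on some sequences.
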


The form of the minimax regret in Theorem~\ref{thmCombinedSetting} is consistent with the expression obtained for the combined batch and online prediction in the well-specified stochastic framework~\cite{combined24batch_online}. In particular, for $l=1$, the result reduces to Theorem~\ref{ThmFirst}, corresponding to the misspecified batch learning problem, while for $l \gg n$, it approaches the minimax regret associated with misspecified online learning, as derived in Theorem \ref{thm:online_misspecified_regret}.

We now turn to establishing upper and lower bounds for the misspecified combined batch and online minimax regret. These bounds can be viewed as a natural generalization of Theorem~\ref{thm:misspecified_online_regret_bound} (\cite{online21misspecified}, Theorem~4) and Theorem~\ref{Thm2}, which address the bounds for the misspecified online and misspecified batch minimax regrets, respectively.

\begin{thm}\label{thm:thmCombinedSettingBounds}
Suppose $\Theta \subseteq \Phi$ are families of distributions and the data samples are i.i.d. Then the minimax regret in the combined batch online setting satisfies
\begin{align}
    C_{n,l}(\Theta)
    \le
    F_{n,l}(\Theta,\Phi)
    \le
    \frac{1}{l}\sum_{t=1}^l F_{b,n+t}(\Theta,\Phi),
\end{align}
where $C_{n,l}(\Theta)$ denotes the well-specified stochastic capacity of the combined setting, and $F_{b,m}(\Theta,\Phi)$ is the misspecified batch minimax regret for a sample of size $m$.

Moreover, assume that the conditional capacity of the data-generating family satisfies  
$C_{c,n}(\Phi) \equiv \tau_n \to 0$.  
Then, for any sequence $\{\epsilon_{n+t}\}_{t=1}^l$ with $\epsilon_{n+t} \gg \tau_{n+t}$, we have
\begin{align}
    C_{n,l}(\Theta)
    \le
    F_{n,l}(\Theta,\Phi)
    \le
    \frac{1}{l}\sum_{t=1}^l C_{c,n+t}(\Theta_{\epsilon_{n+t}})
    + o(1),
\end{align}
where $\Theta_{\epsilon} \equiv \{P_{\phi}\in\Phi : D(P_{\phi}\,\|\,\Theta) < \epsilon \}$.
\end{thm}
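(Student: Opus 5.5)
The plan has three parts. The lower bound comes from restricting the maximization over data-generating distributions to $\Theta$. The upper bound comes from choosing a specific universal predictor built from the batch-optimal predictors. The refined bound then follows by applying Theorem~\ref{Thm2} term by term.

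\emph{Lower bound.} For any $Q$, restrict the inner maximization over $P_\phi\in\Phi$ to $P_\phi\in\Theta\subseteq\Phi$. For such $\phi$ the projection satisfies $P_{\theta^*}=P_\phi$, so the penalty $D_{n,l}(P_\phi\|\Theta)$ vanishes. The regret then reduces to $\frac{1}{l}D(P_\phi(Y^l|Y^n)\|Q(Y^l|Y^n))$, and we obtain
\[
F_{n,l}(\Theta,\Phi)\ge \min_Q\max_{P_\theta\in\Theta}\tfrac{1}{l}D(P_\theta(Y^l|Y^n)\|Q(Y^l|Y^n)) = C_{n,l}(\Theta),
\]
which is the well-specified combined redundancy-capacity of \cite{combined24batch_online}. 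Equivalently, this follows from Theorem~\ref{thmCombinedSetting} by restricting $\pi$ to be supported on $\Theta$.

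\emph{Upper bound by batch regrets.} I would upper bound the minimum over $Q$ by a particular sequential predictor $Q(y^l|y^n)=\prod_{t=1}^l Q_t(y_{n+t}|y^{n+t-1})$. Here $Q_t$ is the minimax-optimal misspecified batch predictor for sample size $n+t$ given by Theorem~\ref{ThmFirst}. The log-ratio decomposes by the chain rule, and $P_{\theta}(y^l|y^n)=\prod_t P_\theta(y_{n+t}|y^{n+t-1})$ for i.i.d.\ hypotheses. Hence the regret of the pair $(\phi,\theta)$ equals $\frac1l\sum_{t=1}^l \mathbb{E}_{P_\phi}\log\frac{P_\theta(Y_{n+t}|Y^{n+t-1})}{Q_t(Y_{n+t}|Y^{n+t-1})}$.

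The key step is to handle the maximization over $\theta$ (the projection $\theta^*$ for the block). Since $\max_\theta\sum_t(\cdot)\le\sum_t\max_\theta(\cdot)$, each summand is bounded by the batch regret of $Q_t$ at its own per-step projection, and hence by $F_{b,n+t}(\Theta,\Phi)$. This uses that the i.i.d.\ assumption makes the per-step projection onto $\Theta$ the same single-letter KL-projection, so the batch definition \eqref{eq:misspecified_minimax_regret_definition} applies verbatim at each $t$. The max over $\phi$ is then also pushed inside the sum, giving $F_{n,l}\le\frac1l\sum_t F_{b,n+t}(\Theta,\Phi)$.

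I expect the main obstacle to be the subtlety of interchanging the maxima over $\theta$ and $\phi$ with the sum. One has to check that $\max\sum\le\sum\max$ goes in the right direction for an upper bound, which it does. One also has to check that the per-step conditional projection in the batch regret coincides with the block projection used in $F_{n,l}$. Under i.i.d.\ sampling both are the single-letter projection $\arg\min_\theta D(P_\phi\|P_\theta)$, and I would verify this explicitly.

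\emph{Refined bound.} Assuming $C_{c,n}(\Phi)=\tau_n\to0$, then $\tau_{n+t}\le\tau_n\to0$ as well, since conditional capacity is nonincreasing in sample size. Theorem~\ref{Thm2} therefore applies at each sample size $n+t$ with $\epsilon_{n+t}\gg\tau_{n+t}$, giving $F_{b,n+t}(\Theta,\Phi)\le C_{c,n+t}(\Theta_{\epsilon_{n+t}})+o(1)$. Averaging over $t$ preserves the $o(1)$ term, provided it is uniform in $t$. I would argue this uniformity from the explicit form of the remainder in the proof of Theorem~\ref{Thm2}, which through Lemma~\ref{lemma1} is controlled by quantities such as $\tau_{n+t}/\epsilon_{n+t}$ and a binary entropy. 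These are monotone or can be taken uniform over $t\ge1$, which yields the stated bound.
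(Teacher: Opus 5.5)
Your proposal is correct. The lower bound (restricting to $P_\phi\in\Theta$) and the refined bound (Theorem~\ref{Thm2} applied at each sample size $n+t$) match the paper. Your proof of $F_{n,l}(\Theta,\Phi)\le\frac{1}{l}\sum_{t=1}^l F_{b,n+t}(\Theta,\Phi)$, however, takes a genuinely different route.

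The paper works on the max-min side. It starts from the characterization of Theorem~\ref{thmCombinedSetting}, where the predictor is a single mixture $Q_\pi$, and factors $Q_\pi(y^l|y^n)$ into batch mixtures that all share the same prior $\pi$. It uses the i.i.d.\ assumption to write $P_{\theta^*}(y^l|y^n)=\prod_t P_{\theta^*}(y_{n+t})$. It then bounds $\max_\pi\sum_t\le\sum_t\max_\pi$, and each term is at most the max-min form of $F_{b,n+t}$.

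You stay on the min-max side and exhibit an explicit predictor: the concatenation of (near-)minimax batch predictors $Q_t$, which in general come from different priors. You then push $\max_{\phi,\theta}$ inside the sum. Any near-minimax $Q_t$ works, so the specific mixture form from Theorem~\ref{ThmFirst} is not needed. Your route is more elementary: it needs neither Sion's theorem nor the information-theoretic characterization. It does not even use the i.i.d.\ hypothesis. The chain rule $P_\theta(y^l|y^n)=\prod_t P_\theta(y_{n+t}|y^{n+t-1})$ holds for any $P_\theta$, and the batch regret already contains its own $\max_{P_\theta\in\Theta}$.

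For the same reason, the projection-matching check you flag as the main obstacle is unnecessary. All you use is that the $t$-th summand, evaluated at the block projection $\theta^*$, is bounded by its maximum over $\theta$. This holds whether or not the per-step and block projections coincide. The paper's route, in turn, keeps the bound tied to a single Bayesian mixture and to the form of Theorem~\ref{thmCombinedSetting}.

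Your attention to uniformity of the $o(1)$ term in $t$ addresses a point the paper skips, and it can be made crisp. The remainder in the proof of Theorem~\ref{Thm2} is $h(\lambda_m)/(1-\lambda_m)$ with $\lambda_m\le\tau_m/\epsilon_m$, and this is increasing in $\lambda_m$. So if $\tau_m/\epsilon_m\to 0$ along a single sequence indexed by $m$, then $\sup_{m>n}\tau_m/\epsilon_m\to 0$, and the average over $t$ is $o(1)$ regardless of $l$. Monotonicity of $\tau_m$ is not required for this, though it does hold for conditionally i.i.d.\ models by exchangeability.
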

\begin{IEEEproof}
    See Appendix \ref{subsec:appendix_misspecified_combined_setting}.
\end{IEEEproof}

To illustrate these results, consider the case where $\Phi$ is the family of
$d$-dimensional Multinomial distributions and $\Theta \subset \Phi$ is the
subset of $d'$-dimensional Multinomial distributions, with $d' < d$. As shown
in~\cite{combined24batch_online}, the well-specified combined batch-online
capacity in this setting is
\[
C_{n,l}(\Theta)
=
\frac{1}{l}\left[
    \sum_{t=1}^l \frac{d'}{2(n+t)}
    + o\left(\frac{1}{n+t}\right)
\right].
\]
This expression serves as the lower bound on the misspecified combined batch and online minimax regret according to Theorem~\ref{thm:thmCombinedSettingBounds}.

In the misspecified batch setting, we also established that
\[
F_{b,n+t}(\Theta,\Phi)
= \frac{d'}{2(n+t)} + o\left(\frac{1}{n+t}\right).
\]
Therefore, applying Theorem~\ref{thm:thmCombinedSettingBounds} yields the upper bound
\begin{align}
    \begin{aligned}
        \nonumber
        \frac{1}{l}\sum_{t=1}^l F_{b,n+t}(\Theta,\Phi)
        &=
        \frac{1}{l}\left[
            \sum_{t=1}^l \frac{d'}{2(n+t)}
            + o\left(\frac{1}{n+t}\right)
        \right]
    \end{aligned}
\end{align}
which coincides with the lower bound $C_{n,l}(\Theta)$. In this extreme misspecified scenario,
the minimax regret necessarily equals the capacity of $\Theta$, since
$D(P_\phi\|\Theta)=\infty$ for every $\phi \notin \Theta$ and zero otherwise, as
already discussed in the misspecified batch setting Section \ref{subsec:misspecified_batch_setting}.

Combining these observations, the misspecified combined batch and online minimax regret is
\begin{align}
    \begin{aligned}
        F_{n,l}(\Theta,\Phi)
        &=
        \frac{1}{l}\left[
            \sum_{t=1}^l \frac{d'}{2(n+t)}
            + o\left(\frac{1}{n+t}\right)
        \right]
        \\
        &\to
        \frac{d'}{2l}\,
        \log\left(1 + \frac{l}{n}\right),
    \end{aligned}
\end{align}
for $n,l \gg 1$, consistent with the asymptotic analysis in~\cite{combined24batch_online}.

Interestingly, the behavior interpolates smoothly between the online and batch
regimes. When $l \gg n \gg 1$, we have
\[
F_{n,l}(\Theta,\Phi)
=
\frac{d'}{2l}\log l
+ O\left(\frac{\log n}{l}\right),
\]
matching the online setting. In contrast, when $n \gg l \gg 1$,
\[
F_{n,l}(\Theta,\Phi)
\approx
\frac{d'}{2l},
\]
as in the batch setting.

    \section{Constrained Misspecified Universal Learning} 
    \label{sec:constrained_setting}
    In this Section, we introduce a new framework for the misspecified setting, termed the \emph{constrained misspecified setting}, in which the universal predictor is restricted to be a mixture over the convex hull of the hypothesis class \( \Theta \). Our analysis focuses primarily on the unsupervised online setting, with preliminary results for the batch case. In Section~\ref{subsec:constrained_online}, we derive a general closed-form expression for the minimax regret and analyze Bernoulli and Markov models. These examples show that the constrained misspecified minimax regret matches the well-specified capacity up to a fixed penalty term, and that the optimal constrained prior coincides with the capacity-achieving prior of the well-specified stochastic setting. We then prove in Section~\ref{subsec:constrained_smooth_parametric_models} that this equivalence generally holds for smooth parametric models. Beyond the general result, we present a simplified proof for exponential families using their distinctive structural characteristics. Finally, we illustrate the theory in the Gaussian location model, where the constrained minimax regret is shown to interpolate, under mild regularity conditions, between the well-specified capacity and the individual-sequence minimax regret.

%\subsection{Online Setting}
\subsection{General Analysis of the Constrained Misspecified Setting}
\label{subsec:constrained_online}
One of our main contributions to the constrained misspecified universal learning setting is the derivation of a general analytical expression for the minimax regret, given as follows:
\begin{thm}\label{thm:thmOnlineConstrainedRegret}
The constrained misspecified minimax regret of the online setting is given by:		
\begin{align}
    \begin{aligned}
    \nonumber
        R_{o,n}^*(\Theta,\Phi) &= \max_{\pi(\phi)} 
        \Big( I\left(Y^n;\Phi\right) -  
        E_{\pi(\phi)} \left\{  D\left(P_\phi \| \Theta\right) \right\} 
        \\&+ \min_{Q_{\pi_0(\theta)}}D\left(Q_{\pi(\phi)} \| Q_{\pi_0(\theta)}\right) \Big)
    \end{aligned}
\end{align}
where
\begin{align}
    \begin{aligned}
    \nonumber
        Q_{\pi_0(\theta)}(y^n) = \int{\pi_0(\theta)P_\theta(y^n)}d\theta
    \end{aligned}
\end{align}
and
\begin{align}
    \begin{aligned}
    \nonumber
        Q_{\pi(\phi)}(y^n) = \int{\pi(\phi)P_\phi(y^n)}d\phi.
    \end{aligned}
\end{align}
\end{thm}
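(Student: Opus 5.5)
The plan is to mirror the minimax argument behind Theorem~\ref{thm:online_misspecified_regret}, restricting the minimizing player to the convex set of mixtures $\mathcal{Q}_\Theta=\{Q_{\pi_0(\theta)}\}$.

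\textbf{Step 1: linearize the maximization over $\Phi$.} Write the regret of a fixed $\pi_0$ against $P_\phi$ as
\[
R_n(\theta^*,\phi,Q_{\pi_0}) = D(P_\phi\|Q_{\pi_0}) - D(P_\phi\|\Theta),
\]
using $\mathbb{E}_{P_\phi}\log(P_{\theta^*}/Q)=\mathbb{E}_{P_\phi}\log(P_\phi/Q)-\mathbb{E}_{P_\phi}\log(P_\phi/P_{\theta^*})$. Since the maximum of a function over $\Phi$ equals its maximum over priors $\pi(\phi)$ on $\Phi$ (point masses are admissible), we get
\[
R_{o,n}^*=\min_{\pi_0}\max_{\pi}\,G(\pi,\pi_0),\qquad
G(\pi,\pi_0)=\mathbb{E}_{\pi}\{D(P_\phi\|Q_{\pi_0})\}-\mathbb{E}_\pi\{D(P_\phi\|\Theta)\}.
\]

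\textbf{Step 2: swap min and max.} The function $G$ is linear, hence concave, in $\pi$. It is convex in $\pi_0$, because $Q_{\pi_0}$ is linear in $\pi_0$ and $D(P\|\cdot)$ is convex. The set of $\pi_0$'s is convex, and so is the set of $\pi$'s. Under the compactness and lower semicontinuity assumptions implicit in the earlier theorems (e.g.\ finite alphabets and compact parameter sets), Sion's theorem~\cite{sion1958general} gives
\[
R_{o,n}^*=\max_\pi\min_{\pi_0}G(\pi,\pi_0).
\]
This is the step I expect to be the main obstacle. One must check the topological hypotheses, in particular lower semicontinuity of $\pi_0\mapsto D(P_\phi\|Q_{\pi_0})$, which may take the value $+\infty$. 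One must also confirm that the inner minimum is attained, or else work with an infimum. I would handle this as in \cite{robust21inference}: prove the statement first for finite $\Phi,\Theta$ and finite alphabets, then pass to limits.

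\textbf{Step 3: the golden-formula decomposition.} For fixed $\pi$, let $Q_{\pi(\phi)}=\int\pi(\phi)P_\phi\,d\phi$. Then for every $Q$,
\[
\mathbb{E}_\pi\{D(P_\phi\|Q)\}=\mathbb{E}_\pi\{D(P_\phi\|Q_{\pi(\phi)})\}+D(Q_{\pi(\phi)}\|Q)=I(Y^n;\Phi)+D(Q_{\pi(\phi)}\|Q).
\]
This identity follows by adding and subtracting $\log Q_{\pi(\phi)}(y^n)$ inside the expectation and noting that the cross term averages to a divergence under the mixture. Applying it with $Q=Q_{\pi_0(\theta)}$ and minimizing over $\pi_0$ yields
\[
\min_{\pi_0}G(\pi,\pi_0)=I(Y^n;\Phi)-\mathbb{E}_{\pi}\{D(P_\phi\|\Theta)\}+\min_{\pi_0}D(Q_{\pi(\phi)}\|Q_{\pi_0(\theta)}).
\]
The last term is the only one that involves $\pi_0$. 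Maximizing over $\pi$ then gives the stated formula.

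As a sanity check, when the minimizer $\pi_0$ is unrestricted, the extra term vanishes because one can take $Q=Q_{\pi(\phi)}$. This recovers Theorem~\ref{thm:online_misspecified_regret}.
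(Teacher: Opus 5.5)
Your proposal is correct and takes essentially the same route as the paper. Both rewrite the regret as $\mathbb{E}_\pi\{D(P_\phi\|Q_{\pi_0(\theta)})\}-\mathbb{E}_\pi\{D(P_\phi\|\Theta)\}$, swap min and max by Sion's theorem, and split off $D(Q_{\pi(\phi)}\|Q_{\pi_0(\theta)})$ via the identity $\mathbb{E}_\pi\{D(P_\phi\|Q)\}=I(Y^n;\Phi)+D(Q_{\pi(\phi)}\|Q)$. The one difference is ordering, and your ordering is cleaner: you apply Sion to the form that is genuinely linear in $\pi$ and you flag the semicontinuity and compactness hypotheses. The paper decomposes first and asserts that all three terms are linear in $\pi$, but that holds only for the combination $I(Y^n;\Phi)+D(Q_{\pi(\phi)}\|Q_{\pi_0(\theta)})$, since the last term alone is convex in $\pi$.
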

\begin{IEEEproof}            
    See Appendix \ref{subsec:appendix_constrained_setting}.
\end{IEEEproof}

This paper focuses on the constrained misspecified learning under the online setting. A similar result by using the same information theory tools can be derived for the batch learning setting, and is given by the following Theorem:
\begin{thm}\label{thm:thmBatchConstrainedRegret}
The constrained misspecified minimax regret of the batch setting is given by:		
\begin{align}
    \begin{aligned}
    \nonumber
        R_{b,n}^*(\Theta,\Phi) &= \max_{\pi(\phi)} 
        \Big( I\left(Y_n;\Phi|Y^{n-1}\right) -  
        E_{\pi(\phi)} \left\{  D\left(P_\phi \| \Theta\right) \right\} 
        \\&+ \min_{Q_{\pi_0(\theta)}}D\left(Q_{\pi(\phi)} \| Q_{\pi_0(\theta)}\right) \Big)
    \end{aligned}
\end{align}
where
\begin{align}
    \begin{aligned}
    \nonumber
        Q_{\pi_0(\theta)}(y_n|y^{n-1}) = \frac{\int{\pi_0(\theta)P_\theta(y^n)}d\theta}{\int{\pi_0(\theta)P_\theta(y^{n-1})}d\theta}
    \end{aligned}
\end{align}
and
\begin{align}
    \begin{aligned}
    \nonumber
        Q_{\pi(\phi)}(y_n|y^{n-1}) = \frac{\int{\pi(\phi)P_\phi(y^n)}d\phi}{\int{\pi(\phi)P_\phi(y^{n-1})}d\phi}.
    \end{aligned}
\end{align}
\end{thm}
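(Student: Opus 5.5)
The plan mirrors the online argument of Theorem~\ref{thm:thmOnlineConstrainedRegret}, with every quantity conditioned on the training batch $Y^{n-1}$. Divergences between conditional predictors are understood as expected conditional divergences, i.e.\ averaged over $Y^{n-1}$ under the relevant joint law. For a fixed $\pi_0(\theta)$ the batch regret is
\[
\mathbb{E}_{P_\phi}\!\left\{\log\frac{P_{\theta^*}(Y_n|Y^{n-1})}{Q_{\pi_0}(Y_n|Y^{n-1})}\right\}
= D\!\left(P_\phi(Y_n|Y^{n-1})\,\|\,Q_{\pi_0}\right) - D(P_\phi\|\Theta).
\]
The first step is to replace the inner maximum over $P_\phi\in\Phi$ by a maximum over priors $\pi(\phi)$. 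This gives
\[
R_{b,n}^* = \min_{\pi_0}\max_{\pi}\,\mathbb{E}_{\pi}\Big\{ D\big(P_\phi(Y_n|Y^{n-1})\|Q_{\pi_0}\big) - D(P_\phi\|\Theta)\Big\}.
\]
The replacement is harmless because the objective is linear in $\pi$, so its maximum is attained at point masses.

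The second step is to exchange $\min$ and $\max$ using Sion's theorem~\cite{sion1958general}, as in the proof of Theorem~\ref{ThmFirst}. Linearity in $\pi$ gives concavity. For convexity in $\pi_0$, note that the constrained predictor $Q_{\pi_0}(y_n|y^{n-1})$ is a ratio of two functions linear in $\pi_0$, so it is not itself linear in $\pi_0$. I will handle this by writing the objective as the joint divergence $D(P_\phi(Y^n)\|Q_{\pi_0}(Y^n))$ minus the marginal divergence $D(P_\phi(Y^{n-1})\|Q_{\pi_0}(Y^{n-1}))$, via the chain rule. Alternatively, if that difference fails to be convex, I will work directly with the set $\mathcal{Q}_0$ of achievable pairs $(Q_{\pi_0}(Y^n),Q_{\pi_0}(Y^{n-1}))$, which is convex because both components are linear in $\pi_0$. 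This convexity step is the main obstacle. If neither route works, I will fall back on establishing only the weak inequality $\min\max\ge\max\min$ and check achievability separately, as in the paper's online argument.

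The third step evaluates the inner minimum for a fixed $\pi$. Adding and subtracting $\log Q_{\pi(\phi)}(y_n|y^{n-1})$ inside the expectation gives the decomposition
\[
\mathbb{E}_\pi D\big(P_\phi(Y_n|Y^{n-1})\|Q_{\pi_0}\big) = I(Y_n;\Phi|Y^{n-1}) + D\big(Q_{\pi(\phi)}(Y_n|Y^{n-1})\,\|\,Q_{\pi_0}(Y_n|Y^{n-1})\big),
\]
where the second divergence is averaged over $Y^{n-1}\sim Q_{\pi(\phi)}$. The first term is the standard identity for conditional mutual information. The cross term collapses because $\mathbb{E}_\pi\{P_\phi(y^n)\}=Q_{\pi(\phi)}(y^n)$.

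Substituting this decomposition and minimizing over $\pi_0$, only the last divergence depends on $\pi_0$. Maximizing over $\pi$ then yields exactly the claimed expression. The maximizing $\pi$, together with the minimizing $\pi_0$, identifies the optimal constrained batch predictor.
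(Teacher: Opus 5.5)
Your first and third steps are correct. Linearity in $\pi$ justifies passing to priors. The identity $\mathbb{E}_\pi D(P_\phi(Y_n|Y^{n-1})\|Q_{\pi_0})=I(Y_n;\Phi|Y^{n-1})+D(Q_{\pi(\phi)}\|Q_{\pi_0(\theta)})$, with the conditional divergence averaged over $Y^{n-1}\sim Q_{\pi(\phi)}$, is exactly what the statement encodes.

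\textbf{The gap: the min--max exchange.} Neither of your two routes establishes it. For fixed $\pi$, the $\pi_0$-dependent part of your objective is
\[
-\sum_{y^n} Q_{\pi(\phi)}(y^n)\log q_{\pi_0}(y^n)+\sum_{y^{n-1}} Q_{\pi(\phi)}(y^{n-1})\log q_{\pi_0}(y^{n-1}),
\]
where $Q_{\pi(\phi)}(y^n)=\int\pi(\phi)P_\phi(y^n)\,d\phi$ and $q_{\pi_0}(\cdot)=\int\pi_0(\theta)P_\theta(\cdot)\,d\theta$ is linear in $\pi_0$. This is a convex function of $\pi_0$ plus a concave one.
\begin{itemize}
\item Your chain-rule rewriting is this same difference of two convex divergences, so it gives nothing new.
\item Passing to the pair $(q_{\pi_0}(Y^n),q_{\pi_0}(Y^{n-1}))$ is only a linear change of variables. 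It makes the domain convex (it already was, as the simplex of $\pi_0$), but it cannot make the function convex. For example, if $Q_{\pi(\phi)}(\cdot|y^{n-1})$ is a point mass on $a$, the corresponding term is $\log(1+v_2/v_1)$ with $v_1=q_{\pi_0}(y^{n-1}a)$ and $v_2=\sum_{b\neq a}q_{\pi_0}(y^{n-1}b)$. This is concave in $v_2$.
\item Parametrizing by the conditional predictor itself makes the objective convex (it is $-\log Q$), but then the feasible set is not convex. The mixture $\lambda Q_{\pi_1}(y_n|y^{n-1})+(1-\lambda)Q_{\pi_2}(y_n|y^{n-1})$ is the posterior predictive of a prior whose mixing weights depend on $y^{n-1}$. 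So it is generally not of the form $Q_{\pi_0}$.
\end{itemize}
Your plan therefore does not supply even the quasi-convexity Sion's theorem requires. This is exactly what separates the batch case from the online case, where $Q_{\pi_0}(y^n)$ is linear in $\pi_0$. The paper gives no separate proof of this theorem; it only asserts that the online tools carry over. So it does not supply this step either.

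\textbf{Your fallback does not close it.} Weak duality yields only $R^*_{b,n}(\Theta,\Phi)\ge\max_{\pi}(\cdots)$. ``Checking achievability separately'' is the entire reverse inequality: you need a $\pi_0^\star$ whose worst-case regret over $\Phi$ equals the max--min value, i.e.\ a saddle point. (The paper's online proof does no such check; it relies on Sion.) Your closing claim, that the best response to the maximizing $\pi$ is the optimal constrained predictor, presupposes the same saddle point.

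\textbf{A route that avoids convexity in $\pi_0$.} Let $f_{\pi_0}(\phi)=D(P_\phi(Y_n|Y^{n-1})\|Q_{\pi_0})-D(P_\phi\|\Theta)$ and $V(\pi)=\min_{\pi_0}\mathbb{E}_\pi f_{\pi_0}(\phi)$. Then $V$ is concave, being a minimum of functions linear in $\pi$. Suppose, under the usual compactness and continuity assumptions, the minimizer $\pi_0^\star$ at the maximizing $\pi^\star$ is unique. Danskin's theorem then gives $V'(\pi^\star;\pi-\pi^\star)=\mathbb{E}_\pi f_{\pi_0^\star}-\mathbb{E}_{\pi^\star}f_{\pi_0^\star}\le 0$ for every $\pi$. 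Hence $\max_\phi f_{\pi_0^\star}(\phi)\le V(\pi^\star)$, which is the missing reverse inequality. Without such a condition you only obtain the value of the game in which the learner randomizes over $\pi_0$. That payoff is bilinear, so Sion applies and the value is exactly the claimed right-hand side. But that value can in general lie strictly below a min--max over a non-convex learner class, which is what $R^*_{b,n}(\Theta,\Phi)$ is.
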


Theorems \ref{thm:thmOnlineConstrainedRegret} and \ref{thm:thmBatchConstrainedRegret} show that the constrained misspecified learning minimax regret can be regarded as a constrained version of the capacity, i.e., a constrained version of the minimax regret of the classical well-specified stochastic setting, see \cite{universal98prediction} and \cite{batch18learning} for the online and batch settings, respectively.
More precisely, the form of the result is similar to the misspecified learning minimax regret, as shown in Theorem \ref{thm:online_misspecified_regret} (\cite{robust21inference}, Theorem 2) and in \cite{online21misspecified} for the online setting and in Theorem \ref{ThmFirst} for the batch setting. In both cases the form of the regret is a combination of two terms: one is mutual information between the samples and the data source of distributions $\Phi$, given by $I(Y^n;\Phi)$ in the online setting and by $I(Y_n;\Phi|Y^{n-1})$ in the batch setting, and an additional penalty term, denoted by $E_{\pi(\phi)} \{  D({{P_{\phi}}\|\Theta}) \}$, which quantifies the mismatch between the set of hypotheses $\Theta$ and the data generating distributions set $\Phi$, by the closest projected distribution from the set $\Phi$ onto the set of hypotheses $\Theta$, in the sense of KL divergence. In the constrained misspecified setting, there is an additional penalty term, denoted by $\min_{Q_{\pi_0(\theta)}}D\left(Q_{\pi(\phi)} \| Q_{\pi_0(\theta)}\right)$, which quantifies the constraint penalty by the closest projected mixture distribution from the set $\Phi$ onto the mixture distributions over the hypotheses set $\Theta$, in the sense of KL divergence.              
                                
\emph{Two-Stage Arimoto-Blahut Algorithm Extension:}
As we have shown, the constrained minimax regret can be interpreted as a constrained version of the capacity between \( Y^n \) and the data-generating class \( \Phi \). Consequently, the prior \( \pi(\phi) \) can be interpreted as a capacity-achieving prior distribution. In general, obtaining a closed-form expression for either the capacity or the corresponding capacity-achieving prior is analytically intractable.

Therefore, following the approach of Algorithm~\ref{alg:ArimotoBlahutAlg_misspecified} in the
misspecified batch setting, we develop an extension of the Arimoto-Blahut algorithm for the
numerical evaluation of both the prior \( \pi(\phi) \) and the constrained minimax regret
\( R_{o,n}^*(\Theta,\Phi) \) in the constrained misspecified universal learning setting. In our
formulation, the regret depends on two coupled priors, \( \pi(\phi) \) and \( \pi_0(\theta) \), which are to be optimized jointly. Accordingly, the proposed algorithm proceeds in two alternating
steps: first, optimizing \( \pi(\phi) \) to maximize the regret, and second, optimizing
\( \pi_0(\theta) \), which defines the constrained universal predictor, to minimize it.
Corollary \ref{cor:corArimotoBlahut} derives an upper and lower bounds for the minimax regret. These bounds are used as convergence criteria for the iterative Arimoto-Blahut algorithm extension. Moreover, the form of the bounds as a linear combination of KL divergences implies the structure of the algorithm.
    
\begin{cor}
    \label{cor:corArimotoBlahut}
    The minimax regret of the constrained online learning under the misspecification setting holds the following for any $\Phi$, $\Theta$, $\pi(\phi)$ and $\pi_0(\theta)$:		
    \begin{align}									
        \begin{aligned}	                            
            \nonumber
            R_L(\pi,\Theta,\Phi) \leq R^*_{o,n}(\Theta,\Phi) \leq R_U(\pi_0,\Theta,\Phi) 
          \end{aligned}
    \end{align}           
    where,
    \begin{align}
        \begin{aligned}
        \nonumber
            R_L(\pi,\Theta,\Phi) &\equiv E_{\pi(\phi)}\left\{ D(P_\phi \| Q_{\pi(\phi)})\right\} - E_{\pi(\phi)}\left\{D(P_\phi \| \Theta)\right\}
            \\&+ \min_{\pi_0(\theta)}D(Q_{\pi(\phi)} \| Q_{\pi_0(\theta)})
        \end{aligned}
    \end{align}
    and 
    \begin{align}
        \begin{aligned}
        \nonumber
            R_U(\pi_0,\Theta,\Phi) \equiv \max_{P_{\phi}}\left( D(P_\phi \| Q_{\pi_0(\theta)}) -  D(P_\phi \| \Theta) \right).
        \end{aligned}
    \end{align}
\end{cor}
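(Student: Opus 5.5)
The two bounds come from two separate ingredients. The upper bound follows directly from the minimax definition~\eqref{eq:minmaxRegretDefinition_constrained}. The lower bound follows from the max--min characterization in Theorem~\ref{thm:thmOnlineConstrainedRegret}, after one identity relating the mutual information to an average divergence. Throughout, $Q_{\pi(\phi)}=\int\pi(\phi)P_\phi\,d\phi$ and $Q_{\pi_0(\theta)}=\int\pi_0(\theta)P_\theta\,d\theta$, as in that theorem.

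\emph{Upper bound.} Fix any prior $\pi_0(\theta)$. For every $P_\phi\in\Phi$, with $\theta^*$ the KL projection of $P_\phi$ onto $\Theta$, the regret of the triplet $(\theta^*,\phi,Q_{\pi_0(\theta)})$ splits as
\[
R_n(\theta^*,\phi,Q_{\pi_0(\theta)})=\mathbb{E}_{P_\phi}\Big\{\log\tfrac{P_\phi}{Q_{\pi_0(\theta)}}\Big\}-\mathbb{E}_{P_\phi}\Big\{\log\tfrac{P_\phi}{P_{\theta^*}}\Big\}=D(P_\phi\|Q_{\pi_0(\theta)})-D(P_\phi\|\Theta).
\]
Taking the maximum over $P_\phi\in\Phi$ gives exactly $R_U(\pi_0,\Theta,\Phi)$. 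Since $R^*_{o,n}$ is the minimum of this worst-case quantity over all $\pi_0$, it is at most its value at the fixed $\pi_0$, so $R^*_{o,n}(\Theta,\Phi)\le R_U(\pi_0,\Theta,\Phi)$.

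\emph{Lower bound.} Theorem~\ref{thm:thmOnlineConstrainedRegret} expresses $R^*_{o,n}$ as a maximum over $\pi(\phi)$. Evaluating the objective at the given $\pi(\phi)$ therefore yields a lower bound. The remaining step is the standard identity
\[
I(Y^n;\Phi)=\mathbb{E}_{\pi(\phi)}\{D(P_\phi\|Q_{\pi(\phi)})\},
\]
which holds because the marginal of $Y^n$ under the joint law $\pi(\phi)P_\phi(y^n)$ is exactly $Q_{\pi(\phi)}$. Substituting this identity turns the objective of Theorem~\ref{thm:thmOnlineConstrainedRegret} term by term into $R_L(\pi,\Theta,\Phi)$. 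The term $\min_{\pi_0}D(Q_{\pi(\phi)}\|Q_{\pi_0(\theta)})$ is carried over unchanged.

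The only delicate point is relying on the minimax equality of Theorem~\ref{thm:thmOnlineConstrainedRegret}, which is assumed here. Without it, the lower bound could instead be obtained directly. For any $\pi_0$, bound the maximum over $P_\phi$ below by the average under $\pi(\phi)$. Then use the decomposition
\[
\mathbb{E}_{\pi}\{D(P_\phi\|Q_{\pi_0(\theta)})\}=\mathbb{E}_{\pi}\{D(P_\phi\|Q_{\pi(\phi)})\}+D(Q_{\pi(\phi)}\|Q_{\pi_0(\theta)}),
\]
which follows by adding and subtracting $\log Q_{\pi(\phi)}$ inside the expectation. Finally, minimize over $\pi_0$. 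This direct route gives $R_L\le R^*_{o,n}$ without invoking any minimax theorem, so I would present it as the main argument and note that it agrees with Theorem~\ref{thm:thmOnlineConstrainedRegret}.
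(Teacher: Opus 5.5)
Your proof is correct. The upper bound is exactly the paper's argument: the paper's proof points back to Corollary~\ref{cor:thmArimotoBlahut}, which plugs the fixed mixture $Q_{\pi_0(\theta)}$ into the minimax definition and uses $D(P_\phi\|P_{\theta^*})=D(P_\phi\|\Theta)$. Your first lower-bound argument is also the paper's: evaluate the max--min objective of Theorem~\ref{thm:thmOnlineConstrainedRegret} at the given $\pi(\phi)$ and rewrite $I(Y^n;\Phi)=\mathbb{E}_{\pi(\phi)}\{D(P_\phi\|Q_{\pi(\phi)})\}$.

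The route you prefer as your main argument is a genuine, if modest, variant. Averaging over $\pi(\phi)$ and applying $\mathbb{E}_{\pi}\{D(P_\phi\|Q_{\pi_0(\theta)})\}=\mathbb{E}_{\pi}\{D(P_\phi\|Q_{\pi(\phi)})\}+D(Q_{\pi(\phi)}\|Q_{\pi_0(\theta)})$ is just weak duality, $\max\min\le\min\max$. It therefore holds unconditionally and does not depend on Sion's theorem, whose topological hypotheses (compactness of one strategy set, semicontinuity of the objective) the paper invokes without verifying.

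What the paper's route buys in exchange is the equality $R^*_{o,n}(\Theta,\Phi)=\max_{\pi}R_L(\pi,\Theta,\Phi)$. The corollary as stated does not need it. However, it is what guarantees that the gap $R_U-R_L$ can actually be driven to zero, and hence that the stopping rule of Algorithm~\ref{alg:ArimotoBlahutAlg_constrained} is meaningful. Presenting the weak-duality argument as the proof, and then remarking that Theorem~\ref{thm:thmOnlineConstrainedRegret} makes the lower bound tight at the optimal prior, gives you both.
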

\begin{IEEEproof}
    Similar steps as in \cite{arimoto2013blahut_alg},\cite{combined24batch_online} and Corollary \ref{cor:thmArimotoBlahut}.%\cite{misspecified24learning_arxiv}.
\end{IEEEproof}

Using Corollary \ref{cor:corArimotoBlahut} we can extend the Arimoto-Blahut algorithm to the constrained misspecified learning setting by Algorithm \ref{alg:ArimotoBlahutAlg_constrained}. The inputs to the algorithm are the number of samples $n$, an optimization parameter $\lambda$, a required convergence accuracy parameter $\epsilon$, the sets $\Phi$ and $\Theta$ and initial prior distributions $\pi^{(0)}(\phi)$ and $\pi_0^{(0)}(\theta)$ (the uniform distribution over the sets $\Phi$ and $\Theta$ respectively, is a common practical choice). In the initialization stage, we calculate the lower and upper bounds $R_L$ and $R_U$, respectively, under the initial priors.
Then an iterative procedure is applied first over $\pi^{(i)}(\phi)$ and then, as a second update stage, the normalized portion of $\pi^{(i)}(\phi)$ in the range of $\phi\in\Theta$, is set to $\pi^{(i)}_0(\theta)$, until the convergence of the algorithm.
The outputs of the algorithm are the prior distributions $\pi(\phi)$ and $\pi_0(\theta)$.
        
\begin{algorithm}        
\caption{Two-Stage Arimoto-Blahut Algorithm}\label{alg:ArimotoBlahutAlg_constrained}
\begin{algorithmic}
\State \textbf{Input}: \\$n, \lambda, \epsilon, \Phi=\{\phi_m\}_{m=1}^{M_\phi}, \Theta=\{\theta_m\}_{m=1}^{M_\theta}, \pi^{(0)}(\phi), \pi_0^{(0)}(\theta)$
\State \textbf{Output}: \\$\pi(\phi)$ and $\pi_0(\theta)$
\State \textbf{Initialization}:
\State $i \gets 0$
%\State {$R_{U}^{(0)} = \max_{\phi}\left( D(P_\phi \| Q_{\pi_0^{(0)}}) -  D(P_\phi \| \Theta) \right)$}  
\State {$R_{U}^{(0)} = R_U(\pi_0^{(0)},\Theta,\Phi)$}
%\State {$R_{L}^{(0)} = E_{\pi^{(0)}}\left\{ D(P_\phi \| Q_{\pi^{(0)}}) - D(P_\phi \| \Theta)\right\} + D(Q_{\pi^{(0)}} \| Q_{\pi^{*(0)}_0})$}
\State {$R_{L}^{(0)} = R_L(\pi^{(0)},\Theta,\Phi)$}
\State \textbf{Loop}:
\begin{algorithmic}
\While {$R^{(i)}_{U} - R^{(i)}_{L} > \epsilon$}         
        \State \textbf{First Stage:}
        \State $\tilde{\pi}^{(i+1)}(\phi_j) = \pi^{(i)}(\phi_j) \cdot 
        e^{\lambda \left(D(P_{\phi_j} \| Q_{\pi_0^{(i)}}) - D(P_{\phi_j} \| \Theta) \right)}$
        \State $\pi^{(i+1)}(\phi_j) = \frac{\tilde{\pi}^{(i+1)}(\phi_j)}{ \sum_{j'=1}^{M_{\phi}} \tilde{\pi}^{(i+1)}(\phi_{j'})}$
        \State \textbf{Second Stage:}
        \State {$\tilde{\pi}_0^{(i+1)}(\theta_j) = \pi(\theta_j)$}
        \State $\pi_0^{(i+1)}(\theta_j) = \frac{\tilde{\pi}_0^{(i+1)}(\phi_j)}{ \sum_{j'=1}^{M_{\theta}} \tilde{\pi}_0^{(i+1)}(\theta_{j'})}$
        \State \textbf{Bounds Update:}
        \State {$R_{U}^{(i+1)} = R_U(\pi_0^{(i+1)},\Theta,\Phi)$}
        %\State {$R_{U}^{(i+1)} = \max_{\phi}\left( D(P_\phi \| Q_{\pi_0^{(i+1)}}) -  D(P_\phi \| \Theta) \right)$}    
        %\State {$R_{L}^{(i+1)} = E_{\pi^{(i+1)}}\left\{ D(P_\phi \| Q_{\pi^{(i+1)}}) - D(P_\phi \| \Theta)\right\} + D(Q_{\pi^{(i+1)}} \| Q_{\pi^{*(i+1)}_0})$}
        \State {$R_{L}^{(i+1)} = R_L(\pi^{(i+1)},\Theta,\Phi)$}
        \State $i \gets i+1$
\EndWhile
\State \textbf{end}
\State \textbf{Return}: \\$\pi(\phi)$ and $\pi_0(\theta)$
\end{algorithmic}
\end{algorithmic}
\end{algorithm}
To demonstrate our results, we apply the algorithm to the Bernoulli distribution sets, where $y\in \{0,1\}$, $\Phi$ is the set of all the distributions with probability of success $\phi \in [0,1]$ and $\Theta$ is the set of all the hypotheses with probability of success $\theta \in [0.25,0.75]$.
Figure \ref{figRegretVsN} shows regret versus $n$ for the well-specified stochastic, misspecified, and constrained misspecified settings. It can be shown that the constrained regret is greater than the well-specified stochastic setting over $\Theta$ only by a fixed constant, denoted by $\Delta R(\Theta,\Phi)\approx 0.3[bits]$. Note that by simulations this bound is true for any set $\Theta = 0.5 + [-\Delta/2,\Delta/2], \Delta\in[0,1]$ and shows that the optimal constrained learner is equal to the optimal learner of the well-specified stochastic setting of the set $\Theta$, as can be seen in Figure \ref{figPriors}. 

\begin{figure}[ht]        
        \centering
        \includegraphics[width=0.4\textwidth]{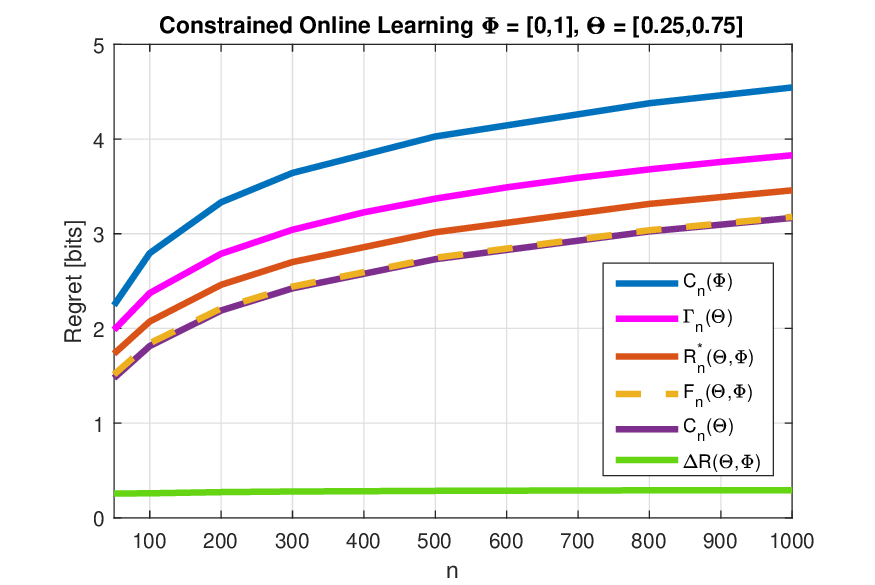}
        \caption{Numerical evaluation of the minimax regret as a function of \( n \) for various online learning settings in the Bernoulli case, with \( \Phi = [0,1] \) and \( \Theta = [0.25,0.75] \). The constrained minimax regret exceeds the well-specified capacity by a constant penalty of \( \Delta R(\Theta,\Phi) = 0.3~\text{[bits]} \).}
        \label{figRegretVsN}
\end{figure}
\begin{figure}[ht]        
        \centering
        \includegraphics[width=0.4\textwidth]{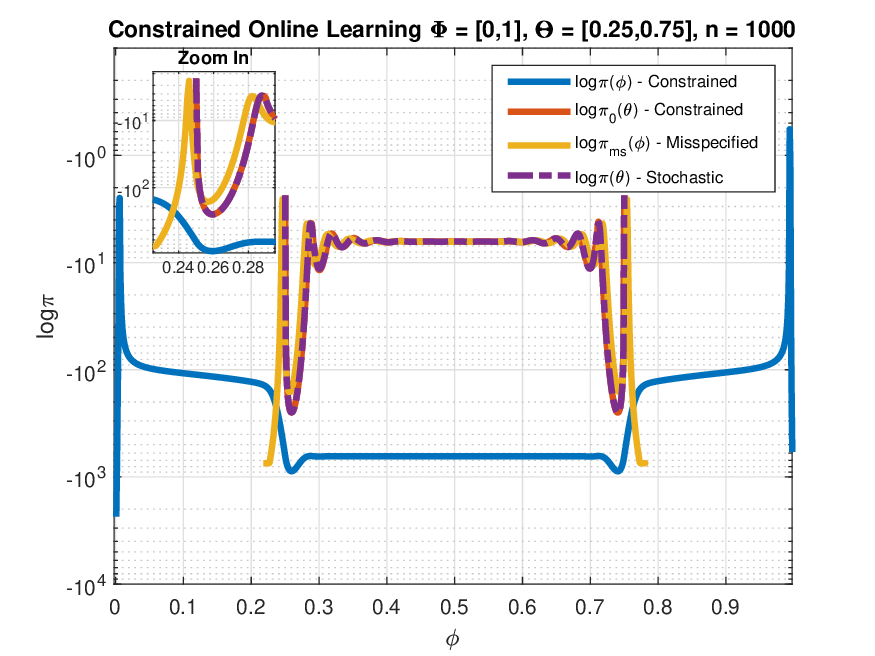}
        %\caption{The prior distributions of the various settings: $\pi(\phi)$ and $\pi_0(\theta)$ are the priors of the constrained setting, $\pi_{ms}(\phi)$ is the misspecified setting prior and $\pi(\theta)$ is the stochastic setting prior. Zooming in highlights the identity between $\pi_0(\theta)$ and $\pi(\theta)$.  }
        \caption{Prior distributions for the different settings. The priors \(\pi(\phi)\) and \(\pi_0(\theta)\) correspond to the constrained setting, \(\pi_{ms}(\phi)\) corresponds to the misspecified setting, and \(\pi(\theta)\) denotes the prior in the well-specified stochastic setting. The zoomed-in view highlights the equivalence between \(\pi_0(\theta)\) and \(\pi(\theta)\).}
        \label{figPriors}
\end{figure}                         

Another noteworthy observation, also supported numerically in Figure~\ref{figRegretVsN}, is that the constrained misspecified regret is upper bounded by the individual-sequence regret \( \Gamma_n(\Theta) \). In the following, we establish this upper bound analytically for the Bernoulli model class, consisting of all hypotheses with success probability \( \theta \in \Theta = (a,b) \), where \( 0 < a < b < 1 \).

To this end, let \( h(\cdot) \) denote the binary entropy function. For sequences of length \( n \), let \( \hat{p}_k \equiv  \frac{k}{n} \in \{0, \tfrac{1}{n}, \ldots, 1\} \) denote the empirical distribution. In the individual-sequence setting, the minimax regret is given by
\begin{align}
    \begin{aligned}
    \nonumber
        \Gamma_n(\Theta) &= \log\sum_{k=0}^n\binom{n}{\hat{p}_kn}\max_\theta2^{-n(D(\hat{p}_k\|\theta)+h(\hat{p}_k))}.
        %\\& = \log\sum_{k=0}^n\max_\theta A_n(\hat{p}_k,\theta),
    \end{aligned}
\end{align}
%where, $A_n(\hat{p}_k,\theta) \equiv \binom{n}{\hat{p}_kn}2^{-n(D(\hat{p}_k\|\theta)+h(\hat{p}_k))}$.

For large $n$, the dominant contribution comes from $\hat p_k \in (a,b)$. 
Hence,
\begin{align}
    \nonumber
    \Gamma_n(\Theta) = \log\bigg(\sum_{k=an}^{bn} \binom{n}{\hat{p}_kn}\, 2^{-n h(\hat{p}_k)}\bigg) + o(1).
\end{align}
Applying Stirling’s approximation to the binomial coefficient,
\[
    \binom{n}{\hat{p}_kn} = \frac{1}{\sqrt{2\pi n \hat{p}_k (1-\hat{p}_k)}}\,2^{n h(\hat{p}_k)} + O\left(\frac{1}{n}\right),
\]
yields the following
\begin{align}
    \begin{aligned}
        \nonumber
        \Gamma_n(\Theta) = \log\bigg(\sqrt{\frac{n}{2\pi}} \sum_{k=an}^{bn} \frac{1}{\sqrt{\hat{p}_k(1-\hat{p}_k)}}\cdot\frac{1}{n}\bigg) + o(1).
    \end{aligned}            
\end{align}
Approximating the Riemann sum by an integral gives,
\begin{align}
    \begin{aligned}
        \nonumber
        &\sum_{\hat{p}_k=a/n}^{b/n} \frac{1}{\sqrt{\hat{p}_k(1-\hat{p}_k)}}\cdot\frac{1}{n} = \int_a^b \frac{1}{\sqrt{p(1-p)}}\,dp 
        \\&+ O(1/n) = \arcsin(2b-1) - \arcsin(2a-1) + O(1/n).
    \end{aligned}
\end{align}
Combining the steps above,
\begin{align}
    \nonumber
    \Gamma_n(\Theta) &= \frac{1}{2}\log\frac{n}{2\pi} + \log(\arcsin(2b-1) 
    \\&- \arcsin(2a-1)) + o(1).
\end{align}
We can interpret the result as follows:
\begin{align}
    \begin{aligned}
        \Gamma_n(\Theta) &= \min_{Q} \max_{y^n} \log \frac{P_{\hat{\theta}}(y^n)}{Q(y^n)} 
        \approx \min_{Q} \max_{\hat{p}_k \in \Theta} \log \frac{P_{\hat{\theta}}(\hat{p}_k)}{Q(\hat{p}_k)}
        \\ &=  \min_{\pi_0(\theta)} \max_{\hat{p}_k \in \Theta} \log \frac{P_{\hat{\theta}}(\hat{p}_k)}{Q(\hat{p}_k)},
    \end{aligned}
\end{align}
where \( Q \) is taken to be a Bayesian mixture distribution induced by the Jeffreys prior \( \pi_0(\theta) \propto |I(\theta)|^{\frac{1}{2}} = \frac{1}{\sqrt{\theta(1-\theta)}} \), rather than the NML distribution originally derived by Shtarkov~\cite{shtar1987universal}. This choice follows the approach of Barron \emph{et al.}, who showed that Bayesian mixture distributions based on the Jeffreys prior achieve the same minimax regret asymptotically, see, e.g., \cite{ClarkeBarron1990,clarke1994jeffreys,barron2000individual,barron2013markov}.
In addition, the constrained misspecified regret, where $\Phi = \mathcal{P}$, the set of all probability distributions on $\mathcal{Y}^n$, can be approximated as follows:
\begin{align}
    \begin{aligned}
        R^*_{o,n}(\Theta,\mathcal{P}) &= \min_{\pi_0(\theta)} \max_{y^n} \log \frac{P_{\hat{\theta}}(y^n)}{Q(y^n)}
        \approx \min_{\pi_0(\theta)} \max_{\hat{p}_k \in \Theta} \log \frac{P_{\hat{\theta}}(\hat{p}_k)}{Q(\hat{p}_k)}.
    \end{aligned}
\end{align}
Therefore, we have asymptotically
\begin{align}
    \begin{aligned}
        \nonumber
        R^*_{o,n}(\Theta,\mathcal{P}) &= \Gamma_n(\Theta) = \frac{1}{2} \log\frac{n}{2\pi} + \log(\arcsin(2b - 1) 
        \\&- \arcsin(2a - 1)) + o(1),
    \end{aligned}
\end{align}
and in summary,
\begin{align}
    \begin{aligned}
        \nonumber
        C_n(\Theta) \leq R^*_{o,n}(\Theta,\Phi) \leq R_{o,n}^*(\Theta,\mathcal{P}) = \Gamma_n(\Theta).
    \end{aligned}
\end{align}    
If we further assume \( b = 1 - a \), then the expression simplifies to:
\begin{align}
    \begin{aligned}
        \nonumber
        R^*_{o,n}(\Theta,\Phi) \leq \Gamma_n(\Theta) &= \frac{1}{2} \log\frac{n}{2\pi} 
        \\&+ \log(2 \arcsin(1 - 2a)) + o(1),
    \end{aligned}    
\end{align}
for any \( a \in (0,1) \), as illustrated numerically for \( a = 0.25 \) in Figure~\ref{figRegretVsN}.

\emph{Markov Chain Models Example:}                         
Another interesting example is the constrained misspecified regret where $\Phi$ is the set of all ergodic Markov chain models of order $m$, while the set of hypotheses $\Theta$ contains only the subset of the ergodic Markov chain distributions of order $k$ for $k<m$ and alphabet $A=\{0,1,\dots,d\}$. In \cite{markov99redundancy} the constrained misspecified regret, where $\Phi$ is the set of $\phi$-mixing distributions, see \cite{processes69learning}, and $\Theta$ is the family of Markov chains of order $k$, was analyzed.
This regret for a given distribution $P\in\phi$-mixing and a prior distribution $\pi_0(\theta)$ was shown to be given by: 
\begin{align} 
    \begin{aligned} 
    \label{phiMixingRegret}
        R_n(P,\pi_0) &= \frac{(d+1)^kd}{2}\log \frac{n}{2\pi} 
        + \log \frac{{|I(\theta^*)|^{\frac{1}{2}}}}{\pi_0(\theta^*)}
        \\& - \sum_{t \in A^k}\sum_{u \in A} \frac{\tau^2_{t,u}}{2\eta^*_{t,u}}\log{e} + o(1),
    \end{aligned}
\end{align}
where $\theta^*$ is the projection of $P$ onto the set $\Theta$, which is given by the truncation of $P$ to a Markov process of order $k$. 
We explicitly denote by $\theta^*_{t,u}$ the transition probability from string $t\in A^k$ to symbol $u\in A$.  In addition, $\eta^*_{t,u}$ and $\tau_{t,u}^2 = \lim_{n\to\infty}n\mathbb{E}_{\phi}\left\{ (\hat{\eta}_{t,u} - \hat{\eta}_t\theta_{t,u}^*)^2 \right\}$ are the expectation and the variance growth rate of the empirical distribution of the string $tu$ of length $k+1$, denoted by $\hat{\eta}_{t,u}$ and $I(\theta^*)$ is the Fisher information of $\theta^*$.        
Since the family of Markov chain distributions is included in the set of $\phi$-mixing, by applying the same steps as in \cite{markov99redundancy}, we can conclude that the constrained misspecified regret where $\Phi$ and $\Theta$ are the sets of Markov chain distributions of orders $m$ and $k$, respectively, is also given by (\ref{phiMixingRegret}) for any $\phi\in\Phi$ and prior distribution $\pi_0(\theta)$.

We can immediately conclude that the constrained misspecified regret for any given $\phi$ and under the asymptotically stochastic prior distribution, also known as \emph{Jeffreys' prior} \cite{jeffreys46priors},\cite{rissanen96cimplexity},\cite{universal98prediction}, $\pi_0^*(\theta) = c |I(\theta|^{\frac{1}{2}}$ of the set of Markov chain families of order $k$ holds the upper bound $R_n(\phi,\pi_0^*) \leq C_{n}(\Theta) + \frac{(d+1)^kd}{2}\log{e},$        
where $C_{n}(\Theta) = \frac{(d+1)^kd}{2}\log \frac{n}{2\pi e}  - \log(c) + o(1)$ is the capacity of the well-specified stochastic setting of the set $\Theta$, according to (\cite{markov99redundancy}, Corollary 1). Therefore, we have the following
\begin{align}
    \begin{aligned}
    \label{markovConstrainedBound}
        R^*_{o,n}(\Theta,\Phi) \leq C_{n}(\Theta) + \frac{(d+1)^kd}{2}\log{e}.
    \end{aligned}
\end{align}        
In words, we see again that also in the constrained misspecified setting of Markov chain models, the usage of the well-specified stochastic mixture distribution over the set of hypotheses $\Theta$, has a fixed bounded penalty relative to the well-specified stochastic minimax regret of $\Theta$. 

Now we turn to demonstrate that this bound is tight by analyzing the binary Markov case where $m=1$ and $k=0$. To do so, we need to evaluate the moments of the empirical distribution, $\hat{\eta}_u = \frac{\sum_{t=1}^n \mathds{1}_{\{Y_t=u\}} }{n}$, for $u=0,1$. Note that the projection of any $\phi = (\phi_{01},\phi_{{10}})\in\Phi$ onto $\Theta$, where $\phi_{u,1-u}$ is the transition probability from symbol $u$ to symbol $1-u$, is given by the stationary distribution, denoted by $\theta^*(\phi)=\frac{\phi_{01}}{\phi_{01}+\phi_{10}}$. Asymptotically, the first moment of $\hat{\eta}_u$ equals ${\theta^*}^u(1-\theta^*)^{1-u}$ and its constant growth rate of the variance is given by the following:
\begin{align}
\label{tau}
    \begin{aligned}
        \tau_u^2 \equiv \lim_{n\to\infty}n\mathbb{E}_{\phi}\left\{ (\hat{\eta}_u - {\theta^*}^u(1-\theta^*)^{1-u})^2 \right\}.
    \end{aligned}
\end{align}
To evaluate (\ref{tau}) we use \cite{markov99redundancy} Corollary 1 proof, which is based on (\cite{processes69learning}, proposition 1.1.20), to get:
\begin{align}
    \begin{aligned}
        \tau_u^2 &=  
        {\theta^*}^u(1-\theta^*)^{1-u}
         \sum_{t=1}^\infty(P_{uu}^{(t+1)} - {\theta^*}^u(1-\theta^*)^{1-u})
         \\&+ \theta^*(1-\theta^*)
    \end{aligned}
\end{align}
where, $P_{uu}^{(t+1)} \equiv Pr\{Y_{t+1}=u|Y_1=u\}$. 

Therefore, the constrained misspecified regret for a given $\phi\in\Phi$ under the well-specified stochastic capacity achieving prior $\pi_0^*(\theta) = \frac{c}{\sqrt{\theta(1-\theta)}}$, where $c^{-1} = {\int_0^1{ \frac{d\theta}{\sqrt{\theta(1-\theta)}}}}$, is given by:
\begin{align}
    \begin{aligned}
    \label{binaryMarkovRegret}
        R_{n}(\phi,\pi^*_0(\theta)) = C_{n}(\Theta) 
         + \sum_{t=1}^\infty a_t,
    \end{aligned}
\end{align}
where, $a_t = \frac{1}{2}\left( 1 - P_{11}^{(t+1)} - P_{00}^{(t+1)} \right)\log{e}$.
In order to maximize (\ref{binaryMarkovRegret}) by $\phi$, and for symmetry reasons, we can maximize it by one parameter $0<\delta<1$, such that $\phi_{01}=\phi_{10}=1-\delta$. Note that for any $\delta$ we have $\theta^* = \frac{1}{2}$.
It can be verified that when $\delta\to 0^+$, there is no limit to the sum $\sum_{t=1}^{\infty}a_t$ since we have lost the ergodicity of the process, but it tends to $\sum_{t=1}^{\infty}\frac{1}{2}(-1)^{t+1}\log{e}$, which is bounded by $\frac{1}{2}\log{e}$, as expected by (\ref{markovConstrainedBound}). The intuition behind this result is the fact that when $\delta\to 0^+$, there are approximately only two possible series of data, one is $101010\cdots$ and the other is $010101\cdots$. In such a case, the projection $\theta^*=\frac{1}{2}$ approximately fits the real data generating distribution. Hence, it is difficult for the stochastic mixture distribution to compete with the compression performance of $\theta^*$, leading to this extra penalty.

Interestingly, on the other extreme, when $\delta\to 1^-$, the process also loses its ergodicity and $\sum_{t=1}^{\infty}a_t\to -\infty$. The intuition behind this result is the fact that when $\delta\to 1^{-}$, there are approximately only two possible series of data: $111111\cdots$ and $000000\cdots$. Therefore, the projection $\theta^*=\frac{1}{2}$ is a very poor predictor of the potential data sequences, whereas the mixture distribution performs much better.        
For a deeper understanding of the result, we remind the reader that regret can only be interpreted as the compression redundancy of the universal learner in relation to the compression performance of $\theta^*$, while the compression performance of $\theta^*$ is given by $D(P_\phi\|P_{\theta^*}) = n-1 \to \infty$, and that corresponds to $D(P_\phi\|P_{\theta^*}) = D(P_\phi\|Q_{\pi_0^*}) - R_n(\phi,\pi_0^*) \to \infty$.

\subsection{Constrained Misspecified Analysis in Smooth Parametric Models}
\label{subsec:constrained_smooth_parametric_models}    
As an initial step, we examine the relationship between the constrained misspecified regret and the individual setting regret in smooth parametric models. 
Trivially, the following inequality holds: 
\begin{align}
\label{eq:constraineVsindividual}
R_{o,n}^*(\Theta, \Phi) 
\leq R_{o,n}^*(\Theta, \mathcal{P}) 
\equiv \min_{\pi(\theta)} \max_{y^n} \log \frac{\max_{\theta\in\Theta}P_{\theta}(y^n)}{Q(y^n)},
\end{align}
where $\mathcal{P}$
%\begin{align}
%    \begin{aligned}
%        \nonumber
%        \mathcal{P} = \left\{ P : P(y^n) \geq 0, \int_{\mathcal{Y}^n}{P(y^n)}\,dy^n = 1 \right\},
%    \end{aligned}
%\end{align}
is the set of all probability distributions on $\mathcal{Y}^n$.

In addition, by definition, we have the following:
\[
\nonumber
R_{o,n}^*(\Theta, \mathcal{P}) \geq \min_{Q(y^n)}\max_{y^n}\log\frac{\max_{\theta\in\Theta}P_{\theta}(y^n)}{Q(y^n)} = \Gamma_n(\Theta),
\]
where, $Q(y^n)$ is the NML universal predictor, see \cite{shtar1987universal}.
However, in many smooth parametric settings of \( \Theta \), the asymptotic behavior satisfies the following:
\begin{align}
    \begin{aligned}
        \label{eq:constrainedSmoothParametricAsymptotic}
        R_{o,n}^*(\Theta, \mathcal{P}) \to \Gamma_n(\Theta) = \frac{d}{2}\log\frac{n}{2\pi} + \log\int_{\Theta}{|I(\theta)|^{\frac{1}{2}}}\,d\theta + o(1),
    \end{aligned}
\end{align}
where $d$ is the model dimension and the prior distribution \( \pi(\theta) \propto |I(\theta)|^{\frac{1}{2}} \) corresponds to the Jeffreys prior. This prior is known to be the capacity achieving distribution in the well-specified stochastic setting over the interior of the set \( \Theta \). Moreover, $\Gamma_n(\Theta) = C_n(\Theta) + \frac{d}{2}\log{e} + o(1)$ in these cases, where $C_n(\Theta)$ denotes the capacity of the well-specified setting of $\Theta$.

Such smooth parametric models, where the assumptions of the Laplace method hold, include discrete memoryless multinomial distributions \cite{barron2000individual}, exponential families \cite{barron2013nonExponential}, Markov models \cite{markov99redundancy, barron2013markov}, and finite-state machine (FSM) models \cite{gotoh1998fsmx}.

Note that by combining (\ref{eq:constraineVsindividual}) and (\ref{eq:constrainedSmoothParametricAsymptotic}) we get the following asymptotic result for such a smooth parametric hypothesis class of $\Theta$:
\begin{align}
    \begin{aligned}
    %\nonumber
         \Gamma_n(\Theta) \leq R_{o,n}^*(\Theta,\mathcal{P}) \to \Gamma_n(\Theta),
    \end{aligned}
\end{align}
and
\begin{align}
    \begin{aligned}
    \label{eq:constrained_regret_upper_bound}
         R_{o,n}^*(\Theta,\Phi) \leq R_{o,n}^*(\Theta,\mathcal{P}) \to \Gamma_n(\Theta).
    \end{aligned}
\end{align}

As an example, in Section \ref{subsec:constrained_online}, we investigated numerically the constrained misspecified setting for Bernoulli distributions, where $\Phi = [0,1]$ and $\Theta = [a,b] \subset [0,1]$ is a restricted subset. For the case $a = 1 - b = 0.25$, the following inequality was established:
\[
R_{o,n}^*(\Theta,\Phi) = C_n(\Theta) + 0.3~\text{[bits]} \leq \Gamma_n(\Theta),
\]
where
\[
\Gamma_n(\Theta) = \frac{1}{2}\log\frac{n}{2\pi} + \log\big(2 \arcsin(1 - 2a)\big) + o(1),
\]
for any $a = 1 - b$, and satisfies $\Gamma_n(\Theta) = C_n(\Theta) + \frac{1}{2}\log{e} + o(1)$. 
Furthermore, numerical results indicated that the prior $\pi(\theta)$ coincides with the capacity achieving prior distribution in the well-specified stochastic setting over $\Theta$.

Note that, using the inequality (\ref{eq:constrained_regret_upper_bound}), we can alternatively derive the result for the Markov example presented in Section~\ref{subsec:constrained_online} as follows:
\begin{align}
\begin{aligned}
\nonumber
R_{o,n}^*(\Theta,\Phi)
&\leq R_{o,n}^*(\Theta,\mathcal{P}) \\
&\leq \frac{(d+1)^kd}{2}\log\!\frac{n}{2\pi}
    + \log\!\int_{\Theta} |I(\theta)|^{\frac{1}{2}}\, d\theta
    + o(1) \\
&= C_n(\Theta) + \frac{(d+1)^kd}{2}\log{e} + o(1).
\end{aligned}
\end{align}
Here, the second inequality follows from the results of \cite{barron2013markov}, which derive upper and lower bounds on the minimax and maximin regrets, respectively, for Markov chain models under Bayesian mixture distributions in the individual-sequence setting.

Inspired by the preceding results and inequalities, a more general formulation of the constrained misspecified regret and the capacity achieving prior distribution in smooth parametric models is presented in the following theorem.

\begin{thm}
\label{thm:main_result}
Let $\Theta \subseteq \Phi$ be a smooth parametric model of i.i.d. distributions such that within the interior of $\Theta$, the Laplace approximation and regularity conditions of the log-likelihood hold. Under these conditions, the constrained misspecified minimax regret in the online setting admits the following asymptotic characterization (in natural units):
\begin{align}
\begin{aligned}
\label{eq:constrainedSmoothParametricAsymptotic1}
&R_{o,n}^*(\Theta, \Phi) 
= \frac{d}{2} \log \frac{n}{2\pi} + \log \int_\Theta \left| I(\theta) \right|^{\frac{1}{2}} \, d\theta 
\\&+\frac{1}{2}\max_{P_\phi \in \Phi} \left(  \log\frac{|J_\phi(\theta^*)|}{|I(\theta^*)|}
-\mathrm{trace}(K_\phi(\theta^*)J_\phi(\theta^*)^{-1}) \right) + o(1)
\end{aligned}
\end{align}
Moreover, the minimax regret is achieved by the Bayesian mixture distribution $Q(x^n) = \int_\Theta \pi(\theta)\, P_\theta(x^n)\, d\theta$, with prior \( \pi(\theta) \propto |I(\theta)|^{\frac{1}{2}} \).
\end{thm}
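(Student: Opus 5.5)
The plan is to compute, for each fixed $P_\phi\in\Phi$ and each smooth prior $\pi_0$ on $\Theta$, the asymptotic expansion of the constrained regret $R_n(\theta^*,\phi,Q_{\pi_0})=\mathbb{E}_{P_\phi}\{\log P_{\theta^*}(Y^n)/Q_{\pi_0}(Y^n)\}$. This reduces the min-max over $(\pi_0,\phi)$ to a finite-dimensional problem whose $\pi_0$-part is solved by the Jeffreys prior.

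\textbf{Step 1 (misspecified Laplace approximation).} Fix $\phi$ with $\theta^*$ in the interior of $\Theta$. Under $P_\phi$ the MLE $\hat\theta$ converges to $\theta^*$. By the standard regularity conditions of \cite{clarke1990infoasymptotics}, the score $S_n=\nabla_\theta\log P_{\theta^*}(Y^n)$ has mean zero (the first-order condition of the KL projection) and covariance $nK_\phi(\theta^*)$. Also $\hat I(\theta^*,Y^n)\to J_\phi(\theta^*)$. The Laplace method applied to $\int\pi_0(\theta)P_\theta(y^n)d\theta$ around $\hat\theta$ gives
\begin{align}
\nonumber
\log\frac{P_{\hat\theta}(y^n)}{Q_{\pi_0}(y^n)}=\frac d2\log\frac{n}{2\pi}+\frac12\log|\hat I(\hat\theta,y^n)|-\log\pi_0(\hat\theta)+o(1).
\end{align}
A second-order expansion of $\log P_\theta$ around $\hat\theta$ gives $\log P_{\hat\theta}(y^n)-\log P_{\theta^*}(y^n)\approx\frac1{2n}S_n^\top J_\phi(\theta^*)^{-1}S_n$, whose expectation is $\frac12\mathrm{trace}(K_\phi J_\phi^{-1})$. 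Subtracting, and using $\hat\theta\to\theta^*$ together with uniform integrability to replace $\hat\theta$ by $\theta^*$ inside the expectation, yields
\begin{align}
\nonumber
R_n(\theta^*,\phi,Q_{\pi_0})=\frac d2\log\frac{n}{2\pi}+\frac12\log|J_\phi(\theta^*)|-\log\pi_0(\theta^*)-\frac12\mathrm{trace}\!\left(K_\phi(\theta^*)J_\phi(\theta^*)^{-1}\right)+o(1).
\end{align}
In the well-specified case $J=K=I$, and this reduces to the Clarke--Barron expansion, which serves as a sanity check.

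\textbf{Step 2 (decoupling the min-max).} Write $-\log\pi_0(\theta^*)=\log\frac{|I(\theta^*)|^{1/2}}{\pi_0(\theta^*)}-\frac12\log|I(\theta^*)|$. The regret then becomes $\frac d2\log\frac n{2\pi}+g(\phi)+\log\frac{|I(\theta^*)|^{1/2}}{\pi_0(\theta^*)}$, where $g(\phi)=\frac12\big(\log\frac{|J_\phi|}{|I|}-\mathrm{trace}(K_\phi J_\phi^{-1})\big)$ evaluated at $\theta^*$.

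For the upper bound, take $\pi_0=|I|^{1/2}/\int_\Theta|I|^{1/2}$. The last term becomes the constant $\log\int_\Theta|I|^{1/2}$, so the maximum over $\phi$ equals the claimed expression.

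For the lower bound, use the fact that every $\theta\in\Theta$ is its own projection with $J_\theta=K_\theta=I$ and $g=-d/2$. Consider, for each $\theta$, a $\phi_\theta$ that projects to $\theta$ and (nearly) maximizes $g$ over the fiber $\{\phi:\theta^*(\phi)=\theta\}$. If this fiber maximum $g^*$ is constant in $\theta$, then for any $\pi_0$
\begin{align}
\nonumber
\max_\phi R_n\ge \frac d2\log\frac n{2\pi}+g^*+\max_\theta\log\frac{|I(\theta)|^{1/2}}{\pi_0(\theta)}\ge\frac d2\log\frac n{2\pi}+g^*+\log\int_\Theta|I|^{1/2},
\end{align}
because $\pi_0$ integrates to one. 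The same conclusion follows by averaging over $\theta$ drawn from the Jeffreys prior, i.e.\ a maximin argument as in Theorem~\ref{thm:thmOnlineConstrainedRegret}.

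\textbf{Main obstacle.} I expect two difficulties.

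The first is making Step 1 rigorous uniformly in $\phi$. This requires uniform integrability of the quadratic form, control of $\hat\theta$ near the boundary of $\Theta$ (handled as in \cite{barron2000individual} by restricting to compact subsets of the interior and using the Jeffreys mixture's boundary behaviour), and uniform convergence of the $o(1)$ terms over $\Phi$.

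The second is the lower bound when the fiber maximum of $g$ depends on $\theta^*$. In that case I would instead argue via the maximin form: choose $\pi(\phi)$ supported on near-maximizers. I would then accept that the statement's maximum over $P_\phi$ should be read with the $\pi_0$-optimal reweighting. I would first verify constancy for exponential families, where $J_\phi=I$ and $g=-\frac12\mathrm{trace}(\mathrm{Cov}_\phi(T)I^{-1})$; this makes the structure transparent.
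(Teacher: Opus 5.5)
Your Step~1 and your upper bound are the paper's proof. It also expands $\log(P_{\hat\theta}/Q)$ by Laplace's method and replaces $\frac12\log(|\hat I(\hat\theta)|/|I(\hat\theta)|)$ by $\frac12\log(|J_\phi(\theta^*)|/|I(\theta^*)|)$ via White. It then adds the Godambe--White term $-\frac12\mathrm{trace}(K_\phi J_\phi^{-1})$ and plugs in Jeffreys.

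Where you differ is optimality. The paper has no lower-bound argument. It swaps $\min_{\pi}$ and $\max_{P_\phi}$ over \emph{pure} $P_\phi$ and asserts that Jeffreys minimizes the Laplace term ``independently of $P_\phi$.'' Neither step is valid. For a fixed $P_\phi$, a prior concentrated near $\theta^*$ beats Jeffreys. Sion's theorem only justifies the swap over mixtures $\pi(\phi)$, as in Theorem~\ref{thm:thmOnlineConstrainedRegret}. All Jeffreys actually does is make $\log(|I(\theta^*)|^{1/2}/\pi_0(\theta^*))$ constant in $\theta^*$. Your Step~2 correctly separates this equalizing term from the fiberwise maximization of $g$. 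It also shows that your second ``obstacle'' is a genuine gap, and that the gap lies in the statement itself.

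To see this, let $g^*(\theta)=\sup\{g(\phi):\theta^*(\phi)=\theta\}$. Asymptotically the problem is $\min_{\pi_0}\sup_\theta\bigl[g^*(\theta)+\log(|I(\theta)|^{1/2}/\pi_0(\theta))\bigr]$. Its value is $\log\int_\Theta|I(\theta)|^{1/2}e^{g^*(\theta)}\,d\theta$, attained (after smoothing if necessary) by $\pi_0\propto|I|^{1/2}e^{g^*}$. For the matching lower bound, average over $\theta$ with this normalized density $w$ and use $D(w\|\pi_0)\ge 0$. Since $e^{g^*}\le e^{\sup g^*}$, this value is strictly below the claimed $\log\int_\Theta|I|^{1/2}+\sup_\theta g^*(\theta)$ unless $g^*$ is a.e.\ constant. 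In that non-constant case Jeffreys is not minimax and the stated formula is wrong.

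Reading $\max_{P_\phi}$ ``with the $\pi_0$-optimal reweighting'' does not prove the statement; it replaces it with a different theorem. A maximin prior supported on near-maximizers of $g$ does not help either, since $\pi_0$ can then place extra mass at their projections. Exponential families do not give constancy automatically: $g=-\frac12\mathrm{trace}(\mathrm{Cov}_{P_\phi}(T)I(\theta^*)^{-1})$ still depends on which $P_\phi$ project to each $\theta$. In the paper's own Poisson--GLM example, $g^*(\theta)=\max\bigl(-\frac d2,-\frac{1}{2\sigma^2}\sum_i\theta_i\bigr)$ where $\min_i\theta_i\ge\lambda_0$, and $-\frac d2$ elsewhere.

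The correct repair is either to add the hypothesis that $g^*$ is constant on the interior of $\Theta$, or to state the corrected value and prior. Under the constancy hypothesis your argument proves the theorem. What then remains is uniformity of the $o(1)$ terms over $\Phi$ and the boundary control you mention, which the paper does not address either.
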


\begin{IEEEproof}
    See Appendix \ref{subsec:appendix_constrained_setting}.
\end{IEEEproof}
Note that the minimax regret expression in (\ref{eq:constrainedSmoothParametricAsymptotic1}) is in natural units, and multiplication of the trace term by the factor \( \log e \) makes it invariant to the choice of logarithmic base.

The expression in Theorem \ref{thm:main_result} can be simplified for exponential families of distributions, as can be shown by the following theorem.
\begin{thm}
    \label{thm:exponentialFamilyAsymptotic}
    Let $\Theta \subseteq \Phi$ be an exponential family of i.i.d. distributions of the form $P_\theta(y^n) = \prod_{i=1}^nh(x_i)\exp(\theta^{\top} \cdot T(x_i)-\psi(\theta))$, then within the interior of $\Theta$, the constrained misspecified minimax regret in the online setting admits the following asymptotic form:
    \begin{align}
    \begin{aligned}
    \label{eq:constrainedExponentialFamilyAsymptotic1}
    &R_{o,n}^*(\Theta, \Phi) 
    = \frac{d}{2} \log \frac{n}{2\pi} + \log \int_\Theta \left| \nabla^2_\theta \psi(\theta) \right|^{\frac{1}{2}} \, d\theta
    \\&-\frac{\log{e}}{2}\min_{P_\phi \in \Phi} \left( \mathrm{trace}(\operatorname{Cov}_{P_\phi}(T(X)) \nabla^2_\theta \psi(\theta^*)^{-1}) \right) + o(1).
    \end{aligned}
    \end{align}
    Moreover, the minimax regret is achieved by the Bayesian mixture distribution \( Q(y^n) = \int_\Theta \pi(\theta) P_\theta(y^n) \, d\theta \), with prior \( \pi(\theta) \propto |\nabla^2_\theta \psi(\theta)|^{\frac{1}{2}} \).
\end{thm}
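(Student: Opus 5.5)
The plan is to obtain Theorem~\ref{thm:exponentialFamilyAsymptotic} as a specialization of Theorem~\ref{thm:main_result}. We substitute the exponential-family identities recorded in Section~\ref{sec:basic_definitions} into the general asymptotic expression \eqref{eq:constrainedSmoothParametricAsymptotic1}. A regular exponential family (interior of the natural parameter space) satisfies the smoothness and Laplace-approximation hypotheses of Theorem~\ref{thm:main_result}: $\psi$ is analytic and strictly convex, and the MLE exists and is unique whenever the sufficient statistic lies in the interior of the mean-parameter space. So Theorem~\ref{thm:main_result} applies with $\Theta$ restricted to its interior.

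The key steps are as follows.
\begin{enumerate}
\item Since $\nabla_\theta\log P_\theta(y)=T(y)-\nabla\psi(\theta)$, we have $-\nabla^2_\theta\log P_\theta(y)=\nabla^2\psi(\theta)$. This does not depend on $y$. Hence for every $P_\phi\in\Phi$,
\[
J_\phi(\theta)=\nabla^2\psi(\theta)=I(\theta).
\]
It follows that $\log\bigl(|J_\phi(\theta^*)|/|I(\theta^*)|\bigr)=0$ identically, and that $\int_\Theta|I(\theta)|^{1/2}d\theta=\int_\Theta|\nabla^2\psi(\theta)|^{1/2}d\theta$.
\item Identify the projection $\theta^*$. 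Minimizing $D(P_\phi\|P_\theta)$ over $\theta$ amounts to minimizing $\psi(\theta)-\theta^\top\mathbb{E}_{P_\phi}T(Y)$. The first-order condition is the moment-matching equation $\nabla\psi(\theta^*)=\mathbb{E}_{P_\phi}T(Y)$. Consequently the score has zero mean under $P_\phi$ at $\theta^*$, and
\[
K_\phi(\theta^*)=\mathbb{E}_{P_\phi}\bigl[(T-\nabla\psi(\theta^*))(T-\nabla\psi(\theta^*))^\top\bigr]=\operatorname{Cov}_{P_\phi}(T(Y)).
\]
\item Substitute into the bracket of \eqref{eq:constrainedSmoothParametricAsymptotic1}. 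The maximized term becomes $\max_{P_\phi}\bigl(-\mathrm{trace}(\operatorname{Cov}_{P_\phi}(T)\,\nabla^2\psi(\theta^*)^{-1})\bigr)=-\min_{P_\phi}\mathrm{trace}(\cdot)$. Multiplying by $\log e$ to pass from natural units to a general base gives \eqref{eq:constrainedExponentialFamilyAsymptotic1}. The optimal prior $\propto|I(\theta)|^{1/2}$ becomes $\propto|\nabla^2\psi(\theta)|^{1/2}$.
\end{enumerate}

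A more self-contained route, which we would also sketch, uses the structure directly. For exponential families $\hat I(\hat\theta)=I(\hat\theta)$, and $\log P_{\hat\theta}(y^n)-\log P_{\theta^*}(y^n)=n\,D(P_{\hat\theta}\|P_{\theta^*})$ exactly. The Laplace expansion of the Jeffreys mixture then gives
\[
\log\frac{P_{\theta^*}(y^n)}{Q(y^n)}=\frac d2\log\frac{n}{2\pi}+\log\int|I|^{1/2}-nD(P_{\hat\theta}\|P_{\theta^*})+o(1).
\]
Taking $\mathbb{E}_{P_\phi}$ and using the CLT for $\bar T$ together with the moment-matching equation, we get $n\,\mathbb{E}_{P_\phi}D\to\frac12\mathrm{trace}(\operatorname{Cov}_{P_\phi}(T)I(\theta^*)^{-1})$. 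This yields the regret of the Jeffreys mixture for each fixed $\phi$. The minimax claim then follows as in Theorem~\ref{thm:main_result}: the Jeffreys prior equalizes the leading terms over $\theta^*$, and a maximin argument based on Theorem~\ref{thm:thmOnlineConstrainedRegret}, with $\pi(\phi)$ concentrated on the worst-case fiber, matches it.

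The main obstacle is uniformity. The expectation step requires uniform integrability of $n\,D(P_{\hat\theta}\|P_{\theta^*})$, and control of the MLE falling near or outside the boundary of $\Theta$, uniformly over $\phi\in\Phi$. We would handle this by restricting to $\phi$ whose projections lie in a compact subset of the interior, as the statement's ``within the interior of $\Theta$'' permits. On that set the regularity already assumed in Theorem~\ref{thm:main_result} bounds the remainder terms. The rest of the argument is the algebraic reduction in the steps above.
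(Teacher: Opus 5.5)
Your first route is the paper's proof. The appendix substitutes $J_\phi(\theta)=I(\theta)=\nabla^2_\theta\psi(\theta)$ and $K_\phi=\operatorname{Cov}_{P_\phi}(T)$ into Theorem~\ref{thm:main_result} and reads off the Jeffreys prior. Your moment-matching step is more careful than the paper, which asserts $K_\phi(\theta)=\operatorname{Cov}_{P_\phi}(T)$ for every $\theta$. In general $K_\phi(\theta)=\operatorname{Cov}_{P_\phi}(T)+(\mathbb{E}_{P_\phi}T-\nabla\psi(\theta))(\mathbb{E}_{P_\phi}T-\nabla\psi(\theta))^\top$, and the correction vanishes only at the moment-matching point, i.e.\ at $\theta^*$ when the projection is interior. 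The achievability computation in your second route is also sound: the exact identity $\log P_{\hat\theta}-\log P_{\theta^*}=nD(P_{\hat\theta}\|P_{\theta^*})$ plus the CLT gives the per-$\phi$ regret of the Jeffreys mixture.

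The genuine gap is the optimality half. You try to supply it at the end of your second route, and the paper never proves it either. The paper inherits it from Theorem~\ref{thm:main_result}, whose proof does two unjustified things. It exchanges $\min_\pi$ with a maximum over single $P_\phi$, and for a single $P_\phi$ the inner minimum is $\le 0$, attained at $\pi=\delta_{\theta^*}$. It then asserts that Jeffreys minimizes the Laplace term ``independently of $P_\phi$''.

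Your version fails for the same reason. If $\pi(\phi)$ sits on one fiber $\{\phi:\theta^*(\phi)=\theta_0\}$, then $\pi_0=\delta_{\theta_0}$ makes the maximin objective of Theorem~\ref{thm:thmOnlineConstrainedRegret} nonpositive. A correct maximin spreads $\theta^*$ over $\Theta$ and picks the worst $\phi$ in each fiber, and then the answer changes. Let
\[
g(\theta)\equiv-\tfrac12\inf\bigl\{\mathrm{trace}\bigl(\operatorname{Cov}_{P_\phi}(T)\,\nabla^2_\theta\psi(\theta)^{-1}\bigr):\ P_\phi\in\Phi,\ \theta^*(\phi)=\theta\bigr\}.
\]
By your own Laplace computation, a smooth prior $\pi_0$ has worst-case regret $\frac d2\log\frac n{2\pi}+\sup_\theta\bigl[\log\frac{|\nabla^2\psi(\theta)|^{1/2}}{\pi_0(\theta)}+g(\theta)\bigr]+o(1)$ (in nats). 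Up to smoothing, this is minimized by the tilted prior $\pi_0\propto|\nabla^2\psi|^{1/2}e^{g}$, with value $\frac d2\log\frac n{2\pi}+\log\int_\Theta|\nabla^2\psi(\theta)|^{1/2}e^{g(\theta)}\,d\theta$. The spread maximin prior gives the same value from below.

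Since $\int|\nabla^2\psi|^{1/2}e^{g}\le e^{\sup g}\int|\nabla^2\psi|^{1/2}$, with equality only if $g=\sup g$ a.e.\ on $\Theta$, the stated formula and the optimality of Jeffreys hold only when the fiberwise worst-case penalty does not depend on $\theta^*$. One such case is $\Phi=\Theta$, where $g\equiv-d/2$. Otherwise the displayed expression is only the worst-case regret of the Jeffreys mixture, i.e.\ an upper bound. The paper's own Poisson example for the GLM is such a case: there $g(\theta)=-\frac12\min(d,\sum_i\theta_i/\sigma^2)$ when all $\theta_i\ge\lambda_0$, and $g(\theta)=-d/2$ otherwise.

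To close the argument you need an extra hypothesis that makes $g$ constant. Alternatively, restate the conclusion with the tilted prior and the value $\log\int|\nabla^2\psi|^{1/2}e^{g}$.
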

\begin{IEEEproof}
    See Appendix \ref{subsec:appendix_constrained_setting}
\end{IEEEproof}

Theorems \ref{thm:main_result} and \ref{thm:exponentialFamilyAsymptotic} together provide a comprehensive characterization of the constrained misspecified framework in the online setting, for smooth parametric models in general, and in particular for exponential families.

We now turn to the constrained misspecified framework in the \emph{batch setting}, adopting a PAC perspective for smooth parametric models. Let \( \Phi \) denote the family of all i.i.d. distributions, and let \( \Theta \subseteq \Phi \) be a smooth parametric model of dimension \( d \).

Given a batch of \( n \) samples \( y^n \), the constrained universal predictor for the next observation \( y_{n+1} \) is defined as
\begin{align}
    \begin{aligned}
    \nonumber
        Q(y_{n+1}|y^n) = \frac{\int_\Theta{\pi(\theta)P_\theta(y^{n+1})}\,d\theta}{\int_\Theta{\pi(\theta)P_\theta(y^n)}\,d\theta}.
    \end{aligned}
\end{align}

Note that, according to our definitions, the appropriate notation of the constrained misspecified minimax regret in the batch setting is denoted by \( R_{n+1}^*(\Theta,\Phi) \), since the prediction involves a total of \( n+1 \) samples. However, by slight abuse of notation, we write it in the analysis below as \( R_n^*(\Theta,\Phi) \), consistent with conventions in the well-specified setting, where the batch size is sometimes taken to be \( n \) without explicitly indexing the next predicted sample. 

The asymptotic characterization of this minimax regret, along with the optimal constrained universal predictor, is given in the following theorem:
\begin{thm}
\label{thm:constrainedMisspecifiedBatch}
Let $\Phi$ be the set of all i.i.d. distributions and $\Theta \subseteq \Phi$ be a smooth parametric model such that within the interior of $\Theta$, the Laplace approximation and regularity conditions of the log-likelihood hold. Under these conditions, the constrained misspecified minimax regret in the batch setting admits the following asymptotic characterization (in natural units):
    \begin{align}
        \begin{aligned}
            \label{eq:asymptoticMinimaxRegretSmoothModelsBatch}
            &R_{b,n}^*(\Theta,\Phi) = \frac{d}{2}\log\left(1 + \frac{1}{n}\right) 
            \\&+ \frac{1}{4n^2}\max_{P_\phi\in\Phi}\mathrm{trace}\left(G(\theta^*) \nabla^2_\theta\log\frac{|I(\theta^*)|}{|J_\phi(\theta^*)|}  \right) 
            + o(n^{-2}),            
        \end{aligned}
    \end{align}
where $G(\theta^*) = J_\phi(\theta^*)^\top K_\phi(\theta^*)^{-1}J_\phi(\theta^*)$ is the Godambe information matrix. Moreover, the minimax regret is achieved by the Bayesian universal predictor $Q(y^{n+1} | y^n)$ with prior \( \pi(\theta) \propto |I(\theta)|^{1/2} \).
\end{thm}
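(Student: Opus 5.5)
The plan is to reduce the batch regret to a difference of two online constrained regrets and then reuse the Laplace-type expansion behind Theorem~\ref{thm:main_result}, now carried to second order. For a fixed prior $\pi(\theta)$ and fixed $P_\phi$, the chain rule for the mixture and for the i.i.d.\ hypothesis $P_{\theta^*}$ gives
\begin{align}
\nonumber
\mathbb{E}_{P_\phi}\left\{\log\frac{P_{\theta^*}(Y_{n+1}|Y^n)}{Q(Y_{n+1}|Y^n)}\right\} = R_{n+1}(\theta^*,\phi,Q_\pi) - R_{n}(\theta^*,\phi,Q_\pi),
\end{align}
where $R_m$ denotes the online constrained regret for $m$ samples. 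So it suffices to expand $R_m(\theta^*,\phi,Q_\pi)$ in $m$ up to terms that are $o(m^{-1})$. The first-order expansion in Theorem~\ref{thm:main_result} contains no $m$-dependent term other than $\frac{d}{2}\log m$. That term alone yields $\frac{d}{2}\log(1+\frac1n)$, and any remaining constant cancels in the difference.

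The main step is therefore a refined Laplace expansion of
\begin{align}
\nonumber
\log\frac{P_{\theta^*}(y^m)}{\int\pi(\theta)P_\theta(y^m)d\theta}
\end{align}
around the MLE $\hat\theta$, keeping the $O(1/m)$ correction terms. First I will write the log mixture as $\log P_{\hat\theta}(y^m)+\log\pi(\hat\theta)+\frac d2\log\frac{2\pi}{m}-\frac12\log|\hat I(\hat\theta,y^m)|+\frac{c(\hat\theta,y^m)}{m}+o(1/m)$. Here $c$ is the standard next-order Laplace term, involving third and fourth derivatives of the log-likelihood and the first and second derivatives of $\log\pi$.

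Next I will take expectations under $P_\phi$. Expanding $\hat\theta=\theta^*+J_\phi^{-1}\bar S+O_p(1/m)$ gives three contributions. The term $\mathbb{E}\log\frac{P_{\theta^*}}{P_{\hat\theta}}$ produces $-\frac12\mathrm{tr}(K_\phi J_\phi^{-1})$ plus an $O(1/m)$ bias. The term $\mathbb{E}\log\pi(\hat\theta)$ produces $\log\pi(\theta^*)$ plus $\frac{1}{2m}\mathrm{tr}(G^{-1}\nabla^2\log\pi)$ plus a gradient-times-bias term, where $G^{-1}=J_\phi^{-1}K_\phi J_\phi^{-1}$ is the sandwich covariance. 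The term $\mathbb{E}\log|\hat I|$ produces $\log|J_\phi(\theta^*)|$ plus similar $O(1/m)$ corrections.

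When these are differenced between $m=n+1$ and $m=n$, every $c/m$ term becomes $-c/n^2+o(n^{-2})$. Plugging in Jeffreys' prior, $\nabla^2\log\pi=\frac12\nabla^2\log|I|$. Combined with the curvature of $\log|\hat I(\theta)|$, whose expectation is $\log|J_\phi(\theta)|$, this should collapse the prior-dependent and Hessian-dependent parts to $\frac{1}{4n^2}\mathrm{tr}\!\big(G\,\nabla^2\log\frac{|I|}{|J_\phi|}\big)$. I will then need to check carefully that the matrix appearing is $G$ as defined in the statement, as opposed to its inverse, and that the remaining third-derivative bias terms either cancel or are independent of $\pi$.

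For minimax optimality I will argue as in Theorem~\ref{thm:main_result}. The leading term $\frac{d}{2}\log(1+\frac1n)$ is prior-independent. Among priors, the $n^{-2}$ coefficient depends on $\pi$ only through the term in $\nabla^2\log\pi$, which is affine in $\log\pi$. Choosing $\pi\propto|I|^{1/2}$ makes the well-specified part ($\phi\in\Theta$, where $J_\phi=K_\phi=I$) vanish uniformly. This is the equalizer condition, and a maximin argument with a least-favorable $\phi$ concentrated on $\Theta$ then shows that no other prior does better at order $n^{-2}$. The maximum over $\phi$ of the residual term is then simply read off.

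The hard step is controlling the $O(1/m)$ Laplace remainder uniformly in $P_\phi$ over all i.i.d.\ laws, and verifying the exact cancellation of the third-derivative terms. There is real risk that the PAC class must be restricted to moment-bounded $\phi$. I would first do the exponential-family case, where the third-derivative terms are deterministic functions of $\hat\theta$, and then extend.
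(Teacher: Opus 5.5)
Your reduction to $R_{n+1}-R_n$ and the Laplace/delta-method bookkeeping follow the same route as the paper, which expands $\log q_{n+1}-\log q_n$ pathwise. They deliver the leading behaviour $\frac{d}{2n}+o(n^{-1})$ for any smooth positive prior. The gap is at order $n^{-2}$, in exactly the step you defer.

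\textbf{The deferred terms do not cancel.} The terms you plan to keep do not cancel after differencing. These are:
\begin{itemize}
\item the next-order Laplace term $c/m$, which contains $\nabla^2\pi/\pi=\nabla^2\log\pi+\nabla\log\pi\,\nabla\log\pi^{\top}$ and third derivatives of $l_m$ paired with $\nabla\log\pi$;
\item the $O(1/m)$ bias of $\hat\theta$;
\item the $O(1/m)$ part of $\mathbb{E}[l_m(\hat\theta_m)-l_m(\theta^*)]$;
\item the Jensen correction in $\mathbb{E}\log|\hat I|$.
\end{itemize}
They also depend on $\pi$ quadratically, not affinely in $\log\pi$. So neither the collapse to a single trace term nor your equalizer argument for Jeffreys' prior can be completed.

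\textbf{Explicit check.} Take the Bernoulli family with a $\mathrm{Beta}(\beta,\beta)$ prior, $k$ the number of ones in $y^n$, and $\phi=\theta$ interior. A fourth-order expansion of the divergence using binomial central moments gives, in nats,
\begin{align*}
&\mathbb{E}_\theta D\Bigl(\mathrm{Ber}(\theta)\Big\|\mathrm{Ber}\bigl(\tfrac{k+\beta}{n+2\beta}\bigr)\Bigr)=\frac{1}{2n}\\
&\quad+\frac{1}{n^2}\Bigl[\frac{1}{\theta(1-\theta)}\Bigl(\frac{\beta^2}{2}-\beta+\frac{5}{12}\Bigr)-2\beta^2+2\beta-\frac{11}{12}\Bigr]+O(n^{-3}).
\end{align*}
\begin{itemize}
\item For Jeffreys' prior ($\beta=\frac12$) the bracket is $\frac{1}{24\theta(1-\theta)}-\frac{5}{12}$.
\item The displayed statement predicts $\frac{1}{2n}-\frac{1}{4n^2}$. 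Here the trace term vanishes because $J_\phi\equiv I$ in the natural parameter, as the paper itself notes for exponential families.
\item The two agree only at $\theta=\frac12$.
\item The bracket is quadratic in $\beta$, so Jeffreys is not an equalizer at this order.
\item The worst case of the bracket over $\theta\in(0,1)$ is minimized at $\beta=1+\sqrt{1/6}$ (the Komaki prior the paper cites), not at $\beta=\frac12$.
\end{itemize}

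\textbf{Remainder budget.} Separately, your remainder budget is one order short. An $o(m^{-1})$ error in $R_m$ only gives an $o(n^{-1})$ error in $R_{n+1}-R_n$. To read off an $n^{-2}$ coefficient you need $R_m=\alpha+\frac d2\log m+\frac{c_1}{m}+\frac{c_2}{m^2}+o(m^{-2})$, or a direct pathwise expansion of the difference.

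\textbf{How the paper gets its formula.} The paper obtains the displayed coefficient by setting precisely these contributions to zero:
\begin{itemize}
\item it writes the differenced Laplace remainder as $o(n^{-2})$, though it is $O(n^{-2})$ pathwise;
\item it asserts $\mathbb{E}\Delta_n=o(n^{-2})$, although only the $\frac12\mathrm{trace}(K_\phi J_\phi^{-1})$ parts cancel;
\item it drops the $\nabla A_n(\theta^*)^{\top}\mathbb{E}(\hat\theta_n-\theta^*)$ term;
\item it replaces $A_n(\theta^*)$, $A_{n+1}(\theta^*)$ and $\nabla^2 A_n$ by deterministic limits.
\end{itemize}
Its final line also carries $G(\theta^*)^{-1}=J_\phi^{-1}K_\phi J_\phi^{-1}$, the sandwich covariance of $\hat\theta$, not $G(\theta^*)$. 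Your suspicion about which matrix appears is therefore correct.

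The paper thus offers no argument that closes your gap. With the careful bookkeeping you propose, what can actually be established is the leading term. The $n^{-2}$ coefficient you would obtain contains extra third-derivative and $\nabla\log\pi$ terms rather than the stated trace, and Jeffreys' prior is not optimal at that order.
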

\begin{IEEEproof}
    See Appendix \ref{subsec:appendix_constrained_setting}.
\end{IEEEproof}
Expanding the logarithmic term yields the equivalent form
\begin{align}
    \nonumber
    R_{b,n}^*(\Theta,\Phi)
    = \frac{d}{2n}
    + \frac{1}{4n^2}
    \max_{P_\phi\in\Phi}
    \mathrm{trace}\!\left(D(\theta^*)\right)
    + o(n^{-2}),
\end{align}
where
\[
D(\theta^*) \equiv
G(\theta^*)^{-1}
\nabla^2 \log\frac{|I(\theta^*)|}{|J_\phi(\theta^*)|}
- d.
\]
The quantity \( D(\theta^*) \) can be interpreted as a \emph{dimensional redundancy correction} induced by model misspecification.

This expansion highlights a key insight: in the PAC setting for smooth parametric models in the constrained misspecified batch setting, the leading term of the minimax regret coincides with the well-specified capacity, namely \( d/(2n) \). Misspecification affects only the second-order term of order \( O(n^{-2}) \). Moreover, the optimal constrained universal predictor employs the same prior that achieves capacity in the well-specified case.

In particular, when the model is well-specified, i.e. \( \Phi = \Theta \), we have \( D(\theta^*) = 0 \), and the minimax regret reduces exactly to \( d/(2n) \).

Similarly to Theorem \ref{thm:exponentialFamilyAsymptotic} for exponential families in the online setting, the identity \( I(\theta^*) = J_\phi(\theta^*) \) eliminates the second-order term in \eqref{eq:asymptoticMinimaxRegretSmoothModelsBatch} in the batch setting. Consequently, Theorem~\ref{thm:constrainedMisspecifiedBatch} simplifies to
\begin{align}
\nonumber
R_{b,n}^*(\Theta,\Phi)
= \frac{d}{2}\log\!\left(1 + \frac{1}{n}\right)
+ o(n^{-2})
= \frac{d}{2n} + O(n^{-2}).
\end{align}

An additional interesting insight in the constrained misspecified batch setting is that, under mild assumptions on the prior, namely smoothness and unimodality, misspecification influences only terms of order \( 1/n^2 \), while leaving the leading term \( d/(2n) \) unaffected. This property follows from the proof of Theorem~\ref{thm:constrainedMisspecifiedBatch}, provided in Appendix~\ref{subsec:appendix_constrained_setting}.

\emph{GLM Demonstration:}
The Gaussian Location Model is a fundamental parametric model in statistical theory and information-theoretic analysis. It assumes observations $Y_1, \dots, Y_n$ are i.i.d. from a normal distribution with unknown mean $\theta \in \Theta \subset \mathbb{R}^d$ and known covariance matrix $\Sigma$ (commonly defined as $\sigma^2I_d$). This model serves as a canonical example for studying asymptotic properties of estimators, minimax regret, and universal coding due to its simplicity and regularity.
The Fisher information for this model is constant, given by $I(\theta) = \Sigma^{-1}$, which implies that Jeffreys prior is uniform over $\Theta$.    
Consequently, the GLM provides a clean setting for deriving exact asymptotic minimax regret formulas and illustrating the role of information geometry in universal prediction and compression.

In the well-specified stochastic online setting, the asymptotic minimax regret, see e.g. \cite{clarke1994jeffreys}, is given by:
\[
C_n(\Theta) = \frac{d}{2}\log\frac{n}{2 \pi e} + \log{ \frac{\text{Vol}(\Theta)}{|\Sigma|^{\frac{1}{2}}} }   + o(1).
\]
where $\text{Vol}(\Theta) = \int_{\Theta} d\theta$, represents the Lebesgue measure (volume) of the parameter space $\Theta \subset \mathbb{R}^d$.

This expression aligns with the theoretical model of communication over an additive white Gaussian noise (AWGN) channel with amplitude constrained input, see \cite{smith1971amplitude}, where the transmitted signal $\theta \in [-\theta_0,\theta_0] \equiv \Theta$ and the received signal is $Y = \theta + N$, with $N \sim \mathcal{N}(0,\sigma^2)$ representing Gaussian noise. 
For $n$ i.i.d. observations of the received signal while transmitting $\theta$, and in the high signal to noise ratio (SNR) regime, specifically when $\frac{nE\{\theta^2\}}{\sigma^2} \gg 1$, the channel capacity asymptotically approaches:
\[
C_n(\Theta) \approx \log\left( 2\theta_0\sqrt{\frac{n}{2 \pi e \sigma^2}} \right) = \frac{1}{2}\log{\frac{n}{2\pi e}} + \log{\frac{2\theta_0}{\sigma}},
\]
and the capacity achieving input distribution converges to a uniform distribution over $[-\theta_0,\theta_0]$. 
This connection highlights the structural similarities among universal coding regret, universal prediction regret, and channel capacity in constrained settings.

In the deterministic individual sequence setting, see, e.g. \cite{barron2013nonExponential}, the minimax regret becomes:
\[
\Gamma_n(\Theta) = \frac{d}{2}\log\frac{n}{2 \pi} + \log{ \frac{\text{Vol}(\Theta)}{|\Sigma|^{\frac{1}{2}}} } + o(1).
\]    
The leading term $\frac{d}{2}\log n$ appears both in stochastic and in individual settings, reflecting the dimensional complexity. However, the absence of stochastic averaging in the individual sequence setting results in a slightly larger regret, as this criterion protects against the worst-case sequence rather than averaging over realizations.

For the misspecified online setting, see \cite{online21misspecified}, where the true distribution belongs to the PAC model, namely the set of all i.i.d. distributions with finite second moment, the minimax regret satisfies:
\[
F_{o,n}^{(\mathrm{PAC})}(\Theta,\Phi) = C_n(\Theta) + o(1),
\]
while in the well-specified and misspecified batch setting, see \cite{mourtada2022improper}, the minimax regret simplifies to:
\[
F_{b,n}(\Theta,\Phi) = C_{c,n}(\Theta) = \frac{d}{2}\log\left(1+\frac{1}{n}\right),
\]
where $\Phi$ denotes the set of all probability distributions with finite second moment.

The following theorem extends and generalizes the asymptotic minimax regret result for the GLM to the setting of constrained misspecified regret. 
We further show that, by suitably defining the sets $\Phi$ and $\Theta$, the constrained misspecified regret can be positioned between two fundamental extremes: the well-specified capacity and the deterministic individual sequence regret.
\begin{thm}
\label{thm:GLM}
Assume that the GLM parameter set is compact and given by:
\[
\Theta = \{\theta : \mathcal{N}(\theta,\Sigma)\} \subset \mathbb{R}^d.
\]
Let $\Phi \supseteq \Theta$ be a set of distributions such that for every $\phi \in \Phi$, the distribution $P_\phi$ has a well-defined covariance matrix $\Sigma_\phi$. Then, the constrained misspecified minimax regret in the online setting satisfies the following asymptotic expression:
\begin{align}
    \begin{aligned}
    \label{eq:GLMregret}
        R_{o,n}^*(\Theta,\Phi) &= \frac{d}{2}\log\frac{n}{2\pi} + \log{ \frac{\text{Vol}(\Theta)}{|\Sigma|^{\frac{1}{2}}} } 
        \\&- \frac{\log{e}}{2}\min_{P_\phi\in\Phi}{\mathrm{trace}(\Sigma_\phi\Sigma^{-1})} + o(1).
    \end{aligned}
\end{align}
%Moreover, $\pi(\theta) \propto |\Sigma|^{-\frac{1}{2}}$, which corresponds to a uniform distribution over the set $\Theta$.
Moreover, the minimax regret is achieved by the Bayesian mixture distribution \(Q(x^n)\) with prior $\pi(\theta) \propto |\Sigma|^{-\frac{1}{2}}$, which corresponds to a uniform distribution over the set $\Theta$.
\end{thm}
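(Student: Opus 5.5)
The plan is to obtain Theorem~\ref{thm:GLM} as a direct specialization of Theorem~\ref{thm:exponentialFamilyAsymptotic}. The GLM with known covariance $\Sigma$ is a regular exponential family. Its sufficient statistic is $T(y)=\Sigma^{-1}y$ in the natural parametrization. It is convenient to keep the mean parametrization $\theta$, noting that the quantities in Theorem~\ref{thm:exponentialFamilyAsymptotic} transform covariantly. In the mean parametrization the log-likelihood is $-\tfrac12 (y-\theta)^\top\Sigma^{-1}(y-\theta)$ plus a constant. Hence the score is $\Sigma^{-1}(y-\theta)$ and the Hessian is $-\Sigma^{-1}$, independent of $y$.

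\textbf{Step 1: compute $I$, $J_\phi$ and $K_\phi$.} Since the Hessian does not depend on $y$, we get $J_\phi(\theta)=I(\theta)=\Sigma^{-1}$ for every $\phi$. The KL projection of $P_\phi$ onto $\Theta$ minimizes $\mathbb{E}_{P_\phi}\{(Y-\theta)^\top\Sigma^{-1}(Y-\theta)\}$, so $\theta^*=\mathbb{E}_{P_\phi}\{Y\}$. For now we assume this mean lies in the interior of $\Theta$. Then
\[
K_\phi(\theta^*)=\Sigma^{-1}\Sigma_\phi\Sigma^{-1}, \qquad \mathrm{trace}\big(K_\phi J_\phi^{-1}\big)=\mathrm{trace}(\Sigma_\phi\Sigma^{-1}).
\]
The $\log|J_\phi|/|I|$ term of Theorem~\ref{thm:main_result} vanishes. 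We also have $\int_\Theta|I|^{1/2}d\theta=\mathrm{Vol}(\Theta)/|\Sigma|^{1/2}$. The Jeffreys prior is therefore uniform on $\Theta$, and substituting these values yields (\ref{eq:GLMregret}).

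\textbf{Step 2: self-contained check of the key expansion.} Because this is the obstacle-free core of the argument, I would verify it directly rather than only citing the theorem. For the uniform prior $\pi$ on $\Theta$, the quadratic exponent makes the Laplace approximation exact up to boundary effects:
\[
\log\frac{P_{\theta^*}(y^n)}{Q(y^n)}=\frac d2\log\frac{n}{2\pi}+\log\frac{\mathrm{Vol}(\Theta)}{|\Sigma|^{1/2}}-\frac n2(\bar y-\theta^*)^\top\Sigma^{-1}(\bar y-\theta^*)-\log\Pr\{\mathcal N(\bar y,\Sigma/n)\in\Theta\}.
\]
Taking the expectation under $P_\phi$, and using $\mathbb{E}\{n(\bar Y-\theta^*)(\bar Y-\theta^*)^\top\}=\Sigma_\phi$, gives the term $-\tfrac12\mathrm{trace}(\Sigma_\phi\Sigma^{-1})$ in nats, plus the boundary probability term. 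That term is $o(1)$ whenever $\theta^*$ is interior, since $\bar Y$ concentrates at rate $n^{-1/2}$. Maximizing over $\phi$ turns the negative trace into $-\min_\phi$.

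\textbf{Step 3: minimax optimality of the uniform prior.} This is inherited from Theorem~\ref{thm:exponentialFamilyAsymptotic}. Alternatively, for a lower bound, restrict $\Phi$ to location shifts of a worst-case $P_\phi$, i.e.\ $\phi+\theta$ with $\theta\in\Theta$. Applying the minimax expression of Theorem~\ref{thm:thmOnlineConstrainedRegret} with a uniform $\pi$ over these shifts gives the redundancy-capacity of that shifted family plus the fixed trace term. The Clarke--Barron analysis then shows that a non-uniform $\pi_0$ incurs an extra term $\log(\pi^{J}/\pi_0)$ at some point, which cannot help in the worst case.

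\textbf{Main obstacle.} The hardest part is controlling uniformity of the $o(1)$ over $\phi$ and over $\theta^*$ near the boundary of $\Theta$. Near the boundary the term $-\log\Pr\{\cdot\in\Theta\}$ can be as large as $\log 2^d$, and $\Sigma_\phi$ must be bounded for the trace to be finite. I would first handle this by restricting the maximization to $\phi$ whose projection lies in a slightly shrunken compact subset of the interior of $\Theta$, matching the statement's interior caveat. I would then argue that boundary $\phi$'s are not maximizers, using compactness of $\Theta$ and boundedness of $\mathrm{trace}(\Sigma_\phi\Sigma^{-1})$ from below by $0$.
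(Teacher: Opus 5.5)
Your Step~1 is the paper's proof. The paper plugs into Theorem~\ref{thm:exponentialFamilyAsymptotic} with $\eta=\Sigma^{-1}\theta$, $T(X)=X$, $\psi(\eta)=\tfrac12\eta^\top\Sigma\eta$, and then pulls $I$ back to $\theta$; taking $\theta$ itself as natural parameter with $T(y)=\Sigma^{-1}y$ gives the same quantities without reparametrizing. Your Step~2 identity is correct and goes beyond anything in the paper.

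The gaps are in how you propose to finish. \emph{First, the boundary.} Boundary $\phi$'s cannot be shown to be non-maximizers: under the uniform prior they are the maximizers, and your own identity shows it. If $\mathbb{E}_{P_\phi}\{Y\}\in\partial\Theta$, the term $-\mathbb{E}\log\Pr\{\mathcal N(\bar Y,\Sigma/n)\in\Theta\}$ tends to a positive constant. For $d=1$, $\Sigma=1$, $\Theta=[-1,1]$ and $P_\phi=\mathcal N(1,1)$ it equals $\mathbb{E}\{-\log U\}=1$ nat, with $U$ uniform on $[0,1]$, not $\log 2$. So already for $\Phi=\Theta$ the uniform prior's worst case is $C_n(\Theta)+1$ nat, not $C_n(\Theta)$. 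If the mean lies outside $\Theta$, as happens for $\Phi=\mathcal P$ or the PAC class, the order-$n$ parts of $-\tfrac n2(\bar Y-\theta^*)^\top\Sigma^{-1}(\bar Y-\theta^*)$ and $-\log\Pr$ cancel, but a Mills-ratio term $\tfrac12\log n$ survives: in the same example $P_\phi=\mathcal N(2,1)$ has regret $\log n+\log 2+o(1)$. Boundary points of the compact $\Theta$ always belong to $\Phi$. Hence the claim that the uniform prior attains (\ref{eq:GLMregret}) needs either an extra hypothesis keeping $\theta^*_\phi$ in a compact subset of the interior, or a boundary-modified prior in the Clarke--Barron/Xie--Barron style. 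The statement has no interior caveat, unlike Theorems~\ref{thm:main_result}--\ref{thm:exponentialFamilyAsymptotic}. Uniformity of your $o(1)$ over $\Phi$ also needs uniform integrability of $n\|\bar Y-\mathbb{E}_{P_\phi}\{Y\}\|^2$, which a well-defined covariance alone does not give.

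\emph{Second, the lower bound.} Your location-shift argument needs the shifts of a trace-minimizing $P_{\phi_0}$ to lie in $\Phi$. That is not assumed, and without some such richness the formula is false. Redoing your computation with an arbitrary prior gives, whenever $\theta^*=\mathbb{E}_{P_\phi}\{Y\}\in\Theta$, the exact identity (in nats)
\[
R_n(\theta^*,\phi,Q_{\pi_0})=\frac d2\log\frac{n}{2\pi}-\frac12\log|\Sigma|-\frac12\mathrm{trace}(\Sigma_\phi\Sigma^{-1})-\mathbb{E}_{P_\phi}\{\log\tilde\pi_n(\bar Y)\},\qquad \tilde\pi_n=\pi_0*\mathcal N(0,\Sigma/n).
\]
For smooth $\pi_0$ the last term is $-\log\pi_0(\theta^*)+o(1)$, so the learner can raise $\pi_0$ wherever low-trace $\phi$'s project. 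Take $\Phi=\Theta\cup\{P_0\}$ with a single $P_0$ of interior mean $\mu_0$ and $\mathrm{trace}(\Sigma_0\Sigma^{-1})<d$. The prior $(1-w)\,\mathrm{Unif}(\Theta)+w\,g$, with $w$ small and $g$ a narrow bump at $\mu_0$, beats (\ref{eq:GLMregret}) by a positive constant, boundary effects aside. So (\ref{eq:GLMregret}) can only be expected when $\inf\{\mathrm{trace}(\Sigma_\phi\Sigma^{-1}):\theta^*_\phi=\theta\}$ does not depend on $\theta$, e.g.\ for shift-closed $\Phi$ such as $\mathcal P$ or the PAC class.

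Under such a hypothesis your idea becomes a clean, non-asymptotic lower bound. Average the identity over shifts of $P_{\phi_0}$ with $\theta\sim\mathrm{Unif}(\Theta)$, and use $-\mathbb{E}\{\log\tilde\pi_n(\bar Y)\}\ge h(\bar Y)\ge h(\theta)=\log\text{Vol}(\Theta)$, where $h$ is differential entropy. Falling back on Theorem~\ref{thm:exponentialFamilyAsymptotic} does not close the gap. That theorem rests on the step in the proof of Theorem~\ref{thm:main_result} asserting that the Jeffreys prior minimizes the Laplace term ``independently of $P_\phi$,'' which is exactly what fails here.
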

\begin{IEEEproof}
    See Appendix \ref{subsec:appendix_constrained_setting}.
\end{IEEEproof}

\begin{cor}
    For $\Phi = \mathcal{P}$, corresponding to the deterministic individual setting, we have the following:
    \begin{align}
        \begin{aligned}
        \nonumber
            R_{o,n}^*(\Theta,\Phi) = \Gamma_n(\Theta) = \frac{d}{2}\log\frac{n}{2 \pi } + \log{ \frac{\text{Vol}(\Theta)}{|\Sigma|^{\frac{1}{2}}} } + o(1).
        \end{aligned}
    \end{align}
\end{cor}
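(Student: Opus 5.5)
The plan is to specialize Theorem~\ref{thm:GLM} to $\Phi=\mathcal{P}$ and then squeeze the result between two bounds already established in the paper. The upper bound is immediate from \eqref{eq:constrained_regret_upper_bound}, which says $R_{o,n}^*(\Theta,\mathcal{P})\to\Gamma_n(\Theta)$. For the lower bound, recall that $R_{o,n}^*(\Theta,\mathcal{P})\ge\Gamma_n(\Theta)$, because the NML is the unconstrained minimizer of the worst-case individual regret and any mixture $Q_{\pi_0}$ is one admissible choice of $Q$. The remaining task is therefore to identify the value explicitly, and to check that the formula in \eqref{eq:GLMregret} gives the same expression, so that the corollary is consistent with Theorem~\ref{thm:GLM}.

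\textbf{Step 1: evaluating the trace term.} Apply \eqref{eq:GLMregret} with $\Phi=\mathcal{P}$. The penalty is $\min_{P_\phi}\mathrm{trace}(\Sigma_\phi\Sigma^{-1})$. Since $\Sigma_\phi\succeq 0$ and $\Sigma^{-1}\succ0$, this trace is always nonnegative. The infimum $0$ is attained by degenerate (point-mass) distributions $P_\phi=\delta_{y^n}$, for which $\Sigma_\phi=0$. These are exactly the individual sequences, and they lie in $\mathcal{P}$.

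\textbf{Step 2: identifying the value.} With the trace term equal to zero, \eqref{eq:GLMregret} reduces to
\[
\frac{d}{2}\log\frac{n}{2\pi}+\log\frac{\mathrm{Vol}(\Theta)}{|\Sigma|^{1/2}}+o(1).
\]
This is precisely the known individual-sequence value of $\Gamma_n(\Theta)$ for the GLM, as cited from \cite{barron2013nonExponential} above. Equivalently, it matches \eqref{eq:constrainedSmoothParametricAsymptotic} with $|I(\theta)|^{1/2}=|\Sigma|^{-1/2}$, since then $\int_\Theta|I|^{1/2}\,d\theta=\mathrm{Vol}(\Theta)/|\Sigma|^{1/2}$. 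The sandwich of the first paragraph then yields the equality $R_{o,n}^*(\Theta,\mathcal{P})=\Gamma_n(\Theta)$.

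\textbf{Main obstacle.} The delicate point is boundary behaviour. Point masses at sequences whose mean $\bar y$ lies outside or near the boundary of $\Theta$ have projection $\theta^*$ on $\partial\Theta$, where the Laplace approximation underlying Theorem~\ref{thm:GLM} is not valid. I would handle this by restricting the maximization to sequences with $\bar y$ in the interior of $\Theta$; the stated expression is the interior value. Sequences near or beyond the boundary contribute only $o(1)$ extra regret for the uniform-prior mixture when $\Theta$ has a regular boundary, since mixture mass near the boundary is still of order $n^{-d/2}$ up to a bounded factor. This is the same $o(1)$ slack that Barron et al. establish for modified Jeffreys priors.
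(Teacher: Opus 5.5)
Your Steps 1--2 are exactly the paper's proof, which simply sets $\Sigma_\phi=0$ in \eqref{eq:GLMregret}; the identification with $\Gamma_n(\Theta)$ rests on the GLM value quoted earlier. Your sandwich adds nothing independent. Its upper half, \eqref{eq:constrained_regret_upper_bound}, is the paper's assertion $R^*_{o,n}(\Theta,\mathcal P)\to\Gamma_n(\Theta)$, which is the first equality of the corollary itself.

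The genuine gap is in your ``Main obstacle'' paragraph. You claim that sequences with $\bar y$ near or beyond $\partial\Theta$ cost the uniform mixture only $o(1)$, and this is false. Take $d=1$, $\Sigma=1$, $\Theta=[a,b]$, and use $P_\theta(y^n)=P_{\bar y}(y^n)\,e^{-n(\bar y-\theta)^2/2}$. For $\bar y=b+z/\sqrt n$ with $z\ge 0$, the regret of the uniform mixture exceeds the interior value $\log\bigl((b-a)\sqrt{n/2\pi}\bigr)$ by
\[
\log\frac{\sqrt{2\pi}\,e^{-z^2/2}}{\int_z^\infty e^{-u^2/2}\,du}+o(1).
\]
This excess is $\log 2$ at $z=0$ and grows like $\log(\sqrt{2\pi}\,z)$ as $z\to\infty$. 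So a ``bounded factor'' in the mixture mass is an $O(1)$ loss, not $o(1)$, and over all of $\mathcal P$ the worst-case regret of the uniform prior is in fact infinite. Even $\bar y\in\Theta$ is not enough: $\bar y$ within $O(n^{-1/2})$ of $\partial\Theta$ already costs up to $\log 2$.

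Hence the uniform prior cannot be the minimax constrained learner when $\Phi=\mathcal P$ is taken literally. In this one-dimensional example a Barron-type modification does repair things: adding atoms of mass $\sqrt{2\pi/n}/(b-a)$ at $a$ and $b$ brings the worst case down to $\Gamma_n(\Theta)+O(n^{-1/2})$. But that modification is exactly what makes the prior non-uniform, and such a repair does not always exist. Take $d\ge 2$, $\Sigma=I$ and strictly convex $\Theta$ (e.g.\ a disk), and let $\bar y=\theta_b+D\nu$ with $\nu$ the outward normal at $\theta_b\in\partial\Theta$. Then the regret tends to $-\log\pi(\{\theta_b\})$ as $D\to\infty$, and some boundary point carries no atom. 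So \emph{every} prior on $\Theta$ has infinite worst-case regret, while $\Gamma_n(\Theta)$ is finite.

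The statement therefore holds only in the sense implicit in the paper. Theorem~\ref{thm:GLM} rests on Theorems~\ref{thm:main_result} and~\ref{thm:exponentialFamilyAsymptotic}, which are stated only within the interior of $\Theta$. In that sense you restrict $\Phi$ to distributions whose projection $\theta^*$ (for point masses, $\bar y$) lies in a fixed compact subset of $\mathrm{int}\,\Theta$. There the uniform prior gives the stated value uniformly. You should impose this restriction explicitly rather than assert that boundary and exterior sequences are absorbed into the $o(1)$ term.
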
        
\begin{IEEEproof}
    The result follows by setting $\Sigma_\phi = 0$ in (\ref{eq:GLMregret}).
\end{IEEEproof}

\begin{cor}
    For $\Phi=\Theta$, corresponding to the well-specified stochastic setting, we have the following:
    \begin{align}
        \begin{aligned}
        \nonumber
            R_{o,n}^*(\Theta,\Phi) = C_n(\Theta) = \frac{d}{2}\log\frac{n}{2 \pi e} + \log{ \frac{\text{Vol}(\Theta)}{|\Sigma|^{\frac{1}{2}}} } + o(1).
        \end{aligned}
    \end{align}
\end{cor}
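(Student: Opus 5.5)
The plan is to specialize Theorem~\ref{thm:GLM} to the case $\Phi=\Theta$ and check that the correction term evaluates to the well-known constant. Then, as a consistency check, I will confirm that the result agrees with the well-specified redundancy-capacity characterization recalled in Section~\ref{sec:basic_definitions}.

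\emph{Step 1 (evaluate the correction term).} When $\Phi=\Theta$, every admissible data-generating distribution is $P_\phi=\mathcal{N}(\phi,\Sigma)$ for some $\phi\in\Theta$. Hence $\Sigma_\phi=\Sigma$ for all $\phi\in\Phi$, and the minimization in (\ref{eq:GLMregret}) is trivial:
\begin{align}
\nonumber
\min_{P_\phi\in\Phi}\mathrm{trace}(\Sigma_\phi\Sigma^{-1})=\mathrm{trace}(I_d)=d.
\end{align}
The correction term therefore equals $-\frac{d}{2}\log e$.

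\emph{Step 2 (combine with the leading terms).} Adding this to the first two terms of (\ref{eq:GLMregret}) gives
\begin{align}
\nonumber
\frac{d}{2}\log\frac{n}{2\pi}-\frac{d}{2}\log e=\frac{d}{2}\log\frac{n}{2\pi e},
\end{align}
so that
\begin{align}
\nonumber
R_{o,n}^*(\Theta,\Theta)=\frac{d}{2}\log\frac{n}{2\pi e}+\log\frac{\mathrm{Vol}(\Theta)}{|\Sigma|^{\frac12}}+o(1).
\end{align}

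\emph{Step 3 (identify with $C_n(\Theta)$).} I will justify the identification $R_{o,n}^*(\Theta,\Theta)=C_n(\Theta)$ in two ways.

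First, by definition. When $\Phi=\Theta$ we have $\theta^*=\phi$ and $D(P_\phi\|\Theta)=0$. The unconstrained minimax regret is then the redundancy-capacity $C_n(\Theta)$. Its optimal predictor is a Bayesian mixture over $\Theta$, which is feasible for the constrained problem, so the constraint costs nothing.

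Second, via Theorem~\ref{thm:thmOnlineConstrainedRegret}. The penalty term $\min_{\pi_0}D(Q_{\pi(\phi)}\|Q_{\pi_0(\theta)})$ vanishes by choosing $\pi_0=\pi$.

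Both routes agree with the Clarke--Barron asymptotic expression for $C_n(\Theta)$ quoted before Theorem~\ref{thm:GLM}, with the uniform (Jeffreys) prior.

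\emph{Main obstacle.} The arithmetic itself is routine. The one point needing care is that Theorem~\ref{thm:GLM} is only asymptotic and is stated for the interior of $\Theta$. I will therefore rely on the same compactness and regularity assumptions to ensure the $o(1)$ term is uniform over $\phi$. This is what makes the trivial minimization above legitimate.
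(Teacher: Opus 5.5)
Your proposal is correct and matches the paper's proof, which consists solely of setting $\Sigma_\phi=\Sigma$ in (\ref{eq:GLMregret}), so that the trace term equals $d$ and merges into $\frac{d}{2}\log\frac{n}{2\pi e}$. Your additional identification of $R^*_{o,n}(\Theta,\Theta)$ with $C_n(\Theta)$ is sound and makes explicit a step the paper leaves implicit: either the capacity-achieving mixture over $\Theta$ is feasible for the constrained problem, or the penalty in Theorem~\ref{thm:thmOnlineConstrainedRegret} vanishes with $\pi_0=\pi$.
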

\begin{IEEEproof}
    The result follows by setting $\Sigma_\phi=\Sigma$ in (\ref{eq:GLMregret}).
\end{IEEEproof}

Moreover, analogously to Theorem \ref{thm:GLM} in the online setting, applying the GLM case to Theorem \ref{thm:constrainedMisspecifiedBatch} in the PAC batch framework shows that the minimax regret coincides with the well-specified capacity up to an \( o(n^{-2}) \) correction:
\[
R_{b,n}^*(\Theta,\Phi)
= C_{c,n}(\Theta) + o(n^{-2})
= \frac{d}{2}\log\!\left(1+\frac{1}{n}\right) + o(n^{-2}).
\]

As an illustrative example showing that the constrained misspecified asymptotic regret can lie between the well-specified stochastic capacity and the deterministic individual regret, define
$\Phi = \Theta \cup \Lambda$, where
\[
\Lambda = \Bigg\{ \lambda : P_\lambda(y) = \prod_{i=1}^d \frac{\lambda_i^{y_i} e^{-\lambda_i}}{y_i!}, \; \lambda_i \geq \lambda_0 > 0,\; y \in \mathbb{N}^d \Bigg\},
\]
is the set of independent multivariate Poisson distributions with parameters $\lambda_i$ bounded below by a positive constant $\lambda_0$ for all $i = 1,\dots,d$. The covariance matrix of this distribution is $\Sigma_\lambda = \operatorname{diag}(\lambda_1,\dots,\lambda_d)$.    
In addition, for simplicity, consider the simplified GLM set $\Theta$ such that $\Sigma=\sigma^2I_d$.    
Then, according to Theorem~\ref{thm:GLM}, the asymptotic constrained misspecified regret requires evaluating the term:
\begin{align}
    \begin{aligned}
    \nonumber
        \min_{P_\phi} \operatorname{trace}(\Sigma_\phi / \sigma^2)
        &= \min\Bigg(d,\; \min_{\lambda}\sum_{i=1}^d \frac{\lambda_i}{\sigma^2}\Bigg)
        \\&= d \cdot \min(1,\lambda_0 / \sigma^2).
    \end{aligned}
\end{align}    
Hence, for $\lambda_0 \leq \sigma^2$, the minimax regret is:
\[
R_{o,n}^*(\Theta,\Phi) =
\frac{d}{2}\log\frac{n}{2\pi}
+ \log{ \frac{\text{Vol}(\Theta)}{|\Sigma|^{\frac{1}{2}}} }
- \frac{d\lambda_0}{2\sigma^2}\log{e}.
\]

Observe that setting $\lambda_0 = \sigma^2$ yields the lower bound corresponding to the well-specified stochastic capacity $C_n(\Theta)$, while letting $\lambda_0 \to 0$ converges to the upper bound associated with the deterministic individual regret $\Gamma_n(\Theta)$.

In all preceding GLM examples, we have shown that the constrained misspecified regret spans two extremes, the well-specified stochastic setting and the deterministic individual setting. This variability is determined by the relationship between the sets $\Phi$ and $\Theta$, and up to a constant factor, the constrained misspecified regret coincides with the well-specified stochastic capacity.

However, in certain extreme configurations of the parameter sets \( \Theta \) and \( \Phi \), the constrained misspecified regret can significantly exceed the well-specified capacity \( C_n(\Theta) \). To illustrate this phenomenon, we refer to the example by Feder and Polyanskiy in (\cite{online21misspecified}, Appendix F.1), which extends the GLM framework. 
In this setting, the observation vector is defined as an infinite dimensional vector, $Y = \phi + N$, where $N \sim \mathcal{N}(0, I_\infty)$, and the parameter sets \( \Phi \), \( \Theta \), and \( \Theta_\epsilon \) are defined in Section~\ref{subsec:misspecified_online_setting}.
It can be shown that the following asymptotic behaviors hold:    
\begin{align}
    \begin{aligned}
    \nonumber
        &C_n(\Phi) \asymp \log^2 n, \quad 
        C_n(\Theta) \asymp \log n, \quad
        C_n(\Theta_\epsilon) \asymp \log^2 n.
    \end{aligned}
\end{align}    
Therefore, according to Theorem \ref{thm:misspecified_online_regret_bound} (\cite{online21misspecified}, Theorem 4), since $C_n(\Phi)/n\to0$, we have:
\begin{align}
\nonumber
    R_{o,n}^*(\Theta,\Phi) 
    \geq F_n(\Theta,\Phi) = C_n(\Theta_\epsilon) + o(1) 
    \asymp \log^2{n}.
\end{align}    
Consequently, there exists an example for which    
\begin{align}    
    \begin{aligned}
    \nonumber
        \frac{R^*_{o,n}(\Theta,\Phi)}{C_n(\Theta)} = \Omega(\log{n}) \to \infty.
    \end{aligned}
\end{align}

    \section{Summary} 
    \label{sec:summary}
        This paper studies universal learning under model misspecification with log-loss, a framework originally introduced in the late 1990s and only recently revisited in the context of unsupervised online prediction. Misspecified universal learning provides a unifying perspective on universal prediction, encompassing both the well-specified stochastic and deterministic individual-sequence settings as special cases, and closely parallels agnostic statistical learning.

    We extend existing results beyond the unsupervised online setting to supervised learning and to both supervised and unsupervised batch scenarios. For all these regimes, we derive a closed-form expression of the misspecified minimax regret and identify the optimal universal predictor as a Bayesian mixture distribution over the data-generating class. In the misspecified unsupervised batch setting, we establish tight regret bounds and show that the minimax regret behaves similarly to that of the well-specified stochastic case, depending primarily on the complexity of the hypothesis class rather than on the broader data-generating family. This phenomenon is demonstrated numerically for Bernoulli models using an extension of the Arimoto-Blahut algorithm and analytically for general multinomial families.
    
    A central contribution of the paper is the introduction and analysis of the \emph{constrained misspecified setting}, in which the universal learner is restricted to mixtures over the hypothesis class rather than over the entire data-generating class. We show that under suitable regularity conditions, the constrained minimax regret coincides with the well-specified stochastic capacity up to a fixed penalty, and that the optimal constrained prior is given by the capacity-achieving prior. Through examples including Bernoulli, multinomial, Markov, exponential family models, and the Gaussian location model, we demonstrate that the constrained regret interpolates between stochastic capacity and individual-sequence regret, depending on the relationship between the hypothesis and data-generating classes.
    
    Overall, this work advances the theoretical understanding of universal learning under misspecification and contributes to the broader effort to establish an information-theoretic foundation for statistical machine learning.

\appendix

    \section{Appendix} 
    \label{sec:appendix}
    \subsection{Misspecified Online Setting Proofs}
\label{subsec:appendix_misspecified_online_setting}
% Proof of Theorem 4
\begin{IEEEproof}[\textbf{Proof of Theorem \ref{thm:misspecified_online_supervised_causal_setting}}]
The minimax regret is given by:
\begin{align}
    \begin{aligned}
        &R_{o,n}^*(\Theta,\Phi) = \min_{Q}\max_{P_\phi\in\Phi}\max_{P_\theta\in\Theta}\mathbb{E}_{P_\phi P}\left\{ \log{\frac{P_{\theta^*}(Y^n\|X^n)}{Q(Y^n\|X^n)}}  \right\}
        \\& = \min_{Q}\max_{P_\phi\in\Phi}\Bigg(\mathbb{E}_{P_\phi P}\left\{ \log{\frac{P_\phi(Y^n\|X^n)}{Q(Y^n\|X^n)}}  \right\} - D_c(P_\phi\|\Theta) \Bigg).
    \end{aligned}
\end{align}
Replacing the maximization over $\Phi$ by a maximization over priors $\pi$ on~$\Phi$ yields
\begin{align}
    \begin{aligned}
        R_{o,n}^*(\Theta,\Phi) &= \min_{Q}\max_{\pi(\phi)}\Bigg(\mathbb{E}_{\tilde{P}}\left\{ \log{\frac{P_\phi(Y^n\|X^n)}{Q(Y^n\|X^n)}}  \right\} 
        \\&- \mathbb{E_{\pi(\phi)}}\left\{D_c(P_\phi\|\Theta)\right\} \Bigg),
    \end{aligned}
\end{align}
where $\tilde{P}\equiv P(\phi,x^n,y^n)=\pi(\phi) P_\phi(y^n|x^n) P(x^n)$.

By Sion's minimax theorem \cite{sion1958general}, we may interchange the minimax into a maximin problem:
\begin{align}
    \begin{aligned}
        R_{o,n}^*(\Theta,\Phi) &= \max_{\pi(\phi)}\min_{Q}\Bigg(\mathbb{E}_{\tilde{P}}\left\{ \log{\frac{P_\phi(Y^n\|X^n)}{Q(Y^n\|X^n)}}  \right\} 
        \\&- \mathbb{E_{\pi(\phi)}}\left\{D_c(P_\phi\|\Theta)\right\} \Bigg).
    \end{aligned}
\end{align}
The inner minimization over $Q$ is achieved by the mixture
\[
Q(y^n\| x^n)
=
\int_{\Phi} \pi(\phi)\,P_\phi(y^n\| x^n)\,d\phi.
\]
A key observation of the proof is the causal factorization of both $Q$ and $P_\phi$:
\begin{align}
    \begin{aligned}
        Q(y_t|x^t,y^{t-1}) &= \frac{Q(y^t\|x^t)}{Q(y^{t-1}\|x^{t-1})} 
        \\&= \int_{\Phi}{\pi_t(\phi|x^{t},y^{t-1})P_\phi(y_t|x^t,y^{t-1})}\,d\phi,
    \end{aligned}
\end{align}
where the time‑updated prior is
\begin{align}
    \begin{aligned}
        \pi_t(\phi|x^t,y^{t-1}) &\equiv \frac{\pi(\phi)P_\phi(y^{t-1}\|x^{t-1})}{\int_{\Phi}{\pi(\phi)P_\phi(y^{t-1}\|x^{t-1})}\,d\phi} 
        \\ &= P(\phi|x^{t-1},y^{t-1}) = P(\phi|x^{t},y^{t-1}),
    \end{aligned}
\end{align}
where the last equality uses the assumption $X^n \perp\!\!\!\perp \Phi$ and the i.i.d.\ structure of~$X^n$.
We also use the causal chain rule \cite{massey1990causality}:
\begin{align}
    \begin{aligned}
        P_\phi(y^n\|x^n) = \prod_{t=1}^nP_\phi(y_t|x^t,y^{t-1}).
    \end{aligned}
\end{align}
Therefore,
\begin{align}
    \begin{aligned}
        &\mathbb{E}_{\tilde{P}}\left\{ \log{\frac{P_\phi(Y^n\|X^n)}{Q(Y^n\|X^n)}}  \right\} 
        = \sum_{t=1}^n\mathbb{E}_{\tilde{P}}\left\{ \log{\frac{P_\phi(Y_t|X^t,Y^{t-1})}{Q(Y_t|X^t,Y^{t-1})}}  \right\}
        \\ &= \sum_{t=1}^n\mathbb{E}_{\tilde{P}}\left\{ \log{\frac{P_\phi(Y_t|X^t,Y^{t-1})}{ \int_{\Phi}{\pi_t(\phi|X^{t},Y^{t-1})P_\phi(Y_t|X^t,Y^{t-1})}\,d\phi }}  \right\}
    \end{aligned}
\end{align}
and 
\begin{align}
    \begin{aligned}
        &\sum_{t=1}^n\mathbb{E}_{\tilde{P}}\left\{ \log{\frac{P_\phi(Y_t|X^t,Y^{t-1})}{ \int_{\Phi}{\pi_t(\phi|X^{t},Y^{t-1})P_\phi(Y_t|X^t,Y^{t-1})}\,d\phi }}  \right\}
        \\ &= \sum_{t=1}^n{ I(Y_t;\Phi | X^t, Y^{t-1}) } 
        = \sum_{t=1}^n{ I(Y_t;\Phi | X^n, Y^{t-1}) } 
        \\ &= I(\Phi \to Y^n|X^n),
    \end{aligned}
\end{align}
where the final equality follows from the definition of conditional directed information in \cite{massey1990causality}.

Hence, 
\begin{align}
    \begin{aligned}
        \mathbb{E}_{\tilde{P}}\left\{ \log{\frac{P_\phi(Y^n\|X^n)}{Q(Y^n\|X^n)}}  \right\} = I(\Phi \to Y^n|X^n),
    \end{aligned}
\end{align}
which completes the proof.
\end{IEEEproof}

% Proof of Theorem 5
\begin{IEEEproof}[\textbf{Proof of Theorem \ref{thm:misspecified_online_supervised_setting}}]
    Let us denote by $Q_c(y^n\|x^n)$ a causal universal predictor. Then trivially, we have
    \begin{align}
        \begin{aligned}
        \nonumber
            &R_{o,n}^*(\Theta,\Phi) = \min_{Q_c}\max_{P_\phi\in\Phi}\mathbb{E}_{P_\phi P}\left\{ \log{\frac{\prod_{t=1}^nP_{\theta^*}(Y_t|X_t)}{Q_c(Y^n\|X^n)}} \right\}
            \\ &\geq \min_Q\max_{P_\phi\in\Phi}\mathbb{E}_{P_\phi P}\left\{ \log{\frac{\prod_{t=1}^nP_{\theta^*}(Y_t|X_t)}{Q(Y^n|X^n)}} \right\}
            \\ &\geq  \min_Q\max_{\pi(\phi)}\left( \int_{\Phi}{\pi(\phi)D(P_\phi\|Q)}\,d\phi - \mathbb{E}_{\pi(\phi)}\{D(P_\phi\|\Theta)\} \right)
        \end{aligned}
    \end{align}
where $Q(y^n|x^n)$ is any conditional universal predictor (not necessarily causal). 

By exchanging the minimax and maximin operations, we obtain
\begin{align}
    \begin{aligned}
    \nonumber
        R_{o,n}^*(\Theta,\Phi) &\geq \max_{\pi(\phi)}\min_Q\Big( \int_{\Phi}{\pi(\phi)D(P_\phi\|Q)}\,d\phi 
        \\ &- \mathbb{E}_{\pi(\phi)}\{D(P_\phi\|\Theta)\} \Big)
        \\ &\geq \max_{\pi(\phi)}\left( I(Y^n;\Phi|X^n) - \mathbb{E}_{\pi(\phi)}\{D(P_\phi\|\Theta)\} \right),
    \end{aligned}    
\end{align}
where the inner minimization over $Q$ is given by the universal mixture distribution $Q(y^n|x^n) = \int_{\Phi}{\pi(\phi)P_\phi(y^n|x^n)}\,d\phi$ under the prior distribution $\pi(\phi)$.
\end{IEEEproof}

%Proof of Corollary 1
\begin{IEEEproof}[\textbf{Proof of Corollary \ref{cor:supervised_misspecified_regret_online_setting}}]
Under the memoryless assumption, causal conditioning coincides with standard conditioning:
\begin{align}
\nonumber
    Q(y^n \Vert x^n) = Q(y^n|x^n),
\end{align}
since $P_\phi(y_t|x^t,y^{t-1}) = P_\phi(y_t|x_t)$ for all $t$. 

Therefore, the causal predictor achieves the same performance as the optimal noncausal mixture predictor. Consequently, the conditions of Theorem~\ref{thm:misspecified_online_supervised_causal_setting} are satisfied, and all inequalities appearing in the proof of 
Theorem~\ref{thm:misspecified_online_regret_bound} hold with equality under the memoryless model. This establishes the claimed result.
\end{IEEEproof}

\subsection{Misspecified Batch Setting Proofs}
\label{subsec:appendix_misspecified_batch_setting}

%Proof of Theorem 6
\begin{IEEEproof}[\textbf{Proof of Theorem \ref{ThmFirst}}]
    The misspecified minimax regret, defined in (\ref{eq:misspecified_minimax_regret_definition}), is given by
    \begin{align}
        \begin{aligned}
        \nonumber
        F_{b,n}(\Theta,\Phi) = \min_{Q} \max_{P_{\phi} \in \Phi} \left(D({{P_{\phi}}\|{Q}}) - D({{P_{\phi}}\|P_{\theta^*}})\right).
        \end{aligned}
    \end{align}    
    Let us translate the minimax problem into a mixture minimax problem by the following:
    \begin{align}
        \begin{aligned}
            \nonumber
            F_{b,n}(&\Theta,\Phi) = \min_{Q} \max_{\pi({\phi})} 
            \int_{\phi}\pi(\phi) \left( D({{P_{\phi}}\|{Q}}) - D({{P_{\phi}}\|\Theta}) \right) \,d\phi
            \\&= \min_{Q} \max_{\pi({\phi})} 
             \underbrace{ \left( \mathbb{E}_{\pi(\phi)} \{  D({{P_{\phi}}\|{Q}}) \} 
        - \mathbb{E}_{\pi(\phi)} \{  D({{P_{\phi}}\|{\Theta}}) \} \right) }_{R_n(\pi(\phi),Q)}.
        %\\& \equiv \min_{Q} \max_{\pi({\phi})} R_n(\pi(\phi),Q),
        \end{aligned}
    \end{align}    
    where by definition:
    \begin{align}
        \label{TwoTermsRegret}
        \begin{aligned}
            R_n(\pi(\phi),Q) \equiv
            \underbrace{\mathbb{E}_{\pi(\phi)} \{  D({{P_{\phi}}\|{Q}}) \}}_{\propto \pi(\phi);\propto -\log{Q}} -
            \underbrace{\mathbb{E}_{\pi(\phi)} \{  D({{P_{\phi}}\|{\Theta}}) \}}_{\propto \pi(\phi);\text{not a function of } Q}.
        \end{aligned}
    \end{align}
    
    Since the first term of $R_n(\pi(\phi),Q)$ is proportional to $-\log{Q}$ and the second term is not a function of $Q$, (\ref{TwoTermsRegret}) is a convex function w.r.t $Q$. 
    Moreover, both the first and second terms are concave functions w.r.t $\pi(\phi)$, due to their linearity in $\pi(\phi)$. Thus, $R_n(\pi(\phi),Q)$ is also a concave function w.r.t $\pi(\phi)$.
    Hence, according to Sion's minimax Theorem \cite{sion1958general} for convex-concave functions, the minimax problem can be translated into a maximin problem as follows:
    \begin{align}
        \begin{aligned}
            \label{maxminRegret}
            F_{b,n}(\Theta,\Phi) = \max_{\pi({\phi})} \min_{Q} R_n(\pi(\phi),Q).
        \end{aligned}
    \end{align}	
    Now let us minimize $R_n(\pi(\phi),Q)$ w.r.t $Q$ by zeroing the derivative of the following Lagrangian:
    \begin{align}
        \begin{aligned}
        \nonumber
        L = R_n(\pi(\phi),Q) + \sum_{y^{n-1}}\lambda_{y^{n-1}}\sum_{y_n}Q(y_n|y^{n-1})
        \end{aligned}
    \end{align}	
    \begin{align}
        \begin{aligned}
        \nonumber
        \frac{\partial L}{\partial Q} \Big{|}_{Q_{\pi}} = - \frac{1}{Q_{\pi}(y_n|y^{n-1})}\int_{\phi}\pi(\phi)P_{\phi}(y^{n})  \,d\phi + \lambda_{y^{n-1}} = 0
        \end{aligned}
    \end{align}	
    Therefore, we get:
    \begin{equation}
    \nonumber
        Q_{\pi}({y_n|y^{n-1}}) = \frac{\int_{\phi}\pi(\phi)P_{\phi}(y^{n})  \,d\phi}{\lambda_{y^{n-1}}}
    \end{equation}	
    and in order to meet the constraint of $\sum_{y_{n}}Q_{\pi}({y_n|y^{n-1}}) = 1$ we get:
    \begin{align}
        \begin{aligned}
        \nonumber
        \sum_{y_n}Q_{\pi}({y_n|y^{n-1}}) & = \frac{\int_{\phi}\pi(\phi)\sum_{y_{n}}P_{\phi}(y^{n})  \,d\phi}{\lambda_{y^{n-1}}} \\& = 
        \frac{\int_{\phi}\pi(\phi)P_{\phi}(y^{n-1})  \,d\phi}{\lambda_{y^{n-1}}} = 1
        \end{aligned}
    \end{align}	
    which leads immediately to the following Lagrange multiplier:
    \begin{equation}
        \nonumber
        \lambda_{y^{n-1}} = \int_{\phi}\pi(\phi)P_{\phi}(y^{n-1})  \,d\phi.
    \end{equation}	
    Combining all the above, we get the minimizer $Q$ as:
    \begin{equation}
        \label{condQ}
        Q_{\pi}({y_n|y^{n-1}}) = \frac{\int_{\phi}\pi(\phi)P_{\phi}(y^{n})  \,d\phi}{\int_{\phi}\pi(\phi)P_{\phi}(y^{n-1})  \,d\phi}
    \end{equation}		
    or in other words,
    \begin{equation}
            \label{directQ}
        Q_{\pi}({y^{n}}) = \int_{\phi}\pi(\phi)P_{\phi}(y^{n})  \,d\phi,
    \end{equation}
    where similarly to the online case, (\ref{directQ})  is a mixture distribution over the set $\Phi$, see for example \cite{universal98prediction} and \cite{robust21inference}. As will be discussed later, the choice of the mixture prior is making the difference.
    
    Using (\ref{condQ}) We observe that:
    \begin{align}
        \begin{aligned}
        \label{FirstTerm}
        &\mathbb{E}_\pi \{  D({{P_{\phi}}\|{Q_\pi}}) \} = \int_{\phi} \pi{(\phi)} D({{P_{\phi}}\|{Q_\pi}}) \,d\phi 
        %\\& = \int_{\phi} \sum_{y^n} \pi{(\phi)}P_{\phi}(y^n)       \log{\frac{P_{\phi}(y_n|y^{n-1})}{Q_{\pi}(y_n|y^{n-1})}}        \,d\phi
        \\& = \int_{\phi} \sum_{y^n} P(y^n,\phi)       \log{\frac{P(y_n|y^{n-1},\phi)}{P({y^{n})/P(y^{n-1})}}}        \,d\phi  
        \\&  = \mathbb{E}_{P(Y^n,\Phi)}  \left\{     \log{\frac{P(Y_n|Y^{n-1},\Phi)}{P({Y_{n}|Y^{n-1})}}}  \right\} = I(Y_n;\Phi|Y^{n-1})
        \end{aligned}
    \end{align}
    where $P(Y^n=y^n,\Phi=\phi) \equiv \pi(\phi)P_\phi(y^n)$. Combining (\ref{TwoTermsRegret}), (\ref{maxminRegret}) and (\ref{FirstTerm}) we get (\ref{thmRegret}).
\end{IEEEproof}

%Proof of Lemma1
\begin{IEEEproof}[\textbf{Proof of Lemma 1}]
    Let us define the Markov chain triplet $B \to \Phi \to Y^n$, where $B \sim Ber(\lambda)$, $\lambda \in [0,1]$, $\Phi = \phi$ is conditionally distributed according to $\pi_{b}(\phi)$ given $B=b$ and $Y^n=y^n$ is conditionally distributed according to $P_{\phi}(y^n)$ given $\phi$.
    Note that the induced prior distribution over $\Phi$ is given by $\pi(\phi) = \lambda \pi_{1}(\phi) + (1-\lambda) \pi_{0}(\phi)$,
    and by definition we get for any $b \in \{0,1\}$:
    \begin{align}			
        \begin{aligned}				
            \label{lem1}
            J(\pi_{b}|B=b) & = I(Y_n;\Phi|Y^{n-1},B=b) 
            \\& - \mathbb{E}_{\pi_{b}}\{ D(P_{\phi} \| \Theta ) | B=b \}
        \end{aligned}
    \end{align}
    By taking the expectation of (\ref{lem1}) according to the $B$ distribution we get:
    \begin{align}
        \begin{aligned}
            \label{lem2}
            \mathbb{E}_B\{ J(\pi_{b}|B=b) \} = \lambda J(\pi_{1}) + (1-\lambda) J(\pi_{0})				
        \end{aligned}
    \end{align}
    while on the other hand,
    \begin{align}
        \begin{aligned}
            \label{lem3}
            \mathbb{E}_B\{ J(\pi_{b}|B=b) \} &= I(Y_n;\Phi | B, Y^{n-1}) 
            \\& - \mathbb{E}_{\pi}\{ D(P_{\phi} \| \Theta ) \}
        \end{aligned}
    \end{align}
    In addition, due to the Markov chain characteristics, we have $B\perp Y^n|\Phi$, which leads to $I(Y_n;B|\Phi,Y^{n-1}) = 0$. Therefore, by applying twice the mutual information chain rule we get:
    \begin{align}
        \begin{aligned}
            \label{lem4}
            I(Y_n;\Phi|Y^{n-1}) &= I(Y_n;B,\Phi|Y^{n-1}) - I(Y_n;B|\Phi,Y^{n-1})				
            \\& = I(Y_n;B,\Phi|Y^{n-1})
            \\& = I(Y_n;\Phi|B,Y^{n-1}) + I(Y_n;B|Y^{n-1}) 
            \\&\leq I(Y_n;\Phi|B,Y^{n-1}) + h(\lambda) 
        \end{aligned}
    \end{align}		
    Combining (\ref{lem1}), (\ref{lem2}), (\ref{lem3}) and (\ref{lem4}) with $J(\pi) = I(Y_n;\Phi|Y^{n-1}) - \mathbb{E}_{\pi}\{ D(P_{\phi} \| \Theta ) \}$ completes the proof.
\end{IEEEproof}

%Proof of Theorem 7
\begin{IEEEproof}[\textbf{Proof of Theorem \ref{Thm2}}]
    Since $F_{b,n}(\Theta,\Phi) = \max_{\pi({\phi})} J(\pi) \geq 0$ there exists a $\pi(\phi)$ such that
    \begin{align}
        \begin{aligned}
        \nonumber
            0 \leq J(\pi) & = I(Y_n;\Phi|Y^{n-1}) - E_\pi\{D_{c,n}(P_\phi \| \Theta)\} \\& \leq C_{c,n}(\Phi) -  \mathbb{E}_\pi\{D(P_\phi \| \Theta)\}.
        \end{aligned}
    \end{align}

    Therefore, the fraction of models of distributions, denoted by $\lambda$, from the set $\Phi$ that are not included in $\Theta_{\epsilon}$ can be upper bounded by Markov's inequality as follows:
    \begin{align}
        \begin{aligned}
        \nonumber
            \lambda = P(D(P_\phi \| \Theta) > \epsilon) \leq \frac{\mathbb{E}_\pi\{D(P_\phi \| \Theta)\}}{\epsilon} \leq \frac{C_{c,n}(\Phi)}{\epsilon}
        \end{aligned}
    \end{align}		
    Let us now define $\pi_{0}(\phi),\pi_{1}(\phi)$ as the distributions implied by $\pi(\phi)$ over the sets $\Theta_{\epsilon}$ and its complement, i.e., $\pi_{0}(\phi)=\pi(\phi)/(1-\lambda),\;\phi\in \Theta_\epsilon$ and zero otherwise, while $\pi_{1}(\phi)=\pi(\phi)/\lambda,\;\phi\notin \Theta_\epsilon$ and zero otherwise. As a consequence, $\pi(\phi) = \lambda \pi_{1}(\phi) + (1-\lambda) \pi_{0}(\phi),\;\forall \phi\in \Phi$.	
        Applying Lemma \ref{lemma1} and maximizing over $\pi_{0}(\phi)$ and $\pi_{1}(\phi)$ gives us the following:
    \begin{align}
        \label{SecResult1}
        \begin{aligned}
            J(\pi) &\leq \lambda J(\pi_1) + (1-\lambda) J(\pi_0) + h(\lambda)
            \\& \leq \lambda F_{b,n}(\Theta,\Phi) + (1-\lambda) F_{b,n}(\Theta,\Theta_{\epsilon}) + h(\lambda).
        \end{aligned}
    \end{align}		
    Maximizing (\ref{SecResult1}) over $\pi(\phi)$, combined with simple algebraic manipulations gives us the following:
    \begin{align}
        \begin{aligned}
        \nonumber
            F_{b,n}(\Theta,\Phi) \leq F_{b,n}(\Theta,\Theta_{\epsilon}) + \underbrace{ \frac{h(\lambda)}{1-\lambda} }_{A(\lambda)}.
        \end{aligned}
    \end{align}
    By taking $\epsilon = \epsilon_n \gg \tau_n$, we get $\lambda = \lambda_n \to 0$, which leads to $A(\lambda_n) \to 0$. 
    Furthermore, since $\Theta \subseteq \Theta_{\epsilon}$ for any $\epsilon \geq 0$, then clearly $F_{b,n}(\Theta,\Theta_{\epsilon}) \leq C_{c,n}(\Theta_{\epsilon})$.		
    Hence, we get:
    \begin{align}
        \begin{aligned}
        \nonumber
            C_{c,n}(\Theta) \leq F_{b,n}(\Theta,\Phi) \leq C_{c,n}(\Theta_{\epsilon_n}) + o(1)
        \end{aligned}
    \end{align}
    where the lower bound is given by the definition of the regret.
\end{IEEEproof}

%Proof of Corollary 2
\begin{IEEEproof}[\textbf{Proof of Corollary \ref{cor:thmArimotoBlahut}}]
    The regret can be written as follows:
    \begin{align}
        F_{b,n}(\Theta,\Phi) = \max_{\pi(\phi)}\left( \mathbb{E}_{\pi(\phi)}\left\{ D(P_\phi \| Q_{\pi}) -  D(P_\phi \| \Theta) \right\} \right).
    \end{align}
    Let us denote by $\pi^*(\phi)$ the maximizer prior distribution of the regret. Therefore, $\forall \pi(\phi) \neq \pi^*(\phi)$ we get the following lower bound:
    \begin{align}
        \begin{aligned}                    
        F_{b,n}(\Theta,\Phi) & = \mathbb{E}_{\pi^*(\phi)}\left\{ D(P_\phi \| Q_{\pi^*}) -  D(P_\phi \| \Theta) \right\} 
        \\& \geq \mathbb{E}_{\pi(\phi)}\left\{ D(P_\phi \| Q_{\pi}) -  D(P_\phi \| \Theta) \right\}.
        \end{aligned}         
    \end{align}                    
    On the other hand, by definition, we get the following upper bound:
    \begin{align}
        \begin{aligned}                    
        F_{b,n}(\Theta,\Phi) & = \min_{Q} \max_{P_{\phi} \in \Phi} \max_{P_{\theta} \in \Theta} R_n(\theta,\phi,Q) 
        \\& \leq \max_{P_{\phi} \in \Phi} \max_{P_{\theta} \in \Theta} R_n(\theta,\phi,Q_{\pi}) 
        \\& = \max_{P_{\phi}}\left( D(P_\phi \| Q_{\pi}) -  D(P_\phi \| \Theta) \right).
        \end{aligned}         
    \end{align}
\end{IEEEproof}

%Proof of Thorem 8
\begin{IEEEproof}[\textbf{Proof of Theorem \ref{thm:supervised_batch_learning}}]
    The minimax regret in this setting is given by the following:
    \begin{align}
        \begin{aligned}
            &R_{b,n}^*(\Theta,\Phi) = \min_{Q}\max_{P_\phi \in \Phi}\max_{P_\theta \in \Theta}  R(\theta,\phi,Q,P)             
            \\&= \min_{Q} \max_{P_\phi \in \Phi} \Bigg( \sum_{x^n} \sum_{y^n} {P(x^n)}{P_{\phi}(y^n|x^n)} \cdot 
            \\& \log\frac{P_{\phi}(y_n|x^n,y^{n-1})}{Q(y_n|x^n,y^{n-1})} - \min_{P_\theta \in \Theta}D(P_\phi \| P_\theta) \Bigg)            
            \\&\equiv \min_{Q}\max_{\pi(\phi)}R(\pi(\phi),Q)
        \end{aligned}
    \end{align}
    where
    \begin{align}
        \begin{aligned}
            &R(\pi(\phi),Q) \equiv \int \sum_{x^n} \sum_{y^n} \pi(\phi){P(x^n)}{P_{\phi}(y^n|x^n)} \cdot 
            \\& \log\frac{P_{\phi}(y_n|x^n,y^{n-1})}{Q(y_n|x^n,y^{n-1})}\,d\phi
            - \mathbb{E}_{\pi(\phi)}\left\{D_(P_\phi \| \Theta)\right\}.
        \end{aligned}
    \end{align}
    
    Using Sion's minimax theorem, see \cite{sion1958general}, we get:
    \begin{align}
        \begin{aligned}
            R_{b,n}^*(\Theta,\Phi) &= \min_{Q}\max_{\pi(\phi)}R(\pi(\phi),Q) 
            \\&= \max_{\pi(\phi)}\min_{Q}R(\pi(\phi),Q).
        \end{aligned}
    \end{align}        
    Defining the following Lagrangian:    
    \begin{align}
        \begin{aligned}
            L = R(\pi(\phi),Q) + \sum_{x^n}\sum_{y^{n-1}}\lambda_{x^n,y^{n-1}}\sum_{y_n}Q(y_n | x^n, y^{n-1})
        \end{aligned}
    \end{align}
    and zeroing the derivative of $L$ w.r.t $Q$ we get:    
    \begin{align}
        \begin{aligned}
            \frac{\partial L}{\partial Q} &= -\frac{1}{Q(y_n | x^n, y^{n-1})} \cdot 
            \\&\int_{\phi}\pi(\phi)P(x^n)P_\phi(y^n|x^n)\,d\phi + \lambda_{x^n,y^{n-1}} = 0.
        \end{aligned}
    \end{align}
    By the constraint $\sum_{y_n}Q(y_n | x^n, y^{n-1}) = 1$ we get:
    \begin{align}
        \begin{aligned}
            \lambda_{x^n,y^{n-1}} = \int_{\phi}{\pi(\phi)P(x^n)P_\phi(y^{n-1}|x^n)}\,d\phi.
        \end{aligned}
    \end{align}    
    Combining all the above gives us:
    \begin{align}
        \begin{aligned}
            {Q(y_n | x^n, y^{n-1})} &= \frac{\int_{\phi}{\pi(\phi)P_\phi(y^n|x^n) \,d\phi}}{\int_{\phi}{\pi(\phi)P_\phi(y^{n-1}|x^n) \,d\phi}}.
        \end{aligned}
    \end{align}    
    Let us define the joint probability distribution function of $\Phi$, $X^n$ and $Y^n$ by:
    \begin{align}
        \begin{aligned}
            \tilde{P} = P(\phi,x^n,y^n) = \pi(\phi)P(x^n)P_\phi(y^n|x^n),
        \end{aligned}
    \end{align}    
    then
    \begin{align}
        \begin{aligned}
            R(\pi(\phi),Q) &= \mathbb{E}_{\tilde{P}}\left\{ \log\frac{P_\phi(Y_n|X^n,Y^{n-1})}{Q(Y_n|X^n,Y^{n-1})} \right\} 
            \\&- \mathbb{E}_{\pi(\phi)}\left\{ D( P_\phi \| \Theta ) \right\}        
        \end{aligned}
    \end{align}
    and it can be verified that
    \begin{align}
        \mathbb{E}_{\tilde{P}}\left\{ \log\frac{P_\phi(Y_n|X^n,Y^{n-1})}{Q(Y_n|X^n,Y^{n-1})} \right\} = I(Y_n;\Phi|X^n,Y^{n-1}).
    \end{align}
    Therefore,
    \begin{align}
        \begin{aligned}
            R_{b,n}^*(\Theta,\Phi) &= \max_{\pi(\phi)}\big( I(Y_n;\Phi|X^n,Y^{n-1})
            \\&- E_{\pi(\phi)}\left\{ D_{c}( P_\phi \| \Theta ) \right\} \big),
        \end{aligned}
    \end{align}    
    which completes the proof.
\end{IEEEproof}

\subsection{Misspecified Combined Batch and Online Proofs}
\label{subsec:appendix_misspecified_combined_setting}

%Proof of Theorem 10
\begin{IEEEproof}[\textbf{Proof of Theorem \ref{thm:thmCombinedSettingBounds}}]
    The proof begins by applying the chain rule to the universal distribution defined in (\ref{CombinedUniversalDistribution}), yielding the following decomposition:
    \begin{align}
        \begin{aligned}
            Q_\pi(y^l | y^n) &= \Pi_{t=1}^l Q_\pi(y_{n+t} | y^{n+t-1}) 
            \\&= \Pi_{t=1}^l \frac{\int{\pi(\phi)P_{\phi}(y^{n+t})d\phi}}{\int{\pi(\phi)P_{\phi}(y^{n+t-1})d\phi}}.
        \end{aligned}
    \end{align}
    The misspecified combined batch and online minimax regret is given by the following:
    \begin{align}
        \begin{aligned}
            F_{n,l}(\Theta,\Phi) &= \max_{\pi(\phi)} \frac{1}{l} \int \pi(\phi) \sum_{y^{n+l}} P_\phi(y^{n+l}) \cdot 
            \\& \log \frac{P_{\theta^*}(y^l | y^n)}{Q_\pi(y^l | y^n)} \, d\phi.
        \end{aligned}
    \end{align}
    
    Moreover, under the assumption of the theorem that the hypotheses generate i.i.d. data, we have 
    \[P_\theta(y^l | y^n) = \prod_{t=1}^l P_\theta(y_{n+t}).\]
    Combining this with the preceding steps yields the following:
    \begin{align}
        \begin{aligned}
        \nonumber
            F_{n,l}(\Theta,\Phi) &\leq \frac{1}{l} \sum_{t=1}^l \max_{\pi(\phi)} \int \pi(\phi) \sum_{y^{n+t}} P_\phi(y^{n+t}) \cdot
            \\& \log \frac{P_{\theta^*}(y_{n+t})}{Q_\pi(y_{n+t} | y^{n+t-1})} d\phi,
        \end{aligned}
    \end{align}
    where the inequality follows from the fact that the sum of maximized terms is always no smaller than the maximization of their sum. Therefore,
    \begin{align}
        \begin{aligned}
        \nonumber
            F_{n,l}(\Theta,\Phi) &\leq \frac{1}{l} \sum_{t=1}^l F_{b,n+t}(\Theta,\Phi 
            \\& \leq \frac{1}{l}\sum_{t=1}^lC_{c,n+t}\left( \Theta_{\epsilon_{n+t}} \right) + o(1),
        \end{aligned}
    \end{align}
    where the first inequality follows directly from the definition of the misspecified batch minimax regret, and the final inequality follows from Theorem~\ref{Thm2} under the theorem's assumptions.
    
    The lower bound on the misspecified combined batch and online minimax regret follows directly from the definition: by taking $\Phi \equiv \Theta$, we obtain the combined batch and online minimax regret of the well-specified stochastic setting.
\end{IEEEproof}

\subsection{Constrained Misspecified Setting Proofs}
\label{subsec:appendix_constrained_setting}

%Proof of Theorem 11
\begin{IEEEproof}[\textbf{Proof of Theorem \ref{thm:thmOnlineConstrainedRegret}}]
By the translation of the the maximization over the set $\Phi$ in (\ref{eq:minmaxRegretDefinition_constrained}) into a mixture maximization over the prior probability distribution $\pi(\phi)$ and by using the mixture distribution definition $Q_{\pi(\phi)} \equiv \int{\pi(\phi)}P_\phi(y^n)d\phi$, 
it can be easily verified that the minimax regret is given by the following:
\begin{align}
    \begin{aligned}
    \nonumber
        R_{o,n}^*(\Theta,\Phi) &= \min_{Q_{\pi_0(\theta)}}\max_{\pi(\phi)} 
        \int{\pi(\phi)R_n\left( P_\phi, P_{\theta^*}, Q_{\pi_0(\theta)} \right)}d\phi
        \\&= \min_{Q_{\pi_0(\theta)}}\max_{\pi(\phi)} 
        \Big( I\left(Y^n;\Phi\right) - E_{\pi(\phi)} \{ D\left(P_\phi \| \Theta\right) \}  
        \\&+  D\left(Q_{\pi(\phi)} \| Q_{\pi_0(\theta)}\right) \Big)
        \\&\equiv \min_{Q_{\pi_0(\theta)}}\max_{\pi(\phi)} R_n\left(\pi(\phi),Q_{\pi_0(\theta)}\right).
    \end{aligned}
\end{align}                   
Since the term $D\left(Q_{\pi(\phi)} \| Q_{\pi_0(\theta)}\right)$ is proportional to $-\log(Q_{\pi_0(\theta)})$ and the other terms are not a function of $Q_{\pi_0(\theta)}$, $R_n\left(\pi(\phi),Q_{\pi_0(\theta)}\right)$ is a convex function w.r.t $Q_{\pi_0(\theta)}$. 
  Moreover, all the three terms are concave functions w.r.t $\pi(\phi)$, due to     
their linearity in $\pi(\phi)$. Thus, $R_n\left(\pi(\phi),Q_{\pi_0(\theta)}\right)$ is also a concave      function w.r.t $\pi(\phi)$.
  Hence, according to Sion's minimax theorem \cite{sion1958general} for convex-concave         
functions, the minimax problem can be translated into a max-min problem as follows:            
\begin{align}
    \begin{aligned}
    \nonumber
        R_{o,n}^*(\Theta,\Phi) &= \max_{\pi(\phi)}\min_{Q_{\pi_0(\theta)}} 
        \Big( I\left(Y^n;\Phi\right) - E_{\pi(\phi)} \{ D\left(P_\phi \| \Theta\right) \}  
        \\&+  D\left(Q_{\pi(\phi)} \| Q_{\pi_0(\theta)}\right) \Big)
        \\&= \max_{\pi(\phi)} 
        \Big( I\left(Y^n;\Phi\right) - E_{\pi(\phi)} \left\{  D\left(P_\phi \| \Theta\right) \right\}
        \\&+  \min_{Q_{\pi_0(\theta)}}D\left(Q_{\pi(\phi)} \| Q_{\pi_0(\theta)}\right) \Big).
    \end{aligned}
\end{align}
\end{IEEEproof}

%Proof of Theorem 13
\begin{IEEEproof}[\textbf{Proof of Theorem \ref{thm:main_result}}]
    By applying the minimax theorem \cite{sion1958general} to interchange the order of minimization and maximization the constrained misspecified regret can be expressed by:
    \begin{align}
        \begin{aligned}
        \nonumber
        R_{o,n}^*(\Theta, \Phi) 
        &= \max_{P_\phi \in \Phi} \left(
        \min_{\pi(\theta)} \mathbb{E}_{P_\phi} \bigg\{ \log \frac{P_{\hat{\theta}}(y^n)}{Q(y^n)} \right\} 
        \\&+ \mathbb{E}_{P_\phi} \left\{ \log \frac{P_{\theta^*}(y^n)}{P_{\hat{\theta}}(y^n)} \right\} 
        \bigg).
        \end{aligned}
    \end{align}

    %\vspace*{10pt}
    To approximate the first term, we assume that the conditions required for the Laplace method hold for smooth parametric models within the interior of \( \Theta \). Under these assumptions, the following approximation is obtained:
    \begin{align}
        \begin{aligned}
        \nonumber
        \log \frac{P_{\hat{\theta}}(y^n)}{Q(y^n)} 
        = \frac{d}{2} \log \frac{n}{2\pi}
        + \log \frac{|\hat{I}(\hat{\theta})|^{\frac{1}{2}}}{\pi(\hat{\theta})} + o(1).
        \end{aligned}
    \end{align}        
    This expression is minimized asymptotically, independently of $P_\phi$, by choosing the Jeffreys prior $\pi(\theta) \propto |I(\theta)|^{\frac{1}{2}}$. Using this choice and the properties established in \cite{White1982}, the expectation under \( P_\phi \) converges to:
    \begin{align}
        \begin{aligned}
        \nonumber
            \mathbb{E}_{P_\phi} \left\{ \log \frac{|\hat{I}(\hat{\theta})|^{\frac{1}{2}}}{|I(\hat{\theta})|^{\frac{1}{2}}}  \right\} = \frac{1}{2}\log{\frac{|J_\phi(\theta^*)|}{|I(\theta^*)|}} + o(1).
        \end{aligned}
    \end{align}
    
    Using the Godambe Information Approximation \cite{Godambe1960,White1982,vdV1998}, we have: 
    \begin{align}
        \begin{aligned}
        \nonumber
            \mathbb{E}_{P_\phi} \left\{ \log \frac{P_{\theta^*}(y^n)}{P_{\hat{\theta}}(y^n)} \right\} 
            = -\frac{1}{2}\mathrm{trace}(K_\phi(\theta^*)J_\phi(\theta^*)^{-1}) + o(1).
        \end{aligned}
    \end{align}
    Note that this expression is in natural units, and multiplication by the factor \( \log e \) makes it invariant to the choice of logarithmic base.
   
    Combining these results yields the stated asymptotic characterization.
\end{IEEEproof}

%Proof of Theorem 14
\begin{IEEEproof}[\textbf{Proof of Theorem \ref{thm:exponentialFamilyAsymptotic}}]
    In exponential families, the following identities hold:
    \[
    J_\phi(\theta) = I(\theta) = \nabla^2_\theta \psi(\theta), 
    \qquad
    K_\phi(\theta) = \operatorname{Cov}_{P_\phi}(T(X)).
    \]
    Thus, (\ref{eq:constrainedSmoothParametricAsymptotic1}) simplifies to:
    \begin{align}
        \begin{aligned}
        \nonumber
        &R_{o,n}^*(\Theta, \Phi) 
        = \frac{d}{2} \log \frac{n}{2\pi} + \log \int_\Theta \left| \nabla^2_\theta \psi(\theta^*) \right|^{\frac{1}{2}} \, d\theta
        \\&-\frac{1}{2}\min_{P_\phi \in \Phi} \left( \mathrm{trace}(\operatorname{Cov}_{P_\phi}(T(X)) \nabla^2_\theta \psi(\theta^*)^{-1}) \right)\log{e} + o(1),
        \end{aligned}
    \end{align}
    and, according to Theorem~\ref{thm:main_result}, the optimal prior is given by \(\pi(\theta) \propto |\nabla^2_\theta \psi(\theta)|^{\frac{1}{2}}\).
\end{IEEEproof}

%Proof of Thorem 15
\begin{IEEEproof}[\textbf{Proof of Theorem \ref{thm:constrainedMisspecifiedBatch}}]
    The constrained universal predictor in the batch setting is defined as
    \begin{align}
        \nonumber
        Q(y_{n+1} | y^n) \equiv \frac{q_{n+1}(y^{n+1})}{q_n(y^n)},
    \end{align}
    where
    \begin{align}
        \nonumber
        q_n(y^n) = \int_{\Theta} \pi(\theta) P_\theta(y^n)\, d\theta.
    \end{align}
    
    Under the theorem conditions, Laplace's method yields    
    \begin{align}
        \begin{aligned}
            \nonumber
            &\log Q(y_{n+1} | y^n)
            = \log q_{n+1}(y^{n+1}) - \log q_n(y^n)
            \\&= l_{n+1}(\hat{\theta}_{n+1}) - l_n(\hat{\theta}_n)
            - \frac{d}{2}\log\!\left(1+\frac{1}{n}\right)
            \\&\quad + \log \frac{\pi(\hat{\theta}_{n+1})}{\pi(\hat{\theta}_n)}
            - \frac{1}{2}\log \frac{|\hat{I}_{n+1}(\hat{\theta}_{n+1})|}{|\hat{I}_n(\hat{\theta}_n)|}
            + o(n^{-2}).
        \end{aligned}
    \end{align}
    where
    \[
        l_n(\theta) \equiv \sum_{t=1}^n \log P_\theta(y_t),
        \quad
        \hat{I}_n(\theta) = -\frac{1}{n} \nabla_\theta^2 l_n(\theta).
    \]
    
    Let $\theta^*$ denote the KL projection of $P_\phi$ into the hypotheses set $\Theta$. A Taylor expansion gives
    \begin{align}
        \begin{aligned}
            \nonumber
            l_{n+1}(\hat{\theta}_{n+1})
            &= l_{n+1}(\theta^*)
            + \nabla_\theta l_{n+1}(\theta^*)^\top (\hat{\theta}_{n+1} - \theta^*) \\
            &\quad - \frac{n+1}{2} (\hat{\theta}_{n+1} - \theta^*)^\top 
            \hat{I}_{n+1}(\theta^*)(\hat{\theta}_{n+1} - \theta^*),
        \end{aligned}
    \end{align}
    with
    \begin{align}
        \nonumber
        l_{n+1}(\theta^*) = l_n(\theta^*) + \log P_{\theta^*}(y_{n+1}).
    \end{align}
    Similarly,
    \begin{align}
        \begin{aligned}
            \nonumber
            l_n(\hat{\theta}_n)
            &= l_n(\theta^*)
            + \nabla_\theta l_n(\theta^*)^\top (\hat{\theta}_n - \theta^*) \\
            &\quad - \frac{n}{2} (\hat{\theta}_n - \theta^*)^\top 
            \hat{I}_n(\theta^*)(\hat{\theta}_n - \theta^*).
        \end{aligned}
    \end{align}
    
    Define
    \begin{align}
        \nonumber
        \Delta_n \equiv l_{n+1}(\hat{\theta}_{n+1}) - l_n(\hat{\theta}_n)
        - \log P_{\theta^*}(y_{n+1}).
    \end{align}
    Then,
    \begin{align}
        \begin{aligned}
            \nonumber
            \log Q(y_{n+1} | y^n)
            &= \log P_{\theta^*}(y_{n+1})
            + \Delta_n
            - \frac{d}{2}\log\!\left(1+\frac{1}{n}\right) \\
            &\quad + A_{n+1}(\hat{\theta}_{n+1}) - A_n(\hat{\theta}_n)
            + o(n^{-2}),
        \end{aligned}
    \end{align}
    where
    \begin{align}
        \nonumber
        A_n(\theta) \equiv \log \frac{\pi(\theta)}{|\hat{I}_n(\theta)|^{1/2}}.
    \end{align}
    Under misspecification \cite{White1982,Godambe1960},
    \begin{align}
        \nonumber
        \hat{\theta}_n - \theta^* \;\overset{d}{\approx}\;
        \mathcal{N}\!\left(0,\, G(\theta^*)^{-1}/n\right),
    \end{align}
    where
    \begin{align}
        \nonumber
        G(\theta^*) = J_\phi(\theta^*)^\top K_\phi(\theta^*)^{-1} J_\phi(\theta^*).
    \end{align}
    
    Using standard quadratic-form identities, 
    \begin{align}
        \nonumber
        \mathbb{E}_{P_\phi}\!\{ ( \hat{\theta}_n - \theta^* )^\top \hat{I}_n(\theta^*) ( \hat{\theta}_n - \theta^* ) \} = \mathrm{trace}( K_\phi(\theta^*) J_\phi(\theta^*)^{-1})
    \end{align}
    we get
    \begin{align}
        \nonumber
        \mathbb{E}_{P_\phi}\{\Delta_n\} = 0 + o(n^{-2}).
    \end{align}
    In addition, a second-order expansion yields
    \begin{align}
        \begin{aligned}
            \nonumber
            A_n(\hat{\theta}_n)
            &= A_n(\theta^*)
            + \nabla_\theta A_n(\theta^*)^\top (\hat{\theta}_n - \theta^*) \\
            &\quad + \frac{1}{2}(\hat{\theta}_n - \theta^*)^\top
            \nabla_\theta^2 A_n(\theta^*)(\hat{\theta}_n - \theta^*).
        \end{aligned}
    \end{align}
        
    Taking expectations,
    \begin{align}
        \begin{aligned}
            \nonumber
            \mathbb{E}_{P_\phi}\{A_n(\hat{\theta}_n)\}
            &= A_n(\theta^*)
            + \frac{1}{2n} \operatorname{trace}\!\left(
            G(\theta^*)^{-1} \nabla_\theta^2 A(\theta^*)
            \right),
        \end{aligned}
    \end{align}
    where asymptotically
    \begin{align}
        \nonumber
        A(\theta^*) = \log \frac{\pi(\theta^*)}{|J_\phi(\theta^*)|^{1/2}}.
    \end{align}
    Similarly,
    \begin{align}
        \begin{aligned}
            \nonumber
            \mathbb{E}_{P_\phi}\{A_{n+1}(\hat{\theta}_{n+1})\}
            &= A(\theta^*)
            \\&+ \frac{1}{2(n+1)} \operatorname{trace}\!\left(
            G(\theta^*)^{-1} \nabla_\theta^2 A(\theta^*)
            \right).
        \end{aligned}
    \end{align}    
    
    Note that, to minimize regret, the prior is chosen as the Jeffreys prior,
    \begin{align}
        \nonumber
        \pi(\theta) \propto |I(\theta)|^{1/2}.
    \end{align}
    
    Combining all terms, the minimax regret is
    \begin{align}
        \begin{aligned}
            \nonumber
            &R_{b,n}^*(\Theta,\Phi) = \frac{d}{2}\log\left(1 + \frac{1}{n}\right) 
            \\&+ \frac{1}{4n^2}\max_{P_\phi\in\Phi}\mathrm{trace}\left(G(\theta^*)^{-1} \nabla^2_\theta\log\frac{|I(\theta^*)|}{|J_\phi(\theta^*)|}  \right) 
            + o(n^{-2}).          
        \end{aligned}
    \end{align}
\end{IEEEproof}

%Proof of Thorem 16
\begin{IEEEproof}[\textbf{Proof of Theorem \ref{thm:GLM}}]
    According to Theorem \ref{thm:exponentialFamilyAsymptotic}, for an exponential family of distributions, evaluating the minimax regret requires analyzing the relevant terms under the natural parametrization, where \( \eta = \Sigma^{-1}\theta \), $T(X) = X$, and $\psi(\eta) = \tfrac{1}{2}\eta^\top \Sigma \eta$, as follows:
    \begin{align}
        \begin{aligned}
        \label{eq:constrainedGLMstep1}
            \operatorname{Cov}_{P_\phi}(T(X)) = \Sigma_\phi, \quad 
            I(\eta^*) = \left. \nabla^2_{\eta} \psi(\eta) \right|_{\eta^* \equiv \Sigma^{-1}\theta^*} = \Sigma.
        \end{aligned}
    \end{align}            
    By substituting (\ref{eq:constrainedGLMstep1}) into (\ref{eq:constrainedExponentialFamilyAsymptotic1}) and noting that:
    \begin{align}
        \begin{aligned}
        \nonumber
            I(\theta) = \frac{\partial \eta}{\partial \theta}^\top \nabla^2_\eta \psi(\eta) \frac{\partial \eta}{\partial \theta} = \Sigma^{-1},
        \end{aligned}
    \end{align}
    we conclude that the optimal prior takes the form $\pi(\theta) \propto |I(\theta)|^{\frac{1}{2}} = |\Sigma|^{-\frac{1}{2}}$, and the minimax regret is given by (\ref{eq:GLMregret}).
\end{IEEEproof}

\bibliographystyle{IEEEbib}
\bibliography{references}

\vspace{11pt}

%\bf{If you will not include a photo:}\vspace{-33pt}
%\begin{IEEEbiographynophoto}{John Doe}
%Use $\backslash${\tt{begin\{IEEEbiographynophoto\}}} and the author name as the argument followed by the biography text.
%\end{IEEEbiographynophoto}

\vfill

\end{document}